\documentclass[11pt]{amsart}
\usepackage[margin=1in]{geometry}
\usepackage{amsmath,amssymb,amsthm,mathtools}
\usepackage{graphicx,booktabs}
\usepackage[numbers,sort&compress]{natbib}
\usepackage{xcolor}
\usepackage[colorlinks=true,linkcolor=blue!50!black,citecolor=blue!50!black,urlcolor=blue!50!black]{hyperref}
\usepackage{algorithm,algpseudocode}
\usepackage{enumitem}
\newtheorem{theorem}{Theorem}[section]
\newtheorem{proposition}[theorem]{Proposition}
\newtheorem{lemma}[theorem]{Lemma}
\newtheorem{corollary}[theorem]{Corollary}
\theoremstyle{definition}
\newtheorem{assumption}{Assumption}

\newtheorem{example}[theorem]{Example}
\theoremstyle{remark}
\newtheorem{remark}[theorem]{Remark}

\newcommand{\E}{\mathbb{E}}
\newcommand{\R}{\mathbb{R}}
\newcommand{\Prob}{\mathbb{P}}
\newcommand{\F}{\mathcal{F}}
\newcommand{\tr}{^{\top}}
\newcommand{\dd}{\,\mathrm{d}}
\newcommand{\db}{d_\beta}
\newcommand{\dg}{d_g}

\DeclareMathOperator{\diag}{diag}
\DeclareMathOperator{\dist}{dist}
\DeclareMathOperator{\conv}{conv}

\title[Efficient stochastic GMM]{An efficient stochastic generalized method of moments estimator for nonlinear moment models}
\author{D. Nguyen}
\address{Department of Mathematics, Marist University, 3399 North Road, Poughkeepsie, NY 12601, United States}
\email{nducduy@gmail.com}

\subjclass[2020]{62F12, 62L20, 62L12, 62P20}
\keywords{Generalized method of moments, stochastic approximation, stochastic Gauss--Newton algorithm, Polyak--Ruppert averaging, streaming data, overidentification test, instrumental variables}

\begin{document}

\begin{abstract}
We study a recursive (online) estimator for parameters defined by possibly nonlinear and overidentified moment restrictions $\E[g(X,\beta^*)]=0$: the nonlinear stochastic GMM algorithm proposed by Chen, Lee, Liao, Seo, Shin and Song, who did not establish its asymptotic properties. Each observation updates, at a cost that does not depend on the sample size, an estimate of the Jacobian of the moments, the inverse of an estimate of their covariance matrix and a Gauss--Newton preconditioner; for moments of instrumental-variable type, linear or not, we give an exact rank-two Sherman--Morrison--Woodbury update of the preconditioner. The parameter is moved by a preconditioned stochastic approximation step and the iterates are averaged. Under a preconditioned monotonicity condition, which holds globally in linear models and, in general, on a neighbourhood of a consistent pilot estimate, we prove almost sure convergence, almost sure rates and a functional central limit theorem with the efficient GMM variance, assuming four moments of the moment function and its Jacobian when the step exponent exceeds $2/3$. The inverse Gauss--Newton matrix that enters the preconditioner is a strongly consistent estimator of this variance, and a projected linearised statistic gives an online test of overidentifying restrictions. We extend the results to $L$-dependent data with long-run variance weights, under a stronger moment condition. In simulations, the averaged estimator is within a few percent of offline two-step GMM in accuracy at $10^6$ observations and has similar coverage; with up to $10^7$ observations, it uses a fraction of the computing time of offline GMM and a fixed amount of memory; and the online test has correct size and the same power as the offline test. In an application to the effect of a third child on weeks worked, with $254{,}654$ women from the 1980 US Census, single-pass estimates agree with offline GMM within a fraction of a standard error on average over orderings of the data, and the online and offline overidentification tests lead to the same conclusion in most orderings of the data.
\end{abstract}

\maketitle
\tableofcontents

%=====================================================================
\section{Introduction}
%=====================================================================

The generalized method of moments (GMM) of \citet{Hansen1982} estimates a parameter $\beta^*\in\R^{\db}$ from moment restrictions
\begin{equation}\label{eq:moment}
\E\big[g(X,\beta^*)\big]=0,\qquad g:\mathcal X\times\R^{\db}\to\R^{\dg},\quad \dg\ge\db,
\end{equation}
without specifying the distribution of the data $X$. It is a basic tool in econometrics and statistics: instrumental-variable (IV) regression, Euler equations of consumption-based asset pricing models \cite{HansenSingleton1982}, models with endogenous regressors and count data \cite{Mullahy1997,WindmeijerSantosSilva1997} and moment-based estimation of stochastic volatility models \cite{AndersenSorensen1996} are all of the form \eqref{eq:moment}. The efficient two-step GMM estimator minimises $\bar g_n(\beta)\tr W_n\bar g_n(\beta)$, where $\bar g_n$ is the sample average of the moments and $W_n$ estimates the inverse of their covariance matrix \cite{Hansen1982,NeweyMcFadden1994,Hayashi2000}. For nonlinear moments this requires an iterative optimization in which every iteration passes over the whole sample, and the whole sample must be kept in memory. With very large or streaming data sets, this is the bottleneck.

\citet{ChenEtAl2025} proposed a stochastic approximation alternative, called SGMM, which processes the observations one at a time. For linear IV models they established almost sure convergence of the online two-stage least squares (2SLS) estimator, asymptotic normality and functional central limit theorems for online 2SLS and for the efficient SGMM, and online versions of the Durbin--Wu--Hausman and Hansen--Sargan tests. They also described a nonlinear version of SGMM, in which the Jacobian and the weight matrix are updated at the averaged iterate, but \emph{left its asymptotic properties to future research}. The purpose of this paper is to provide this theory.

We consider the nonlinear SGMM of \citet{ChenEtAl2025}, with a step-size offset added for numerical stability, and a projection onto a compact set and a truncation of the preconditioner added for the analysis (Section~\ref{sec:algo}). After an initialization on a small pilot sample, the $i$th observation $X_i$ updates the running average $\Phi_i$ of the Jacobians $\partial_\beta g(X_j,\bar\beta_{j-1})$, the inverse $W_i$ of the running average of $g(X_j,\bar\beta_{j-1})g(X_j,\bar\beta_{j-1})\tr$ by a Sherman--Morrison formula, and the inverse $H_i=(\Phi_i\tr W_i\Phi_i)^{-1}$ of the Gauss--Newton matrix; the parameter is updated by
\[
\beta_i=\beta_{i-1}-\gamma_i H_{i-1}\Phi_{i-1}\tr W_{i-1}\,g(X_i,\beta_{i-1}),
\]
and the estimator is the average $\bar\beta_n$ of the iterates \cite{Ruppert1988,PolyakJuditsky1992}. The step $H_{i-1}\Phi_{i-1}\tr W_{i-1}g$ is a stochastic Gauss--Newton step for the GMM objective, in the spirit of the stochastic Newton and Gauss--Newton algorithms of \citet{BercuGodichonPortier2020} and \citet{CenacGodichonPortier2025}, here driven by an estimating equation rather than by the gradient of a loss.

\paragraph{Our paper makes the following contributions to the literature.}
\begin{enumerate}[leftmargin=*,itemsep=2pt]
\item \emph{Recursive inversion.} For moment functions of IV type, $g(x,\beta)=z\,r(x,\beta)$ with instruments $z$ and a scalar residual $r$, the Jacobian has rank one and the efficient weight is updated with a multiple of $z$; we show that $H_i$ is then obtained from $H_{i-1}$ by an exact rank-two Woodbury update, which costs $O(\dg^2+\dg\db+\db^2)$ operations per observation (Lemma~\ref{lem:woodbury}). It contains the formula of \citet{ChenEtAl2025} for linear 2SLS, for which we give a complete proof.
\item \emph{Almost sure convergence in the nonlinear case.} Under a preconditioned monotonicity condition (Assumption~\ref{ass:mono}), which holds globally for linear models and on every sufficiently small neighbourhood of $\beta^*$ for smooth nonlinear models (Proposition~\ref{prop:local}), the iterates and their average converge almost surely, with a continuously updated weight matrix (Theorem~\ref{thm:as}). With the parameter set chosen as a ball around a pilot estimate, convergence holds with probability tending to one as the pilot sample grows, without knowledge of $\beta^*$ (Corollary~\ref{cor:datadriven}).
\item \emph{Rates and functional central limit theorem under four moments.} We give almost sure rates for the iterates (Theorem~\ref{thm:rate}) and a functional central limit theorem for the partial sums of the iterates, with the efficient GMM variance $(G\tr\Omega^{-1}G)^{-1}$ (Theorem~\ref{thm:fclt}). A blocking argument keeps the moment condition at four moments of the moment function and its Jacobian for step exponents $a>2/3$.
\item \emph{Online inference.} The recursively updated matrix $H_n$ converges almost surely to the asymptotic variance, so plug-in confidence intervals cost nothing extra; random-scaling intervals \cite{LeeLiaoSeoShin2022} are also justified (Corollary~\ref{cor:inference}). A projected, linearised and recursively computable version of the Hansen--Sargan statistic has the usual $\chi^2_{\dg-\db}$ limit (Theorem~\ref{thm:J}); the projection makes it insensitive to the distance between the averaged iterate and the minimiser of the GMM criterion.
\item \emph{Dependent data.} For $L$-dependent data, the same parameter step with long-run variance weights retains the convergence, rate, central limit and inference results with the efficient GMM variance for dependent data, under a stronger moment condition (more than eight moments for $a=3/4$; Theorem~\ref{thm:dependent}).
\item \emph{Evidence.} Simulations for linear and exponential IV models with $1000$ replications and up to $10^6$ observations; size and power of the online test; invariance to the units of measurement; timings of a compiled implementation up to $10^7$ observations; a stochastic volatility model with dependent data (Section~\ref{sec:numerics}); and an empirical application with $254{,}654$ observations (Section~\ref{sec:application}).
\end{enumerate}

\paragraph{Relation to \citet{ChenEtAl2025} and \citet{LeungChanShao2025}.} The two papers closest to ours are these. \citet{ChenEtAl2025} introduced SGMM and developed its theory for \emph{linear} IV models with \emph{i.i.d.\ data}: almost sure convergence of online 2SLS, functional central limit theorems for online 2SLS and for efficient SGMM, plug-in and random-scaling inference, and online Durbin--Wu--Hausman and Hansen--Sargan tests. Their Section~7 proposes the nonlinear algorithm studied here and leaves its asymptotic properties open. Relative to that paper, our contributions are as follows.
\begin{itemize}[leftmargin=*,itemsep=1pt]
\item \emph{Asymptotic theory for nonlinear SGMM.} We prove convergence, rates, a functional central limit theorem with the efficient variance, and inference for nonlinear SGMM. The Jacobian and the weight matrix are re-estimated at the averaged iterate, so the preconditioner depends on the whole path. In the linear SGMM of \citet{ChenEtAl2025}, the Jacobian does not depend on the parameter and the weight is computed at a fixed preliminary estimate. Handling this dependence requires the preconditioned monotonicity condition and the local analysis of Proposition~\ref{prop:local} and Corollary~\ref{cor:datadriven}.
\item \emph{Rank-two update.} We give a rank-two update of the Gauss--Newton matrix that covers nonlinear IV-type moments. For linear efficient SGMM, \citet{ChenEtAl2025} use a rank-three update.
\item \emph{Variance estimate.} We show that the inverse Gauss--Newton matrix $H_n$, which enters the preconditioner, is a strongly consistent variance estimator.
\item \emph{Projected overidentification statistic.} We propose a projected overidentification statistic. It removes the contribution of the distance between the averaged iterate and the minimiser of the GMM criterion, which can dominate the unprojected statistic in finite samples (Section~\ref{sec:application}).
\end{itemize}
\citet{LeungChanShao2025} study online GMM for stationary time series. Their estimator is updated explicitly batch by batch: nonlinear moments are handled by expanding the sample moments around estimates from earlier batches, and the weight matrix is an online long-run variance estimator. They establish efficiency under serial dependence and propose online Hansen--Sargan and structural stability tests. They also point out two weaknesses of SGMM in its original form: sensitivity to the initial learning rate, and size distortion of its online Hansen--Sargan test in finite samples. Our paper differs from theirs in three ways.
\begin{itemize}[leftmargin=*,itemsep=1pt]
\item \emph{Type of algorithm.} Our algorithm is a stochastic approximation that updates the parameter with every observation, and our \emph{dependent-data} result (Theorem~\ref{thm:dependent}) shows that this recursion keeps its efficiency with long-run variance weights. Our dependence assumption, $L$-dependence, is much more restrictive than theirs.
\item \emph{Learning rate.} The Gauss--Newton preconditioner makes the parameter step invariant to invertible linear reparametrizations of the parameter and of the moments, apart from the effect of the ridge and the projection (Remark~\ref{rem:design}(v)). Our default step rule then works without tuning across our examples, whereas a first-order version fails after a change of units (Section~\ref{sec:sens}).
\item \emph{Overidentification test.} Our projected online statistic has close to nominal size in our simulations (Section~\ref{sec:numJ}).
\end{itemize}
A direct numerical comparison with their estimator is left for future work.\\

\paragraph{\textit{Related literature.}} Stochastic approximation goes back to \citet{RobbinsMonro1951}; averaging of the iterates was introduced by \citet{Ruppert1988} and \citet{PolyakJuditsky1992}, and almost sure analyses are given by \citet{Duflo1997}, \citet{Pelletier1998} and \citet{KushnerYin2003}. Stochastic Newton-type algorithms with recursively inverted Hessian estimates were developed for logistic regression \cite{BercuGodichonPortier2020} and nonlinear regression \cite{CenacGodichonPortier2025}, and the use of Sherman--Morrison updates to avoid matrix inversions is also central to the natural-gradient variational Bayes method of \citet{GodichonNguyenTran2025}. Online inference based on the functional central limit theorem was proposed by \citet{LeeLiaoSeoShin2022}. All of these concern the minimization of an expected loss; GMM with overidentification is different, because the estimating function is a fixed linear combination of the moments with an unknown optimal weight, and the weight matrix and the Jacobian must be learned together with the parameter. \\

\paragraph{The rest of this paper is organized as follows:} Section~\ref{sec:model} introduces the model and examples. Section~\ref{sec:algo} presents the algorithm and the recursive inversion formulas. Section~\ref{sec:theory} states the asymptotic results for independent data and Section~\ref{sec:dependent} the extension to dependent data; the proofs are in Section~\ref{sec:proofs}. Section~\ref{sec:numerics} reports simulations, Section~\ref{sec:application} the empirical application, and Section~\ref{sec:conclusion} concludes.\\

\paragraph{\textit{Notation}.} $\|x\|$ is the Euclidean norm of a vector and $\|A\|$ the operator norm of a matrix; $\lambda_{\min}(A)$, $\lambda_{\max}(A)$ are the extreme eigenvalues of a symmetric matrix and $\sigma_{\min}(A)$ the smallest singular value; $A\preceq B$ means that $B-A$ is positive semi-definite; $I_d$ is the $d\times d$ identity. For a closed set $\mathcal S$ of matrices, $\dist(A,\mathcal S)=\min_{B\in\mathcal S}\|A-B\|$. Convergence in distribution is denoted by $\Rightarrow$.

%=====================================================================
\section{Model and examples}\label{sec:model}
%=====================================================================

Let $X,X_1,X_2,\dots$ be independent and identically distributed random elements of a measurable space $\mathcal X$ and $\Theta\subset\R^{\db}$. For $x\in\mathcal X$ and $\beta\in\Theta$ let $g(x,\beta)\in\R^{\dg}$ be the moment function and $G(x,\beta)=\partial g(x,\beta)/\partial\beta\tr\in\R^{\dg\times\db}$ its Jacobian. Write
\[
m(\beta)=\E[g(X,\beta)],\qquad \bar G(\beta)=\E[G(X,\beta)],\qquad \Omega(\beta)=\E\big[g(X,\beta)g(X,\beta)\tr\big],
\]
and $G^*=\bar G(\beta^*)$, $\Omega^*=\Omega(\beta^*)$. The parameter $\beta^*$ is the unique zero of $m$ in $\Theta$. For a positive definite weight matrix $W$, the GMM estimator minimises $\bar g_n(\beta)\tr W\bar g_n(\beta)$ with $\bar g_n(\beta)=n^{-1}\sum_{i=1}^ng(X_i,\beta)$, and under standard conditions \cite{Hansen1982,NeweyMcFadden1994}
\begin{equation}\label{eq:sandwich}
\sqrt n(\hat\beta_n-\beta^*)\Rightarrow N\big(0,\Sigma_W\big),\qquad \Sigma_W=(G^{*\top}WG^*)^{-1}G^{*\top}W\Omega^*WG^*(G^{*\top}WG^*)^{-1}.
\end{equation}
The variance is minimal for $W\propto\Omega^{*-1}$, where it equals $\Sigma^*=(G^{*\top}\Omega^{*-1}G^*)^{-1}$; the two-step estimator achieves it with a preliminary consistent estimate of $\Omega^*$.

\begin{example}[Linear IV regression]\label{ex:linear}
Let $X=(y,x,z)$ with $y=x\tr\beta^*+u$, $\E[zu]=0$, where some components of $x\in\R^{\db}$ are correlated with $u$ and $z\in\R^{\dg}$ is a vector of instruments. Then $g(X,\beta)=z(x\tr\beta-y)$, $G(X,\beta)=zx\tr$ does not depend on $\beta$, and $\bar G=\E[zx\tr]$ has full column rank under the usual relevance condition. With $W=\E[zz\tr]^{-1}$, \eqref{eq:sandwich} is the asymptotic variance of 2SLS; under heteroskedasticity the efficient GMM estimator is in general more efficient.
\end{example}

\begin{example}[Nonlinear IV regression]\label{ex:nonlinear}
More generally let $g(X,\beta)=z\,r(X,\beta)$ for a scalar residual $r$. Examples are the additive model $y=f(x,\beta^*)+u$ with $\E[zu]=0$, where $r=f(x,\beta)-y$, and the multiplicative exponential model for non-negative or count outcomes, $y=\exp(x\tr\beta^*)\eta$ with $\E[\eta\mid z]=1$ \cite{Mullahy1997,WindmeijerSantosSilva1997}, where
\[
r(X,\beta)=y\,e^{-x\tr\beta}-1,\qquad \partial_\beta r(X,\beta)=-y\,e^{-x\tr\beta}x .
\]
In both cases the Jacobian $G(X,\beta)=z\,\partial_\beta r(X,\beta)\tr$ has rank one, which is exploited in Lemma~\ref{lem:woodbury}. The linear model of Example~\ref{ex:linear} is the special case $r=x\tr\beta-y$.
\end{example}

\begin{example}[Stochastic volatility]\label{ex:sv}
Consider the log-normal stochastic volatility model \cite{Taylor2008}
\[
y_i=\sigma_iZ_i,\qquad \ln\sigma_i^2=\omega+b\ln\sigma_{i-1}^2+\sigma_u u_i,\qquad (Z_i,u_i)\tr\sim N(0,I_2)\ \text{i.i.d.},
\]
with $|b|<1$ and parameter $\beta^*=(\omega,b,\sigma_u)\tr$. The log-volatility $h_i=\ln\sigma_i^2$ is a stationary Gaussian AR(1) process with mean $\mu=\omega/(1-b)$ and variance $\tau^2=\sigma_u^2/(1-b^2)$, so that $\E[\sigma_i^r]=\exp(r\mu/2+r^2\tau^2/8)$ and, since $\operatorname{Cov}(h_i,h_{i-j})=b^j\tau^2$,
\begin{equation}\label{eq:svcross}
\E\big[\sigma_i^r\sigma_{i-j}^s\big]=\E[\sigma_i^r]\,\E[\sigma_i^s]\exp\Big(\frac{rs\,b^j\tau^2}{4}\Big).
\end{equation}
Following \citet{JacquierPolsonRossi1994} and \citet{AndersenSorensen1996} we use the $\dg=24$ moments
\[
\begin{gathered}
\E|y_i|=\sqrt{2/\pi}\,\E[\sigma_i],\quad \E[y_i^2]=\E[\sigma_i^2],\quad \E|y_i|^3=2\sqrt{2/\pi}\,\E[\sigma_i^3],\quad \E[y_i^4]=3\E[\sigma_i^4],\\
\E|y_iy_{i-j}|=\tfrac2\pi\E[\sigma_i\sigma_{i-j}],\qquad \E[y_i^2y_{i-j}^2]=\E[\sigma_i^2\sigma_{i-j}^2],\qquad j=1,\dots,10,
\end{gathered}
\]
that is, $g(X_i,\beta)=x_i-\mu_{\rm SV}(\beta)$ with $x_i=(|y_i|,y_i^2,|y_i|^3,y_i^4,|y_iy_{i-1}|,\dots,|y_iy_{i-10}|,y_i^2y_{i-1}^2,\dots,y_i^2y_{i-10}^2)\tr$ and $\mu_{\rm SV}(\beta)$ the vector of the right-hand sides computed with \eqref{eq:svcross}. Here the Jacobian $G(X_i,\beta)=-\partial\mu_{\rm SV}(\beta)/\partial\beta\tr$ does not depend on the data. The data $X_i=(y_i,\dots,y_{i-10})$ are stationary and geometrically mixing but not independent, and the efficient weight is the inverse of the long-run variance of the moments (Section~\ref{sec:dependent}); we use this example in Section~\ref{sec:sv}.
\end{example}

\begin{remark}
The exponent in \eqref{eq:svcross} follows from $\operatorname{Var}(rh_i/2+sh_{i-j}/2)=(r^2+s^2)\tau^2/4+rs\,b^j\tau^2/2$; for $r=s=2$ it gives $\E[\sigma_i^2\sigma_{i-j}^2]=\E[\sigma_i^2]^2\exp(b^j\tau^2)$. We verified all $24$ moment formulas against a simulated path of length $2\times10^6$.
\end{remark}

%=====================================================================
\section{The algorithm}\label{sec:algo}
%=====================================================================

\subsection{Stochastic GMM with Gauss--Newton preconditioning}
The first-order condition of the GMM problem with weight $W$ is $\bar G_n(\beta)\tr W\bar g_n(\beta)=0$. Its population version $\bar G(\beta)\tr Wm(\beta)=0$ suggests the stochastic approximation recursion $\beta_i=\beta_{i-1}-\gamma_iP\,g(X_i,\beta_{i-1})$ with a preconditioning matrix $P\in\R^{\db\times\dg}$. The choice
\begin{equation}\label{eq:P}
P(\Phi,W)=(\Phi\tr W\Phi)^{-1}\Phi\tr W,\qquad\text{so that}\quad P(\Phi,W)\Phi=I_{\db},
\end{equation}
with $\Phi\approx G^*$ and $W\approx\Omega^{*-1}$, makes the linearised mean field $\beta\mapsto P(G^*,W)m(\beta)\approx\beta-\beta^*$ near $\beta^*$: it is a stochastic Gauss--Newton step. Since $\Phi$ and $W$ are unknown, they are estimated recursively along the way.

Let $n_0$ pilot observations be available (in a streaming context, the first $n_0$ observations). Let $\Theta\subset\R^{\db}$ be a compact convex set with $\beta^*$ in its interior, $\Pi_\Theta$ the Euclidean projection onto $\Theta$, and $\eta_0>0$ a small ridge constant.

\begin{algorithm}[t]
\caption{Stochastic GMM with Gauss--Newton preconditioning (SGMM)}\label{alg:sgmm}
\begin{algorithmic}[1]
\Require pilot sample $X_{-n_0+1},\dots,X_0$; steps $\gamma_i=\gamma_0(i+i_0)^{-a}$; truncation levels $\bar p_i=c_Pi^{\nu}$
\State \textbf{Initialization} ($n_0\ge\db$): $\beta_0=\Pi_\Theta(\hat\beta_{\rm pilot})$, e.g.\ one-step GMM on the pilot sample; $\bar\beta_0=\beta_0$;
\Statex\hspace{\algorithmicindent} $\Phi_0=\frac1{n_0}\sum_{j\le0}G(X_j,\beta_0)$, \ $S_0=\frac1{n_0}\sum_{j\le0}g(X_j,\beta_0)g(X_j,\beta_0)\tr+\eta_0I_{\dg}$, \ $W_0=S_0^{-1}$,
\Statex\hspace{\algorithmicindent} $H_0=(\Phi_0\tr W_0\Phi_0)^{-1}$
\For{$i=1,2,\dots$}
\State $\hat P=H_{i-1}\Phi_{i-1}\tr W_{i-1}$; \ $P_{i-1}=\hat P\,\mathbf 1\{\Phi_{i-1}\tr W_{i-1}\Phi_{i-1}\text{ invertible},\ \|\hat P\|\le\bar p_i\}$ \Comment{truncation}
\State $\beta_i=\Pi_\Theta\big(\beta_{i-1}-\gamma_iP_{i-1}\,g(X_i,\beta_{i-1})\big)$ \Comment{preconditioned step}
\State $s_i=g(X_i,\bar\beta_{i-1})$, \ $\Phi_i=\Phi_{i-1}+\frac{1}{n_0+i}\big(G(X_i,\bar\beta_{i-1})-\Phi_{i-1}\big)$ \Comment{Jacobian}
\State $\ell_i=n_0+i-1+s_i\tr W_{i-1}s_i$, \ $W_i=\frac{n_0+i}{n_0+i-1}\Big(W_{i-1}-\frac1{\ell_i}W_{i-1}s_is_i\tr W_{i-1}\Big)$ \Comment{inverse covariance}
\State $H_i=(\Phi_i\tr W_i\Phi_i)^{-1}$ if invertible \Comment{by Lemma~\ref{lem:woodbury} for IV-type moments}
\State $\bar\beta_i=\bar\beta_{i-1}+\frac1i(\beta_i-\bar\beta_{i-1})$ \Comment{averaging}
\EndFor
\State \textbf{Output:} $\bar\beta_n$, variance estimate $H_n/n$, statistic $\mathcal J_n$ of \eqref{eq:Jn}
\end{algorithmic}
\end{algorithm}

\begin{remark}[Design choices]\label{rem:design}
(i) The parameter step uses the current iterate $\beta_{i-1}$, while the Jacobian and the weight matrix are evaluated at the averaged iterate $\bar\beta_{i-1}$, which is less noisy; both are $\F_{i-1}$-measurable, where $\F_i$ is generated by the pilot sample and $X_1,\dots,X_i$, so that every update is driven by a martingale difference. This is the convention of the nonlinear SGMM of \citet[Section~7]{ChenEtAl2025}; for their linear efficient SGMM they compute the weight at a fixed preliminary estimate instead.
(ii) By construction $\Phi_i=(n_0+i)^{-1}\big(n_0\Phi_0+\sum_{j=1}^iG(X_j,\bar\beta_{j-1})\big)$ and $W_i=S_i^{-1}$ with $S_i=(n_0+i)^{-1}\big(n_0S_0+\sum_{j=1}^is_js_j\tr\big)$; line~6 is the Sherman--Morrison formula for this inverse (Lemma~\ref{lem:woodbury}(a)).
(iii) The projection and the truncation are needed for the proofs; they are inactive after a finite random time (Theorem~\ref{thm:as}). If $\Phi_i\tr W_i\Phi_i$ is singular, which almost surely happens only finitely often (Lemma~\ref{lem:precond}), the step is skipped and $H_i$ is computed directly once the matrix becomes invertible. In our simulations of independent data the truncation and the projection onto a large box were never active for SGMM with the default steps of (v) below (they were for the first-order version in rescaled units; Section~\ref{sec:sens}).
(iv) Replacing $s_i$ by the instrument vector $z_i$ in line~6 yields $W_i\to\E[zz\tr]^{-1}$ and the online 2SLS estimator of \citet{ChenEtAl2025} in the linear case (Remark~\ref{rem:2sls}).
(v) \emph{Default steps.} We use $a=3/4$, $i_0=n_0$ and $\gamma_0=c_\gamma n_0^{a-1}$, i.e.\ $\gamma_i=c_\gamma n_0^{a-1}(n_0+i)^{-a}$ with $c_\gamma=2$. The first step is then $\gamma_1\approx c_\gamma/n_0$, the weight of one additional observation in a recursive estimate based on the $n_0$ pilot observations, and the steps continue to decrease as $(n_0+i)^{-a}$. This rule needs no tuning to the scale of the problem, because the Gauss--Newton step is invariant to invertible linear reparametrizations of $\beta$ and of the moments, apart from the effect of the ridge $\eta_0$ and of the projection (Section~\ref{sec:sens}). It matters most when some parameters are weakly identified, so that single observations produce large Gauss--Newton steps: in the application of Section~\ref{sec:application}, larger first steps ($i_0=100$, $\gamma_0=1$) left the final estimates more than three standard errors away from the offline estimates in all $20$ orderings of the data. The offset $i_0=n_0$ also damps the first steps, in which a single observation would otherwise move the iterate by almost a full Gauss--Newton step.
\end{remark}

\subsection{Recursive inversion}
Each iteration of Algorithm~\ref{alg:sgmm} requires $H_i=(\Phi_i\tr W_i\Phi_i)^{-1}$. Computing $\Phi_i\tr W_i\Phi_i$ directly costs $O(\dg^2\db+\dg\db^2)$ operations, and inverting it $O(\db^3)$. For IV-type moments the following lemma updates $H_i$ by inverting a $2\times2$ matrix only.

\begin{lemma}[Recursive inversion]\label{lem:woodbury}
Fix $i\ge1$, write $N=n_0+i-1$, $\Phi=\Phi_{i-1}$, $W=W_{i-1}$, $H=H_{i-1}$, $s=s_i$, and suppose that $\Phi\tr W\Phi$ and $\Phi_i\tr W_i\Phi_i$ are invertible.
\begin{enumerate}[label=(\alph*),itemsep=2pt]
\item With $\ell=N+s\tr Ws$, the inverse of $S_i=(NS_{i-1}+ss\tr)/(N+1)$ is $W_i=\frac{N+1}{N}\big(W-\frac1\ell Wss\tr W\big)$.
\item Suppose that $G(X_i,\bar\beta_{i-1})=z\,\psi\tr$ and $s=\varrho z$ with $z\in\R^{\dg}$, $\psi\in\R^{\db}$ and $\varrho\in\R$. For IV-type moments $g=z\,r$ this holds with $z=z_i$, $\psi=\partial_\beta r(X_i,\bar\beta_{i-1})$ and $\varrho=r(X_i,\bar\beta_{i-1})$; for 2SLS weights (Remark~\ref{rem:design}(iv)) it holds with $\varrho=1$. Let
\[
u=\Phi\tr Wz,\qquad c_z=z\tr Wz,\qquad \ell=N+\varrho^2c_z,\qquad U=[\,u,\ \psi\,]\in\R^{\db\times2},\qquad \mathcal C=\frac1\ell\begin{pmatrix}-N\varrho^2&N\\N&c_z\end{pmatrix}.
\]
Then $I_2+N^{-1}\mathcal CU\tr HU$ is invertible and
\begin{equation}\label{eq:woodbury3}
H_i=\frac{N+1}{N}\Big(H-\frac1N\,HU\big(I_2+N^{-1}\mathcal CU\tr HU\big)^{-1}\mathcal CU\tr H\Big).
\end{equation}
\item (Linear 2SLS.) If $\varrho=1$, then, with $U_2=[\,\psi-u,\ \psi\,]$ and $D=\diag(-\ell,N)$,
\begin{equation}\label{eq:woodbury2}
H_i=\frac{N+1}{N}\Big(H-HU_2\big(D+U_2\tr HU_2\big)^{-1}U_2\tr H\Big).
\end{equation}
\end{enumerate}
\end{lemma}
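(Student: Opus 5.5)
The plan is to check each formula by direct algebra. Parts (a) and (c) reduce to the Sherman--Morrison--Woodbury identity. For (b), we first show that the new Gauss--Newton matrix is a rank-two perturbation of the old one, with exactly the coefficient matrix $\mathcal C$.

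\emph{(a)} We write $S_i=\frac{N}{N+1}\big(S_{i-1}+N^{-1}ss\tr\big)$, so that $W_i=\frac{N+1}{N}\big(S_{i-1}+N^{-1}ss\tr\big)^{-1}$. The Sherman--Morrison formula, with $W=S_{i-1}^{-1}$, then gives $\big(S_{i-1}+N^{-1}ss\tr\big)^{-1}=W-\frac{Wss\tr W}{N+s\tr Ws}$. The denominator is $\ell>0$ because $W$ is positive definite and $N\ge n_0\ge1$. Here $S_{i-1}$ is positive definite thanks to the ridge $\eta_0$ in $S_0$ (Remark~\ref{rem:design}(ii)).

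\emph{(b)} Set $M=\Phi\tr W\Phi=H^{-1}$. By Remark~\ref{rem:design}(ii) and the hypothesis $G(X_i,\bar\beta_{i-1})=z\psi\tr$, we have $\Phi_i=(N\Phi+z\psi\tr)/(N+1)$. By (a) with $s=\varrho z$, we have $W_i=\frac{N+1}{N}\big(W-\varrho^2\ell^{-1}Wzz\tr W\big)$ with $\ell=N+\varrho^2c_z$. Multiplying out, we use $\Phi\tr Wz=u$ and $z\tr Wz=c_z$. This gives
\[
(N\Phi+z\psi\tr)\tr W(N\Phi+z\psi\tr)=N^2M+N(u\psi\tr+\psi u\tr)+c_z\psi\psi\tr,
\]
and
\[
(N\Phi+z\psi\tr)\tr Wzz\tr W(N\Phi+z\psi\tr)=(Nu+c_z\psi)(Nu+c_z\psi)\tr .
\]
We collect coefficients in the basis $U=[u,\psi]$. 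The $uu\tr$ coefficient is $-\varrho^2N^2/\ell$. The coefficient of $u\psi\tr$ is $N-\varrho^2Nc_z/\ell=N^2/\ell$, using $\ell-\varrho^2c_z=N$. The coefficient of $\psi\psi\tr$ is $c_z-\varrho^2c_z^2/\ell=Nc_z/\ell$. Hence
\[
\Phi_i\tr W_i\Phi_i=\frac{N}{N+1}\Big(M+N^{-1}U\mathcal CU\tr\Big).
\]
The key identity is thus verified. The Woodbury identity in the form $(M+N^{-1}U\mathcal CU\tr)^{-1}=H-N^{-1}HU(I_2+N^{-1}\mathcal CU\tr HU)^{-1}\mathcal CU\tr H$ then yields \eqref{eq:woodbury3}; one checks it by multiplying out. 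For invertibility, Sylvester's determinant identity gives $\det(M+N^{-1}U\mathcal CU\tr)=\det M\,\det(I_2+N^{-1}\mathcal CU\tr HU)$. The left side is nonzero because $\Phi_i\tr W_i\Phi_i$ is assumed invertible, so the $2\times2$ factor is invertible.

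\emph{(c)} With $\varrho=1$, we have $D^{-1}=\diag(-1/\ell,1/N)$. We then check coefficientwise that $U_2D^{-1}U_2\tr=-\ell^{-1}(\psi-u)(\psi-u)\tr+N^{-1}\psi\psi\tr$ equals $N^{-1}U\mathcal CU\tr$:
\begin{itemize}[leftmargin=*,itemsep=1pt]
\item the $uu\tr$ coefficient is $-1/\ell$;
\item the cross coefficient is $1/\ell$;
\item the $\psi\psi\tr$ coefficient is $-1/\ell+1/N=c_z/(N\ell)$.
\end{itemize}
The standard Woodbury formula $(M+U_2D^{-1}U_2\tr)^{-1}=H-HU_2(D+U_2\tr HU_2)^{-1}U_2\tr H$ then gives \eqref{eq:woodbury2}. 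Invertibility of $D+U_2\tr HU_2$ follows from the same determinant identity, since $D$ is invertible. The only delicate step is the coefficient bookkeeping in (b), in particular the use of $\ell-\varrho^2c_z=N$; everything else is standard.
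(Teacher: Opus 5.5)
Your proof is correct and follows the paper's route. You use Sherman--Morrison for (a); for (b), the expansion $(N+1)\Phi_i\tr W_i\Phi_i=N\,\Phi\tr W\Phi+U\mathcal CU\tr$ (via the identity $\ell-\varrho^2c_z=N$) followed by the generalized Woodbury identity; and for (c), rewriting $N^{-1}U\mathcal CU\tr$ as $U_2D^{-1}U_2\tr$ before applying classical Woodbury. The only difference is that you get invertibility of the $2\times2$ matrices from Sylvester's identity $\det(M+N^{-1}U\mathcal CU\tr)=\det M\,\det(I_2+N^{-1}\mathcal CU\tr HU)$, where the paper uses a kernel argument; both are valid.
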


The cost of \eqref{eq:woodbury3} is $O(\dg^2+\dg\db+\db^2)$: $Wz$ costs $O(\dg^2)$, $u$ costs $O(\dg\db)$, and the rest $O(\db^2)$. Together with the $O(\dg^2)$ update of $W_i$ and the $O(\dg\db)$ update of $\Phi_i$, one iteration of Algorithm~\ref{alg:sgmm} costs $O(\dg^2+\dg\db+\db^2)$ operations and the memory is $O(\dg^2+\dg\db)$, independently of $n$. Formula \eqref{eq:woodbury2} is the update used by \citet[eq.~(3)]{ChenEtAl2025} for online 2SLS, and they use a rank-three Woodbury update for their linear efficient SGMM; part (b) shows that a rank-two update suffices for all IV-type moments, linear or not, with continuously updated efficient weights. We checked \eqref{eq:woodbury3} and \eqref{eq:woodbury2} numerically against direct inversion (relative errors below $10^{-15}$).

\begin{proof}
(a) is the Sherman--Morrison formula applied to $NS_{i-1}+ss\tr$, whose inverse is $N^{-1}W-N^{-2}Wss\tr W/(1+N^{-1}s\tr Ws)=N^{-1}(W-Wss\tr W/\ell)$, multiplied by $N+1$.

(b) Put $A=N\Phi+z\psi\tr$, so that $\Phi_i=A/(N+1)$ and, by (a), $W_i=(N+1)(NS_{i-1}+ss\tr)^{-1}=\frac{N+1}{N}(W-Wss\tr W/\ell)$, where $s\tr Ws=\varrho^2c_z$. Hence
\[
\Phi_i\tr W_i\Phi_i=\frac{1}{N(N+1)}\Big(A\tr WA-\frac1\ell(A\tr Ws)(A\tr Ws)\tr\Big).
\]
Now $A\tr WA=N^2\Phi\tr W\Phi+N(u\psi\tr+\psi u\tr)+c_z\psi\psi\tr$ and $A\tr Ws=\varrho(Nu+c_z\psi)$, so
\[
\begin{aligned}
A\tr WA-\frac{(A\tr Ws)(A\tr Ws)\tr}{\ell}&=N^2\Phi\tr W\Phi+N\Big(1-\frac{\varrho^2c_z}{\ell}\Big)(u\psi\tr+\psi u\tr)\\&\quad-\frac{N^2\varrho^2}{\ell}uu\tr+c_z\Big(1-\frac{\varrho^2c_z}{\ell}\Big)\psi\psi\tr\\
&=N^2\Phi\tr W\Phi+N\,U\mathcal CU\tr ,
\end{aligned}
\]
because $1-\varrho^2c_z/\ell=N/\ell$ and $U\mathcal CU\tr=\ell^{-1}\big(-N\varrho^2uu\tr+N(u\psi\tr+\psi u\tr)+c_z\psi\psi\tr\big)$. Therefore $(N+1)\Phi_i\tr W_i\Phi_i=N\,\Phi\tr W\Phi+U\mathcal CU\tr$. For an invertible matrix $B$ and any $U$, $\mathcal C$ such that $B+U\mathcal CU\tr$ is invertible, the matrix $I+\mathcal CU\tr B^{-1}U$ is invertible and
\[
(B+U\mathcal CU\tr)^{-1}=B^{-1}-B^{-1}U(I+\mathcal CU\tr B^{-1}U)^{-1}\mathcal CU\tr B^{-1}.
\]
Indeed, if $(I+\mathcal CU\tr B^{-1}U)v=0$, then $w=B^{-1}Uv$ satisfies $(B+U\mathcal CU\tr)w=U(I+\mathcal CU\tr B^{-1}U)v=0$, so $w=0$, hence $Uv=Bw=0$ and $v=-\mathcal CU\tr B^{-1}Uv=0$; the identity is then verified by multiplying the right-hand side by $B+U\mathcal CU\tr$. With $B=N\Phi\tr W\Phi$, $B^{-1}=H/N$, this gives \eqref{eq:woodbury3}.

(c) For $\varrho=1$, $\ell=N+c_z$ and, expanding the right-hand side and using $1-N/\ell=c_z/\ell$,
\[
\begin{aligned}U\mathcal CU\tr&=\frac1\ell\big(-Nuu\tr+N(u\psi\tr+\psi u\tr)+c_z\psi\psi\tr\big)\\&=-\frac{N}{\ell}(\psi-u)(\psi-u)\tr+\psi\psi\tr=U_2\tilde{\mathcal C}U_2\tr,\qquad\tilde{\mathcal C}=\diag(-N/\ell,1).\end{aligned}
\]
Hence $(N+1)\Phi_i\tr W_i\Phi_i=NH^{-1}+U_2\tilde{\mathcal C}U_2\tr$, and the classical Woodbury identity
\[
\big(NH^{-1}+U_2\tilde{\mathcal C}U_2\tr\big)^{-1}=\frac1N\Big(H-HU_2\big(N\tilde{\mathcal C}^{-1}+U_2\tr HU_2\big)^{-1}U_2\tr H\Big),\qquad N\tilde{\mathcal C}^{-1}=\diag(-\ell,N)=D,
\]
gives \eqref{eq:woodbury2}; the $2\times2$ matrix $D+U_2\tr HU_2$ is invertible by the argument of part (b).
\end{proof}

\subsection{Online overidentification statistic}
The Hansen--Sargan statistic $n\,\bar g_n(\hat\beta)\tr W\bar g_n(\hat\beta)$ requires the moments of all observations at the final estimate. A first-order expansion around the averaged iterates gives a recursively computable substitute: let
\begin{equation}\label{eq:Jn}
\hat g_n=\frac1n\sum_{i=1}^n\Big(g(X_i,\bar\beta_{i-1})+G(X_i,\bar\beta_{i-1})(\bar\beta_n-\bar\beta_{i-1})\Big),\qquad \mathcal J_n=n\,\hat g_n\tr\big(W_n-W_n\Phi_nH_n\Phi_n\tr W_n\big)\hat g_n .
\end{equation}
Since $\hat g_n=n^{-1}\big(\sum_ig(X_i,\bar\beta_{i-1})-\sum_iG(X_i,\bar\beta_{i-1})\bar\beta_{i-1}\big)+\big(n^{-1}\sum_iG(X_i,\bar\beta_{i-1})\big)\bar\beta_n$, three running sums of total size $O(\dg\db)$ suffice.

The matrix $W_n-W_n\Phi_nH_n\Phi_n\tr W_n$ projects out the directions $\Phi_nd$, $d\in\R^{\db}$. Because $\hat g_n+\Phi_nd$ approximates the sample moments at $\bar\beta_n+d$, $\mathcal J_n$ approximates the minimum over $\beta$ of the GMM criterion near $\bar\beta_n$, which is what the Hansen--Sargan statistic measures. It does not require $\bar\beta_n$ to be the exact minimiser. The unprojected statistic $n\hat g_n\tr W_n\hat g_n$ has the same limit (see the proof of Theorem~\ref{thm:J}), but in finite samples it also measures the distance between $\bar\beta_n$ and the minimiser; this can be large in well-identified linear combinations of the parameters, and Section~\ref{sec:application} gives an example.

%=====================================================================
\section{Asymptotic theory}\label{sec:theory}
%=====================================================================

\subsection{Assumptions}
Let $\F_0$ be the $\sigma$-field generated by the pilot sample and $\F_i=\sigma(\F_0,X_1,\dots,X_i)$.

\begin{assumption}[Data and parameter space]\label{ass:data}
$X_1,X_2,\dots$ are i.i.d.\ copies of $X$ and independent of $\F_0$. The set $\Theta\subset\R^{\db}$ is compact and convex, $\beta^*$ lies in its interior, $n_0\ge\db$, and $\beta_0\in\Theta$, $\Phi_0$ and the positive definite $S_0$ are $\F_0$-measurable.
\end{assumption}

\begin{assumption}[Smoothness and moments]\label{ass:smooth}
For every $\beta$, $x\mapsto g(x,\beta)$ is measurable, and for every $x$, $\beta\mapsto g(x,\beta)$ is continuously differentiable on an open set $\mathcal O\supset\Theta$. Let $B(x)=\sup_{\beta\in\Theta}\|g(x,\beta)\|+\sup_{\beta\in\Theta}\|G(x,\beta)\|$. There is a measurable $L$ with $\|G(x,\beta)-G(x,\beta')\|\le L(x)\|\beta-\beta'\|$ for $\beta,\beta'\in\Theta$, and $\E[B(X)^\kappa]<\infty$, $\E[L(X)]<\infty$ for some $\kappa\ge4$. Moreover $\lambda_0=\inf_{\beta\in\Theta}\lambda_{\min}(\Omega(\beta))>0$ and $G^*$ has full column rank $\db$.
\end{assumption}

Under Assumption~\ref{ass:smooth}, $m$ and $\bar G$ are continuous, $\bar G$ is Lipschitz on $\Theta$ with constant $\bar L=\E[L(X)]$, $m$ is continuously differentiable on $\operatorname{int}\Theta$ with derivative $\bar G$ (by dominated convergence), and $\Omega$ is Lipschitz on $\Theta$ with constant $2\E[B(X)^2]$, because $\|g(x,\beta)-g(x,\beta')\|\le B(x)\|\beta-\beta'\|$ by the mean value theorem on the convex set $\Theta$. Let
\[
\Lambda=\sup_{\beta\in\Theta}\lambda_{\max}(\Omega(\beta)),\qquad \mathcal G=\conv\{\bar G(\beta):\beta\in\Theta\},\qquad \mathcal W=\big\{W=W\tr:\ (2\Lambda)^{-1}I\preceq W\preceq 2\lambda_0^{-1}I\big\},
\]
where $\mathcal G$ is compact as the convex hull of a compact set in a finite-dimensional space, and $\mathcal W$ is closed and bounded.

\begin{assumption}[Preconditioned monotonicity]\label{ass:mono}
There are $\delta_\Phi,\sigma_0,c>0$ such that every $\Phi\in\R^{\dg\times\db}$ with $\dist(\Phi,\mathcal G)\le\delta_\Phi$ satisfies $\sigma_{\min}(\Phi)\ge\sigma_0$, and
\[
\big\langle\beta-\beta^*,\,P(\Phi,W)\,m(\beta)\big\rangle\ge c\,\|\beta-\beta^*\|^2\qquad\text{for all }\beta\in\Theta,\ W\in\mathcal W,\ \dist(\Phi,\mathcal G)\le\delta_\Phi,
\]
with $P(\Phi,W)$ as in \eqref{eq:P}.
\end{assumption}

\begin{assumption}[Steps and truncation]\label{ass:steps}
$\gamma_i=\gamma_0(i+i_0)^{-a}$ with $\gamma_0>0$, $i_0\ge0$ and $a\in(1/2,1)$; $\bar p_i=c_Pi^\nu$ with $c_P>0$ and $0<\nu<a-1/2$.
\end{assumption}

Assumption~\ref{ass:mono} implies that $\beta^*$ is the only zero of $m$ in $\Theta$. It is the condition under which the preconditioned mean field points towards $\beta^*$ for all preconditioners that can occur along the algorithm. In the linear model the Jacobian does not depend on $\beta$, $\mathcal G=\{G^*\}$, and the condition holds on every $\Theta$; for smooth nonlinear models it holds on every sufficiently small ball around $\beta^*$:

\begin{proposition}[Validity of Assumption~\ref{ass:mono}]\label{prop:local}
Suppose that Assumption~\ref{ass:smooth} holds on a closed ball $\Theta_0=\bar B(\beta^*,r_0)$.
\begin{enumerate}[label=(\alph*),itemsep=2pt]
\item There are $r_1\in(0,r_0]$ and $\delta_\Phi>0$ such that Assumption~\ref{ass:mono} holds with $c=1/2$ on every compact convex $\Theta\subseteq\bar B(\beta^*,r_1)$ with $\beta^*\in\operatorname{int}\Theta$.
\item In the linear model of Example~\ref{ex:linear}, if Assumption~\ref{ass:smooth} holds on a compact convex $\Theta$, then Assumption~\ref{ass:mono} holds on $\Theta$ with $c=1/2$, for some $\delta_\Phi>0$.
\end{enumerate}
\end{proposition}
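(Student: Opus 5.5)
Both parts rest on one decomposition of the preconditioned mean field. Write $m(\beta)=m(\beta)-m(\beta^*)=\int_0^1\bar G(\beta^*+t(\beta-\beta^*))\dd t\,(\beta-\beta^*)=:\tilde G(\beta)(\beta-\beta^*)$; this uses the continuous differentiability of $m$ on $\operatorname{int}\Theta$ with derivative $\bar G$ and the convexity of $\Theta$, extended to the boundary by continuity. Since $P(\Phi,W)\Phi=I_{\db}$,
\[
P(\Phi,W)m(\beta)=(\beta-\beta^*)+P(\Phi,W)\big(\tilde G(\beta)-\Phi\big)(\beta-\beta^*),
\]
so $\langle\beta-\beta^*,P(\Phi,W)m(\beta)\rangle\ge\big(1-\|P(\Phi,W)\|\,\|\tilde G(\beta)-\Phi\|\big)\|\beta-\beta^*\|^2$. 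It therefore suffices to find $\delta_\Phi$ and $\Theta$ such that $\|P(\Phi,W)\|\,\|\tilde G(\beta)-\Phi\|\le 1/2$ uniformly over the admissible $\Phi$, $W$ and $\beta$.

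\emph{Bounding $\|P\|$.} First fix $\delta_\Phi\le\sigma_{\min}(G^*)/4$, say, and shrink the radius so that every $\Phi$ with $\dist(\Phi,\mathcal G)\le\delta_\Phi$ satisfies $\|\Phi-G^*\|\le\sigma_{\min}(G^*)/2$. Then Weyl's inequality for singular values gives $\sigma_{\min}(\Phi)\ge\sigma_{\min}(G^*)/2=:\sigma_0$, which is the first requirement of Assumption~\ref{ass:mono}. For $W\in\mathcal W$, $\Phi\tr W\Phi\succeq(2\Lambda)^{-1}\sigma_0^2I$, so $\|(\Phi\tr W\Phi)^{-1}\|\le2\Lambda\sigma_0^{-2}$. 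Together with $\|\Phi\|\le\|G^*\|+\sigma_{\min}(G^*)$ and $\|W\|\le2\lambda_0^{-1}$, this gives $\|P(\Phi,W)\|\le K$ for a constant $K$ that does not depend on $\Theta\subseteq\Theta_0$; note that $\lambda_0$ and $\Lambda$ only improve when $\Theta$ shrinks.

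\emph{Part (a).} The key step is controlling $\|\tilde G(\beta)-\Phi\|$ and $\|\Phi-G^*\|$ on a small ball. Every element of $\mathcal G$ is a convex combination of values $\bar G(\beta')$ with $\beta'\in\Theta\subseteq\bar B(\beta^*,r_1)$. Since $\bar G$ is $\bar L$-Lipschitz, $\|\bar G(\beta')-G^*\|\le\bar L r_1$, and the same bound holds for every element of $\mathcal G$ and for $\tilde G(\beta)$. Hence for $\dist(\Phi,\mathcal G)\le\delta_\Phi$ we get $\|\Phi-G^*\|\le\bar Lr_1+\delta_\Phi$ and $\|\tilde G(\beta)-\Phi\|\le2\bar Lr_1+\delta_\Phi$. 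Now choose $\delta_\Phi$ and $r_1$ small enough that $\bar Lr_1+\delta_\Phi\le\sigma_{\min}(G^*)/2$ and $K(2\bar Lr_1+\delta_\Phi)\le1/2$. This gives $c=1/2$. The main obstacle is the circularity that $K$ depends on $\Lambda$ and $\lambda_0$, which depend on $\Theta$. I would resolve it by bounding them with their $\Theta_0$ values: $\lambda_0(\Theta)\ge\lambda_0(\Theta_0)$ and $\Lambda(\Theta)\le\Lambda(\Theta_0)$, so the constant computed on $\Theta_0$ works for every smaller $\Theta$. One caveat: $\mathcal W$ itself changes with $\Theta$, but it is contained in the corresponding set for $\Theta_0$, so the bound on $\|P\|$ remains uniform.

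\emph{Part (b).} In the linear model $\bar G\equiv G^*$, so $\mathcal G=\{G^*\}$ and $\tilde G(\beta)=G^*$. Therefore $\|\tilde G-\Phi\|\le\delta_\Phi$ with no restriction on the size of $\Theta$. Choosing $\delta_\Phi\le\sigma_{\min}(G^*)/2$ and $K\delta_\Phi\le1/2$, with $K$ computed from $\lambda_0,\Lambda$ on the given $\Theta$, gives the claim.
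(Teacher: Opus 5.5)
Your proposal is correct and takes essentially the same route as the paper. It uses the decomposition $P(\Phi,W)m(\beta)=(\beta-\beta^*)+P(\Phi,W)(\tilde G(\beta)-\Phi)(\beta-\beta^*)$, a bound on $\|P(\Phi,W)\|$ made uniform over $\Theta\subseteq\Theta_0$ through $\lambda_0(\Theta_0)$ and $\Lambda(\Theta_0)$, and Lipschitz control of $\mathcal G$ and $\tilde G$ around $G^*$, with part (b) as the case $\bar L=0$. The only difference is cosmetic: the paper writes $g(x,\beta)-g(x,\beta^*)$ as an integral pointwise in $x$ and applies Fubini, whereas you integrate the derivative of $m$ along the segment and extend to the boundary by continuity; both are valid.
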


In practice, $\beta^*$ is unknown and $\Theta$ would be chosen as a ball around the pilot estimate, which is close to $\beta^*$ when $n_0$ is moderately large. Such an $\F_0$-measurable $\Theta$ is covered by our results: since $X_1,X_2,\dots$ are independent of $\F_0$, all theorems below hold conditionally on $\F_0$, on the event that $\Theta$ satisfies Assumptions~\ref{ass:data}--\ref{ass:mono}. Combined with Proposition~\ref{prop:local}(a) this gives a result that requires no knowledge of $\beta^*$:

\begin{corollary}[Data-driven parameter set]\label{cor:datadriven}
Suppose Assumption~\ref{ass:smooth} holds on $\Theta_0=\bar B(\beta^*,r_0)$, let $r_1$ be as in Proposition~\ref{prop:local}(a), $0<\rho\le r_1/2$, and let $\Theta=\bar B(\hat\beta_{\rm pilot},\rho)$ for an $\F_0$-measurable pilot estimator. On the event $\mathcal E_\rho=\{\|\hat\beta_{\rm pilot}-\beta^*\|<\rho\}$, $\beta^*\in\operatorname{int}\Theta$ and $\Theta\subseteq\bar B(\beta^*,2\rho)\subseteq\bar B(\beta^*,r_1)$, so that the conclusions of Theorems~\ref{thm:as}--\ref{thm:J} below hold, under their respective moment conditions, conditionally on $\F_0$ on $\mathcal E_\rho$. In particular $\Prob(\bar\beta_n\to\beta^*)\ge\Prob(\mathcal E_\rho)$, which tends to one as $n_0\to\infty$ for a consistent pilot estimator.
\end{corollary}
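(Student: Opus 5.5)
The proof is mostly bookkeeping: we verify the geometric containments on $\mathcal E_\rho$, then reduce to the theorems by conditioning on $\F_0$, then integrate.

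\emph{Containments.} Fix $\omega\in\mathcal E_\rho$, so $\|\hat\beta_{\rm pilot}-\beta^*\|<\rho$.
\begin{itemize}
\item Since the inequality is strict, the open ball $B(\beta^*,\rho-\|\hat\beta_{\rm pilot}-\beta^*\|)$ lies in $\Theta=\bar B(\hat\beta_{\rm pilot},\rho)$. Hence $\beta^*\in\operatorname{int}\Theta$.
\item For $\beta\in\Theta$, the triangle inequality gives $\|\beta-\beta^*\|\le\rho+\|\hat\beta_{\rm pilot}-\beta^*\|<2\rho\le r_1\le r_0$.
\item So $\Theta\subseteq\bar B(\beta^*,2\rho)\subseteq\bar B(\beta^*,r_1)\subseteq\Theta_0$. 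Also $\Theta$ is a closed ball, hence compact and convex.
\end{itemize}

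\emph{Assumptions on $\Theta$.} Assumption~\ref{ass:smooth} was posed on $\Theta_0$, and it passes to the subset $\Theta$:
\begin{itemize}
\item Suprema over $\Theta$ are dominated by suprema over $\Theta_0$, so the envelope $B$ only decreases and the moment bounds and the Lipschitz function $L$ carry over.
\item The infimum defining $\lambda_0$ over $\Theta$ is at least the infimum over $\Theta_0$, so it stays positive.
\item $G^*$ does not depend on $\Theta$, so it keeps full column rank.
\end{itemize}
Proposition~\ref{prop:local}(a) then gives Assumption~\ref{ass:mono} on $\Theta$ with $c=1/2$ and a $\delta_\Phi$ that does not depend on $\Theta$. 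The remaining items of Assumption~\ref{ass:data} hold on $\mathcal E_\rho$ because $\beta_0=\Pi_\Theta(\hat\beta_{\rm pilot})=\hat\beta_{\rm pilot}\in\Theta$ and $\Phi_0,S_0$ are $\F_0$-measurable. Assumption~\ref{ass:steps} does not involve $\Theta$.

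\emph{Conditioning.} This is the only delicate point. $\Theta$ is random, but it is $\F_0$-measurable, and $X_1,X_2,\dots$ are independent of $\F_0$. Fix an $\F_0$-realisation in $\mathcal E_\rho$; this fixes $\Theta$ and the initial values $\beta_0,\Phi_0,S_0$.
\begin{itemize}
\item Under the regular conditional distribution given $\F_0$, the sequence $X_1,X_2,\dots$ is still i.i.d.\ with the law of $X$.
\item Algorithm~\ref{alg:sgmm} is then run with a deterministic parameter set satisfying Assumptions~\ref{ass:data}--\ref{ass:steps}.
\item Theorems~\ref{thm:as}--\ref{thm:J} therefore apply under this conditional law, each with its own moment condition.
\end{itemize}
The constants in those theorems, such as $\Lambda$ and $\mathcal W$, depend on $\Theta$ and hence on $\omega$. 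That causes no difficulty, since each theorem is applied separately for each fixed realisation.

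\emph{Probability bound.} By Theorem~\ref{thm:as}, $\Prob(\bar\beta_n\to\beta^*\mid\F_0)=1$ on $\mathcal E_\rho$. Taking expectations,
\[
\Prob(\bar\beta_n\to\beta^*)\ge\E\big[\mathbf 1_{\mathcal E_\rho}\,\Prob(\bar\beta_n\to\beta^*\mid\F_0)\big]=\Prob(\mathcal E_\rho).
\]
If the pilot estimator is consistent as $n_0\to\infty$, then $\Prob(\|\hat\beta_{\rm pilot}-\beta^*\|<\rho)\to1$ for the fixed $\rho>0$, which gives the final claim.
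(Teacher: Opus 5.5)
Your proposal is correct and follows essentially the same route as the paper: the geometric containments on $\mathcal E_\rho$, then Assumptions~\ref{ass:data}--\ref{ass:steps} checked for each fixed pilot realisation, then integration over the pilot sample. For that check, Assumption~\ref{ass:smooth} is inherited from $\Theta_0$ and Assumption~\ref{ass:mono} comes from Proposition~\ref{prop:local}(a). The only difference is cosmetic: you use a regular conditional distribution given $\F_0$, whereas the paper applies Fubini's theorem directly to the independent pair (pilot sample, stream), which avoids any question about regular conditional distributions on the general space $\mathcal X$.
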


A global version of Assumption~\ref{ass:mono} for a given nonlinear model is a condition on the variation of the Jacobian over $\Theta$, of the same nature as the conditions used for stochastic Gauss--Newton algorithms in nonlinear regression \cite{CenacGodichonPortier2025}.

\subsection{Main results}

\begin{theorem}[Almost sure convergence]\label{thm:as}
Under Assumptions~\ref{ass:data}--\ref{ass:steps}, $\beta_n\to\beta^*$ and $\bar\beta_n\to\beta^*$ almost surely. Moreover, almost surely, $\Phi_n\to G^*$, $W_n\to\Omega^{*-1}$, $H_n\to\Sigma^*=(G^{*\top}\Omega^{*-1}G^*)^{-1}$ and $P_n\to P^*=\Sigma^*G^{*\top}\Omega^{*-1}$, and there is a finite random time after which neither the truncation nor the projection is active.
\end{theorem}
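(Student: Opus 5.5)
The plan is a Robbins--Siegmund argument on $V_i=\|\beta_i-\beta^*\|^2$, with the preconditioner controlled separately beforehand.

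\emph{Step 1: the preconditioner stays in the admissible set.} I will show that, almost surely, $\dist(\Phi_i,\mathcal G)\to0$ and $\dist(W_i,\mathcal W)\to0$ \emph{before} knowing anything about $\bar\beta_i$. By Remark~\ref{rem:design}(ii), $\Phi_i$ is a Cesàro average of $G(X_j,\bar\beta_{j-1})$. Write $G(X_j,\bar\beta_{j-1})=\bar G(\bar\beta_{j-1})+\xi_j$, where $\xi_j$ is a martingale difference with $\E\|\xi_j\|^2\le \E B^2$. Then $(n_0+i)^{-1}\sum_j\xi_j\to0$ almost surely by the strong law for martingales (Kronecker's lemma). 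The remaining term is an average of points of $\mathcal G$, so it lies in the convex set $\mathcal G$. The same argument applied to $S_i$ shows that $S_i$ is asymptotically within $o(1)$ of $\conv\{\Omega(\beta)\}\subset\{\lambda_0 I\preceq S\preceq\Lambda I\}$, up to the $n_0\eta_0/(n_0+i)$ term. Hence $W_i\in\mathcal W$ eventually. Here the factor $2$ in the definition of $\mathcal W$ gives the slack. It follows from Assumption~\ref{ass:mono} that $\sigma_{\min}(\Phi_i)\ge\sigma_0$ eventually. The matrices $P(\Phi_i,W_i)$ are therefore uniformly bounded, so after a finite random time $\|\hat P\|\le\bar p_i$ and the truncation is inactive. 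This is presumably the content of Lemma~\ref{lem:precond}.

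\emph{Step 2: drift inequality.} Since $\Pi_\Theta$ is nonexpansive and $\beta^*\in\Theta$,
\[
V_i\le V_{i-1}-2\gamma_i\langle\beta_{i-1}-\beta^*,P_{i-1}g(X_i,\beta_{i-1})\rangle+\gamma_i^2\|P_{i-1}\|^2B(X_i)^2 .
\]
Taking conditional expectations and using Step~1 with Assumption~\ref{ass:mono}, after the random time $\tau$ the middle term has conditional mean at least $2c\gamma_iV_{i-1}$. The last term is bounded by $\gamma_i^2\bar p_i^2\E B^2$, which is summable because $\nu<a-1/2$. This is why the truncation sequence is built into the algorithm: it keeps the noise summable before $\tau$ as well. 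Before $\tau$ the cross term has no good sign. I would bound it by $\gamma_i\bar p_i\,\mathrm{diam}(\Theta)\E B$ and localize: on $\{\tau\le k\}$, apply Robbins--Siegmund from time $k$ on, and then take the union over $k$. This gives $V_i\to V_\infty$ and $\sum\gamma_iV_{i-1}<\infty$, so $V_i\to0$ because $\sum\gamma_i=\infty$. Cesàro averaging then gives $\bar\beta_n\to\beta^*$.

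\emph{Step 3: limits of the matrices.} Now that $\bar\beta_{j-1}\to\beta^*$, continuity of $\bar G$ and $\Omega$ gives $\Phi_n\to G^*$ and $S_n\to\Omega^*$, the latter using the Lipschitz property of $\Omega$ and Cesàro averaging. Hence $W_n\to\Omega^{*-1}$, and by continuity of inversion, using full rank of $G^*$, $H_n\to\Sigma^*$ and $P_n\to P^*$. Since $\beta^*\in\operatorname{int}\Theta$, the point $\beta_{i-1}-\gamma_iP_{i-1}g(X_i,\beta_{i-1})$ eventually stays inside $\Theta$. Indeed, $\gamma_i\|g(X_i,\cdot)\|\le\gamma_iB(X_i)\to0$ almost surely, by Borel--Cantelli with $\E B^{\kappa}<\infty$, $\kappa\ge4$, and $a\kappa>1$. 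So the projection is eventually inactive.

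The main obstacle is Step~1. The monotonicity inequality is only available once $(\Phi_{i-1},W_{i-1})$ lie in the admissible neighbourhoods, and these matrices are driven by $\bar\beta$, which is what we are trying to control. The convex-hull trick, which places the averages in $\mathcal G$ regardless of where $\bar\beta$ is, breaks this circularity. The pre-$\tau$ phase then has to be handled by the localization in Step~2.
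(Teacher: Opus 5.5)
Your proposal follows the paper's proof essentially step by step: the martingale strong law combined with the convex-hull observation gives $\Phi_i\in\mathcal N$ and $W_i\in\mathcal W$ eventually (Lemma~\ref{lem:precond}(a)), hence an inactive truncation; then Robbins--Siegmund for $\|\beta_i-\beta^*\|^2$, Ces\`aro averaging and continuity give the matrix limits, and the envelope bound (T5) with $a\kappa>1$ handles the projection. The one point to tighten is the pre-$\tau$ phase: the time $\tau$ after which the good event holds forever is not a stopping time, so Robbins--Siegmund cannot be run ``on $\{\tau\le k\}$''; the paper instead puts your crude bound into an adapted perturbation, $b_{i-1}=2\gamma_i\bar p_iD_\Theta\E[B(X)]\mathbf 1_{E_{i-1}^c}+\gamma_i^2\bar p_i^2\E[B(X)^2]$ and $c_{i-1}=2c\gamma_iV_{i-1}\mathbf 1_{E_{i-1}}$, with $E_{i-1}=\{\Phi_{i-1}\in\mathcal N,\ W_{i-1}\in\mathcal W,\ P_{i-1}=P(\Phi_{i-1},W_{i-1})\}\in\F_{i-1}$, and $b_{i-1}$ is almost surely summable because $E_{i-1}^c$ occurs only finitely often (stopping the process at the first failure of $E$ after time $k$ would also work).
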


Let
\[
\kappa(a)=\max\Big\{2,\ \frac{4(1-a)}{2a-1}\Big\}.
\]
Since $\kappa\ge4$ in Assumption~\ref{ass:smooth}, the condition $\kappa>\kappa(a)$ used below holds automatically when $a>2/3$; for the default $a=3/4$, $\kappa(a)=2$.

\begin{theorem}[Almost sure rates]\label{thm:rate}
Under Assumptions~\ref{ass:data}--\ref{ass:steps}, for every $\varepsilon>(1-a)/\kappa$,
\[
\|\beta_n-\beta^*\|=O\big(n^{-a/2+\varepsilon}\big),\qquad\|\bar\beta_n-\beta^*\|=O\big(n^{-a/2+\varepsilon}\big),\qquad \|\Phi_n-G^*\|+\|W_n-\Omega^{*-1}\|=O\big(n^{-a/2+\varepsilon}\big)
\]
almost surely.
\end{theorem}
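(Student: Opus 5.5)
We prove the rate for $\beta_n$ first, working after the finite random time of Theorem~\ref{thm:as}, from which on the projection and truncation are inactive and $P_{n-1}\to P^*$ almost surely.

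\textbf{Recursion for $\beta_n$.} Write $\theta_n=\beta_n-\beta^*$ and
\[
\theta_n=\theta_{n-1}-\gamma_nP_{n-1}m(\beta_{n-1})-\gamma_n\xi_n,\qquad \xi_n=P_{n-1}\big(g(X_n,\beta_{n-1})-m(\beta_{n-1})\big).
\]
The sequence $\xi_n$ is a martingale difference sequence. Its conditional moments of order $\kappa$ are bounded by $\|P_{n-1}\|^\kappa\E[B(X)^\kappa]$, and $\|P_{n-1}\|$ is eventually bounded.

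Since $m$ is $C^1$ with derivative $\bar G$, and $P_{n-1}\to P^*$ with $P^*G^*=I$, we have $P_{n-1}m(\beta_{n-1})=\theta_{n-1}+r_n$ with $\|r_n\|\le o(1)\|\theta_{n-1}\|$. This linearized recursion contracts at rate $1-\gamma_n(1+o(1))$.

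The standard decomposition (as in \citet{Pelletier1998} or \citet{CenacGodichonPortier2025}) writes $\theta_n$ as a deterministic-contraction part decaying faster than any power, plus the weighted martingale $M_n=\sum_{k\le n}\big(\prod_{j=k+1}^n(1-\gamma_j)\big)\gamma_k\xi_k$, plus a remainder driven by $o(1)\|\theta\|$. The remainder is absorbed by a comparison (Gronwall-type) argument once the martingale bound holds.

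\textbf{Martingale bound.} This is the main obstacle. We need $\|M_n\|=O(n^{-a/2+\varepsilon})$ almost surely for every $\varepsilon>(1-a)/\kappa$, assuming only $\kappa$ moments. The plan is a blocking argument combined with a maximal inequality.

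1. Split indices into dyadic blocks $[2^k,2^{k+1})$.
2. Within a block the weights $\prod(1-\gamma_j)\gamma_k$ are comparable, so $M_n$ is controlled by the exponentially-weighted partial sums. The effective window has length $\asymp n^{a}$ and effective variance $\asymp\gamma_n$.
3. Apply Burkholder's inequality in $L^\kappa$. This gives $\E\sup_{\text{block}}\|M\|^\kappa\lesssim n^{-a\kappa/2}$ per window. A block contains $n^{1-a}$ windows, so a union bound over windows costs a factor $n^{1-a}$.
4. Markov's inequality at level $n^{-a/2+\varepsilon}$ then gives a probability of order $n^{1-a-\kappa\varepsilon}$ per block. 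Over the dyadic blocks this is summable exactly when $\varepsilon>(1-a)/\kappa$, and Borel--Cantelli finishes.

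Truncating $\xi_k$ at a level $k^{\alpha}$ may be needed so that Fuk--Nagaev/Burkholder--Rosenthal inequalities apply cleanly. The truncated-away part is summable by the $\kappa$-th moment.

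\textbf{Averaged iterate.} The bound for $\bar\beta_n$ follows from the one for $\beta_n$ by Cesàro averaging, since $n^{-1}\sum_{k\le n}k^{-a/2+\varepsilon}=O(n^{-a/2+\varepsilon})$.

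\textbf{Jacobian and weight matrix.} For $\Phi_n$ write
\[
\Phi_n-G^*=\frac{1}{n_0+n}\Big(n_0(\Phi_0-G^*)+\sum_{j\le n}\big(G(X_j,\bar\beta_{j-1})-\bar G(\bar\beta_{j-1})\big)+\sum_{j\le n}\big(\bar G(\bar\beta_{j-1})-G^*\big)\Big).
\]
The bias term is bounded by $\bar L n^{-1}\sum_j\|\bar\beta_{j-1}-\beta^*\|=O(n^{-a/2+\varepsilon})$. The martingale term is $O(n^{-1/2}\sqrt{\log\log n})$ almost surely by the martingale LIL or a Chow-type SLLN under two moments, and this is of smaller order.

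For $S_n$ the same argument applies, using that $\Omega$ is Lipschitz and the martingale $ss\tr-\Omega(\bar\beta_{j-1})$ has finite second moment (here $\kappa\ge4$ is used). Then $W_n-\Omega^{*-1}=-W_n(S_n-\Omega^*)\Omega^{*-1}$. The factor $W_n$ is eventually bounded, and the ridge term $n_0S_0/(n_0+n)$ is $O(n^{-1})$. This gives the stated rate for $\|W_n-\Omega^{*-1}\|$.
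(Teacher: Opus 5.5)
Your proposal is correct and follows essentially the paper's proof. The shared steps are: comparing $\beta_n-\beta^*$ with the scalar-weighted process $Z_i=(1-\gamma_i)Z_{i-1}-\gamma_i\xi_i$ through a Lemma~\ref{lem:det}-type argument; using windows with $\sum\gamma_j\approx1$ (length $\asymp n^a$), on which the maximal Burkholder inequality for the rescaled martingale $\sum_k(\gamma_k/\Gamma_{i_1,k})\xi_k$, Markov and Borel--Cantelli give exactly the threshold $\varepsilon>(1-a)/\kappa$ (your dyadic grouping only reorganises the paper's summation over windows); Ces\`aro averaging; and the martingale-plus-Lipschitz-bias decomposition for $\Phi_n$ and $S_n$, where the Chow-type bound $O(n^{-1/2+\varepsilon'})$ suffices and neither the LIL nor a truncation of $\xi_k$ is needed.

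One detail must be fixed: the moment inequality needs $\sup_n\E\|\xi_n\|^\kappa<\infty$, and ``eventually bounded'' does not give this, because before the (non-stopping) random time one only has $\|P_{n-1}\|\le\bar p_n\to\infty$. As the paper does, define $\xi_n$ with the predictable truncation $P_{n-1}\mathbf 1\{\|P_{n-1}\|\le p_{\max}\}$, which coincides with $P_{n-1}$ eventually, and start the comparison after that time.
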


The rate $n^{-a/2}$ of the non-averaged iterates is the usual one for steps $n^{-a}$; the loss $n^\varepsilon$ can be made arbitrarily small when all moments of $B(X)$ are finite, and could be reduced to a logarithmic factor with the techniques of \citet{Pelletier1998}. The averaged estimator converges faster, at the rate $n^{-1/2}$, as shown next.

\begin{theorem}[Functional central limit theorem]\label{thm:fclt}
Suppose Assumptions~\ref{ass:data}--\ref{ass:steps} hold with $\kappa>\kappa(a)$. Then the partial-sum process
\[
\mathbb B_n(t)=\frac1{\sqrt n}\sum_{i=1}^{\lfloor nt\rfloor}(\beta_i-\beta^*),\qquad t\in[0,1],
\]
converges in distribution in the Skorokhod space $D[0,1]^{\db}$ to $\Sigma^{*1/2}\mathbf B$, where $\mathbf B$ is a standard $\db$-dimensional Brownian motion. More precisely,
\begin{equation}\label{eq:bahadur}
\sup_{t\in[0,1]}\Big\|\mathbb B_n(t)+P^*\frac1{\sqrt n}\sum_{i=1}^{\lfloor nt\rfloor}g(X_i,\beta^*)\Big\|\to0\quad\text{in probability.}
\end{equation}
In particular $\sqrt n(\bar\beta_n-\beta^*)\Rightarrow N(0,\Sigma^*)$, and $\bar\beta_n$ is asymptotically equivalent to the efficient two-step GMM estimator.
\end{theorem}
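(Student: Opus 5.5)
The approach is the Polyak--Juditsky decomposition adapted to a preconditioner that changes over time. Write $\Delta_i=\beta_i-\beta^*$. By Theorem~\ref{thm:as} there is a finite random time $\tau$ after which neither the projection nor the truncation is active. After $\tau$ the recursion reads
\[
\Delta_i=\Delta_{i-1}-\gamma_iP_{i-1}\big(m(\beta_{i-1})+\xi_i\big),
\qquad \xi_i=g(X_i,\beta_{i-1})-m(\beta_{i-1}).
\]
Here $\xi_i$ is an $\F_i$-martingale difference. Since $P_{i-1}\Phi_{i-1}=I$, expanding $m$ around $\beta^*$ gives
\[
P_{i-1}m(\beta_{i-1})=\Delta_{i-1}+P_{i-1}(G^*-\Phi_{i-1})\Delta_{i-1}+P_{i-1}r(\beta_{i-1}),
\qquad \|r(\beta)\|\le\bar L\|\beta-\beta^*\|^2 .
\]
Solving the recursion for $\Delta_{i-1}$ gives
\[
\Delta_{i-1}=\frac{\Delta_{i-1}-\Delta_i}{\gamma_i}-P_{i-1}\xi_i-P_{i-1}(G^*-\Phi_{i-1})\Delta_{i-1}-P_{i-1}r(\beta_{i-1}).
\]
I then sum over $i\le\lfloor nt\rfloor$ and divide by $\sqrt n$. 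The shift between $\Delta_{i-1}$ and $\Delta_i$ in the partial sum costs at most $\sup_{i\le n}\|\Delta_i\|/\sqrt n\to0$. The pre-$\tau$ terms contribute $O(1)/\sqrt n$ uniformly in $t$.

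I will show that each of the following terms is $o_P(1)$ uniformly in $t\in[0,1]$.

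\emph{Telescoping term.} Abel summation gives
\[
\sum\frac{\Delta_{i-1}-\Delta_i}{\gamma_i}=\frac{\Delta_0}{\gamma_1}-\frac{\Delta_k}{\gamma_k}+\sum\Delta_{i}\big(\gamma_{i+1}^{-1}-\gamma_i^{-1}\big),
\qquad \gamma_{i+1}^{-1}-\gamma_i^{-1}=O(i^{a-1}).
\]
By Theorem~\ref{thm:rate}, $\|\Delta_i\|=O(i^{-a/2+\varepsilon})$. The whole term is then $O(n^{a/2+\varepsilon})$, and after dividing by $\sqrt n$ it vanishes provided $\varepsilon<(1-a)/2$. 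This is compatible with $\varepsilon>(1-a)/\kappa$ because $\kappa>2$.

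\emph{Quadratic remainder.} It is bounded by $\sum\|\Delta_i\|^2=O(n^{1-a+2\varepsilon})$. This is $o(\sqrt n)$ iff $a>1/2+2\varepsilon$, and the constraint $\varepsilon>(1-a)/\kappa$ is compatible with this exactly when $\kappa>2(1-a)/(a-1/2)=4(1-a)/(2a-1)$. This is precisely where $\kappa(a)$ comes from.

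\emph{Preconditioner-error term.} It is bounded by $\sum\|\Phi_{i-1}-G^*\|\|\Delta_{i-1}\|$. Theorem~\ref{thm:rate} controls both factors, giving the same $O(n^{1-a+2\varepsilon})$ bound.

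\emph{Martingale term.} I split
\[
P_{i-1}\xi_i=P^*\big(g(X_i,\beta^*)\big)+(P_{i-1}-P^*)g(X_i,\beta^*)+P_{i-1}\big(\xi_i-g(X_i,\beta^*)\big).
\]
The last two pieces are martingale differences whose conditional second moments tend to zero almost surely. For the first, this uses $P_{i-1}\to P^*$ from Theorem~\ref{thm:as}. For the second, it uses the Lipschitz bound $\|g(x,\beta)-g(x,\beta^*)\|\le B(x)\|\beta-\beta^*\|$ together with $\E[B^2]<\infty$ and $\beta_{i-1}\to\beta^*$. Doob's maximal inequality then makes their partial sums $o_P(\sqrt n)$ uniformly in $t$, after localizing on the event that the random constants bounding $\|P_{i-1}\|$ stay below a fixed level, which has probability close to one.

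These bounds establish \eqref{eq:bahadur}. Donsker's invariance principle applied to the i.i.d.\ centered vectors $P^*g(X_i,\beta^*)$ then gives a Brownian limit with covariance
\[
P^*\Omega^*P^{*\top}=\Sigma^*G^{*\top}\Omega^{*-1}\Omega^*\Omega^{*-1}G^*\Sigma^*=\Sigma^*.
\]
Taking $t=1$ gives the central limit theorem for $\bar\beta_n$. Equivalence with two-step GMM follows because that estimator has the same Bahadur representation $-P^*\bar g_n(\beta^*)$.

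\textbf{Main obstacle.} The delicate step is the rate bookkeeping: $\varepsilon$ must be chosen to satisfy simultaneously $(1-a)/\kappa<\varepsilon<(a-1/2)/2$ and $\varepsilon<(1-a)/2$. The abstract mentions a blocking argument, which suggests that the rates of Theorem~\ref{thm:rate} under only four moments may be too weak for the cross term when $a$ is close to $2/3$. If so, I would first try to handle $\sum(\Phi_{i-1}-G^*)\Delta_{i-1}$ more carefully. The idea is to write $\Phi_{i-1}-G^*$ as an averaged martingale part plus a smooth part in $\bar\beta_{i-1}-\beta^*$. Grouping indices into blocks of length $i^{a}$ should then make the martingale part average out against the slowly varying $\Delta$, instead of relying on the crude product of almost sure rates.
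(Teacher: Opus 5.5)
Your proposal is correct and follows essentially the paper's proof. It uses the same rearrangement of the recursion for $\Delta_{i-1}$, the same Abel summation of the telescoping term, and the same $O(i^{-a+2\varepsilon})$ per-term bound on the quadratic and preconditioner-error remainders, with the same choice $(1-a)/\kappa<\varepsilon<\min\{(1-a)/2,(2a-1)/4\}$. For the martingale pieces the paper uses Lenglart's inequality with the predictable truncation $\tilde P_{i-1}=P_{i-1}\mathbf 1\{\|P_{i-1}\|\le p_{\max}\}$ where you localize and use Doob; that is fine as long as you localize by a stopping time or a predictable truncation, not by a non-predictable event.

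The fallback in your last paragraph is not needed. The paper's blocking argument is in the proof of Theorem~\ref{thm:rate}, and the paper bounds the cross term exactly by the crude product $\|\Phi_{i-1}-G^*\|\,\|\Delta_{i-1}\|\le C(\omega)\,i^{-a+2\varepsilon}$.
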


\begin{remark}[Online 2SLS, other weights and the role of the preconditioner]\label{rem:2sls}
If $s_i$ is replaced by $z_i$ (Remark~\ref{rem:design}(iv)), then $W_n\to\bar W=\E[zz\tr]^{-1}$ and the averaged estimator is asymptotically equivalent to 2SLS-type GMM with weight $\bar W$; in the linear model this is the online 2SLS estimator of \citet{ChenEtAl2025}. Our proofs carry over after replacing $\Omega(\beta)$ by $\E[zz\tr]$ in the definition of $\mathcal W$ and $P^*$ by $P(G^*,\bar W)$, and the limit in Theorem~\ref{thm:fclt} then has the sandwich variance $\Sigma_{\bar W}$ of \eqref{eq:sandwich}. Note that in this case $H_n\to(G^{*\top}\bar WG^*)^{-1}\ne\Sigma_{\bar W}$ under heteroskedasticity, so the plug-in intervals of Corollary~\ref{cor:inference}(a) must use the sandwich $H_n\Phi_n\tr W_n\hat\Omega_nW_n\Phi_nH_n$ with a running estimate $\hat\Omega_n$ of $\Omega^*$ instead; the random-scaling intervals remain valid without change.

The efficiency of $\bar\beta_n$ is due to the limit of the weight matrix: replacing the preconditioner $H_{i-1}\Phi_{i-1}\tr W_{i-1}$ by $K_{i-1}\Phi_{i-1}\tr W_{i-1}$ with any invertible $K_{i-1}$ converging to an invertible limit $\bar K$ (for which the corresponding mean field is stable) leads to the same asymptotic variance after averaging \cite{PolyakJuditsky1992}. In particular the first-order version with $K_{i-1}=I$ is asymptotically equivalent to SGMM. The Gauss--Newton choice $K_{i-1}=H_{i-1}$ makes the linearised recursion $\beta_i-\beta^*\approx(1-\gamma_i)(\beta_{i-1}-\beta^*)$, with a contraction that does not depend on the scale or the conditioning of the problem; this makes the algorithm much less sensitive to $\gamma_0$ (Section~\ref{sec:sens}) and provides the variance estimate $H_n$.
\end{remark}

\begin{corollary}[Online inference]\label{cor:inference}
Under the assumptions of Theorem~\ref{thm:fclt}:
\begin{enumerate}[label=(\alph*),itemsep=2pt]
\item $H_n\to\Sigma^*$ almost surely, and for every $k\le\db$ and $\alpha\in(0,1)$ the interval $\bar\beta_{n,k}\pm z_{1-\alpha/2}\sqrt{H_{n,kk}/n}$ has asymptotic coverage $1-\alpha$.
\item (Random scaling.) Let $V_n=n^{-2}\sum_{s=1}^ns^2(\bar\beta_s-\bar\beta_n)(\bar\beta_s-\bar\beta_n)\tr$. Then
\[
\frac{\sqrt n(\bar\beta_{n,k}-\beta^*_k)}{\sqrt{V_{n,kk}}}\Rightarrow\frac{\mathbf b(1)}{\big(\int_0^1(\mathbf b(r)-r\mathbf b(1))^2\dd r\big)^{1/2}},
\]
with $\mathbf b$ a standard Brownian motion. The limit is pivotal and symmetric, with distribution tabulated by \citet{AbadirParuolo1997} and \citet{KieferVogelsangBunzel2000}; the two-sided $95\%$ critical value is $6.747$ \cite{LeeLiaoSeoShin2022}.
\end{enumerate}
\end{corollary}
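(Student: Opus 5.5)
Part (a) follows directly from the earlier results. Theorem~\ref{thm:as} already gives $H_n\to\Sigma^*$ almost surely, and Theorem~\ref{thm:fclt} gives $\sqrt n(\bar\beta_n-\beta^*)\Rightarrow N(0,\Sigma^*)$. I would note that $\sqrt n(\bar\beta_n-\beta^*)=\mathbb B_n(1)$, since $\bar\beta_n=n^{-1}\sum_{i\le n}\beta_i$. For each coordinate $k$, $\Sigma^*_{kk}>0$ because $\Sigma^*$ is positive definite ($G^*$ has full column rank and $\Omega^*$ is positive definite). Slutsky's lemma then gives $\sqrt n(\bar\beta_{n,k}-\beta^*_k)/\sqrt{H_{n,kk}}\Rightarrow N(0,1)$, so the coverage is asymptotically $1-\alpha$.

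For (b) I would write $V_n$ as a continuous functional of the partial-sum process $\mathbb B_n$ and apply the continuous mapping theorem to the functional convergence of Theorem~\ref{thm:fclt}. Since $s(\bar\beta_s-\beta^*)=\sqrt n\,\mathbb B_n(s/n)$, we have
\[
\frac{s}{\sqrt n}(\bar\beta_s-\bar\beta_n)=\mathbb B_n(s/n)-\frac sn\,\mathbb B_n(1).
\]
Hence
\[
V_n=\frac1n\sum_{s=1}^n\Big(\mathbb B_n(s/n)-\tfrac sn\mathbb B_n(1)\Big)\Big(\mathbb B_n(s/n)-\tfrac sn\mathbb B_n(1)\Big)\tr .
\]
This is a Riemann sum of $\int_0^1(\mathbb B_n(r)-r\mathbb B_n(1))(\cdots)\tr\dd r$. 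Because $\mathbb B_n$ is piecewise constant on $[(s-1)/n,s/n)$, the only discrepancy comes from the factor $r$ versus $s/n$. That discrepancy is bounded by $n^{-1}\sup_t\|\mathbb B_n(t)\|\cdot\|\mathbb B_n(1)\|$ times constants, and this is $o_P(1)$ by tightness of $\sup_t\|\mathbb B_n(t)\|$.

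The map $f\mapsto\big(f(1),\int_0^1(f(r)-rf(1))(f(r)-rf(1))\tr\dd r\big)$ is continuous on $D[0,1]^{\db}$ at every continuous $f$. The Skorokhod topology coincides with the uniform topology at continuous limits, and the limit $\Sigma^{*1/2}\mathbf B$ has continuous paths. The continuous mapping theorem therefore gives joint convergence of $(\sqrt n(\bar\beta_n-\beta^*),V_n)$ to $(\Sigma^{*1/2}\mathbf B(1),\Sigma^{*1/2}\int_0^1(\mathbf B(r)-r\mathbf B(1))(\cdots)\tr\dd r\,\Sigma^{*1/2})$.

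For coordinate $k$, the process $e_k\tr\Sigma^{*1/2}\mathbf B$ equals $\sqrt{\Sigma^*_{kk}}\,\mathbf b$ in law for a standard Brownian motion $\mathbf b$. The scale factor $\sqrt{\Sigma^*_{kk}}$ then cancels between the numerator and $\sqrt{V_{n,kk}}$.

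The only delicate point is that the ratio map must be continuous at the limit. This requires $\int_0^1(\mathbf b(r)-r\mathbf b(1))^2\dd r>0$ almost surely. It holds because the Brownian bridge is a.s.\ not identically zero: at any fixed $r\in(0,1)$, $\mathbf b(r)-r\mathbf b(1)$ is a nondegenerate Gaussian, so it is nonzero a.s., and by continuity of paths it is nonzero on a neighbourhood of $r$, making the integral positive. With this, the continuous mapping theorem yields the stated pivotal limit. Its symmetry follows from $\mathbf b\mapsto-\mathbf b$, and the tabulated critical values are quoted from the literature. I expect no other obstacle beyond the Riemann-sum approximation above, which is routine given Theorem~\ref{thm:fclt}.
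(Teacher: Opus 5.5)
Your proposal is correct and follows the paper's proof essentially step by step. For (a) you combine $H_n\to\Sigma^*$ with the central limit theorem via Slutsky; for (b) you write $V_n$ as a Riemann sum of the functional $f\mapsto\int_0^1(f(r)-rf(1))(f(r)-rf(1))\tr\dd r$ of $\mathbb B_n$, use its continuity at continuous paths and the continuous mapping theorem, cancel $\Sigma^*_{kk}$, and show the Brownian-bridge integral is almost surely positive.

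The only slip is minor. On $[(s-1)/n,s/n)$ the process equals $\mathbb B_n((s-1)/n)$, not $\mathbb B_n(s/n)$. You can fix this either by pairing the $s$th summand with $[s/n,(s+1)/n)$, since the $s=0$ and $s=n$ summands vanish, which makes your ``only $r$ versus $s/n$'' claim exact. Alternatively, as the paper does, also bound the jump $n^{-1/2}\|\delta_s\|\le n^{-1/2}D_\Theta$.
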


\begin{theorem}[Online overidentification test]\label{thm:J}
Suppose the assumptions of Theorem~\ref{thm:fclt} hold and $\dg>\db$. Then $\sqrt n\big(\hat g_n-\bar g_n(\bar\beta_n)\big)\to0$ almost surely, and $\mathcal J_n\Rightarrow\chi^2_{\dg-\db}$.
\end{theorem}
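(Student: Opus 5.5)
The plan has two parts, following the statement: first the almost sure linearisation bound, then the distributional limit.

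\emph{Linearisation.} Write $\bar g_n(\bar\beta_n)=n^{-1}\sum_i g(X_i,\bar\beta_n)$. For each $i$, the mean value theorem on the convex set $\Theta$ together with the Lipschitz bound on $G$ in Assumption~\ref{ass:smooth} gives
\[
\big\|g(X_i,\bar\beta_n)-g(X_i,\bar\beta_{i-1})-G(X_i,\bar\beta_{i-1})(\bar\beta_n-\bar\beta_{i-1})\big\|\le \tfrac12 L(X_i)\|\bar\beta_n-\bar\beta_{i-1}\|^2 .
\]
Hence
\[
\sqrt n\,\|\hat g_n-\bar g_n(\bar\beta_n)\|\le n^{-1/2}\sum_{i\le n}L(X_i)\big(\|\bar\beta_n-\beta^*\|^2+\|\bar\beta_{i-1}-\beta^*\|^2\big).
\]
By Theorem~\ref{thm:rate}, $\|\bar\beta_i-\beta^*\|^2=O(i^{-a+2\varepsilon})$ almost surely. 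Since $a>1/2$, we may choose $\varepsilon>(1-a)/\kappa$ small enough that $a-2\varepsilon>1/2$; this is possible because $(1-a)/\kappa\le(1-a)/4<(a-1/2)/2$ as soon as $a>1/2$, using $\kappa\ge4$.

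\emph{Bounding the two sums.} For the first term, $n^{-1/2}\|\bar\beta_n-\beta^*\|^2\sum_{i\le n} L(X_i)=O(n^{1/2-a+2\varepsilon})\to0$ by the strong law, since $\E[L(X)]<\infty$. For the second term, $n^{-1/2}\sum_i L(X_i)\,i^{-a+2\varepsilon}\to0$ follows from Kronecker's lemma. Indeed, $\sum_i L(X_i)i^{-a+2\varepsilon}=O(n^{1-a+2\varepsilon})$ almost surely, by summation by parts against the strong law partial sums. Dividing by $\sqrt n$ then gives $O(n^{1/2-a+2\varepsilon})\to0$. This proves the first claim of Theorem~\ref{thm:J}.

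\emph{Distributional limit.} Set $M_n=W_n-W_n\Phi_nH_n\Phi_n\tr W_n$. By Theorem~\ref{thm:as}, $M_n\to M^*=\Omega^{*-1}-\Omega^{*-1}G^*\Sigma^*G^{*\top}\Omega^{*-1}$ almost surely, and $M^*G^*=0$. Next, expand
\[
\sqrt n\,\bar g_n(\bar\beta_n)=\sqrt n\,\bar g_n(\beta^*)+\Big(n^{-1}\sum_i G(X_i,\tilde\beta)\Big)\sqrt n(\bar\beta_n-\beta^*),
\]
which is done row-wise with intermediate points. The averaged Jacobian converges to $G^*$ almost surely by the uniform strong law together with $\bar\beta_n\to\beta^*$. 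Theorem~\ref{thm:fclt} gives $\sqrt n(\bar\beta_n-\beta^*)=-P^*\sqrt n\,\bar g_n(\beta^*)+o_P(1)$. Therefore
\[
\sqrt n\,\hat g_n=(I-G^*P^*)\,\sqrt n\,\bar g_n(\beta^*)+o_P(1),
\]
and by the i.i.d.\ central limit theorem $\sqrt n\,\bar g_n(\beta^*)\Rightarrow\xi\sim N(0,\Omega^*)$.

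\emph{Identifying the $\chi^2$ law.} Using $M^*G^*=0$, we get $\mathcal J_n\Rightarrow\xi\tr(I-G^*P^*)\tr M^*(I-G^*P^*)\xi=\xi\tr M^*\xi$. Indeed $M^*(I-G^*P^*)=M^*$, and by symmetry of $M^*$ also $(I-G^*P^*)\tr M^*=M^*$. Write $\xi=\Omega^{*1/2}Z$ with $Z$ standard normal. Then $\xi\tr M^*\xi=Z\tr(I-Q)Z$, where $Q=\Omega^{*-1/2}G^*\Sigma^*G^{*\top}\Omega^{*-1/2}$ is the orthogonal projection onto the $\db$-dimensional column space of $\Omega^{*-1/2}G^*$. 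Hence the limit is $\chi^2_{\dg-\db}$.

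The unprojected statistic would give $\xi\tr(I-G^*P^*)\tr\Omega^{*-1}(I-G^*P^*)\xi$, which equals the same quantity, consistent with the remark in the text. The main obstacle is the linearisation bound, specifically arranging that the squared almost sure rate of $\bar\beta_i$ beats $\sqrt n$ when summed. This relies on Theorem~\ref{thm:rate} giving exponent strictly above $1/2$, which is why $\varepsilon$ must be chosen small. If that rate were insufficient, I would fall back to proving convergence in probability using second moments of $\|\bar\beta_i-\beta^*\|$.
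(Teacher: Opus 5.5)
Your argument follows the paper's proof essentially step by step. The shared steps are the second-order bound $\tfrac12L(x)\|b-b'\|^2$, the split into the $\bar\beta_n$ and $\bar\beta_{i-1}$ terms controlled by the strong law and summation by parts, the mean-value expansion of $\bar g_n$, the representation \eqref{eq:bahadur} at $t=1$, and the projection $Q$. Your shortcut $M^*G^*=0$ is a clean way to finish. The paper uses the integral-form $\check G_n$, bounded through $L(X_i)$ and the strong law. That avoids the uniform law and the row-wise intermediate points, but your route also works.

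One step is wrongly justified, and it is exactly where the extra hypothesis of the theorem is used. You need $\varepsilon$ with $(1-a)/\kappa<\varepsilon<(2a-1)/4$. You argue this follows from $\kappa\ge4$ because $(1-a)/4<(a-1/2)/2$ for all $a>1/2$. That inequality is equivalent to $a>2/3$, so it fails on $(1/2,2/3]$. For example, with $a=0.6$ and $\kappa=4$ one gets $(1-a)/\kappa=0.1>0.05=(2a-1)/4$. Then no admissible $\varepsilon$ exists, and your bound $O(n^{1/2-a+2\varepsilon})$ does not tend to zero.

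The fix is to invoke the hypothesis of Theorem~\ref{thm:fclt}, $\kappa>\kappa(a)\ge4(1-a)/(2a-1)$, which is equivalent to $(1-a)/\kappa<(2a-1)/4$. This is the paper's choice \eqref{eq:eps}. With that citation your proof is complete. Your proposed fallback to convergence in probability would not have given the almost sure claim anyway.
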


When the moment restrictions \eqref{eq:moment} have no solution, the theorems do not apply. Heuristically, if $\bar\beta_n$ still converges to some pseudo-true value $\beta^\dagger$ with $m(\beta^\dagger)\ne0$ (which requires a separate argument, since the limiting preconditioner then depends on $\beta^\dagger$), then $\hat g_n\to m(\beta^\dagger)\neq0$, so that $\mathcal J_n/n$ has a positive limit and $\mathcal J_n$ grows linearly in $n$; this is what we observe in Section~\ref{sec:numJ}.

%=====================================================================
\section{Dependent data}\label{sec:dependent}
%=====================================================================

Many GMM applications involve time series. Two features of Algorithm~\ref{alg:sgmm} rely on independence: the efficient weight is the inverse of the long-run variance of the moments, not of $\E[gg\tr]$, and the noise $g(X_i,\beta_{i-1})-m(\beta_{i-1})$ is a martingale difference only because $X_i$ is independent of the past iterates. We show that, for $L$-dependent data, only the weight has to be changed.

\begin{assumption}[$L$-dependent data]\label{ass:ldep}
The sequence $(X_i)_{i>-n_0}$ (pilot sample and stream) is strictly stationary and $L$-dependent for some known $L\ge0$: $\sigma(X_j,\,j\le i)$ is independent of $\sigma(X_j,\,j\ge i+L+1)$ for every $i$. The long-run second-moment matrix $\Omega_{\rm LR}(\beta)=\sum_{h=-L}^{L}\E\big[g(X_0,\beta)g(X_h,\beta)\tr\big]$ satisfies $\inf_{\beta\in\Theta}\lambda_{\min}(\Omega_{\rm LR}(\beta))>0$.
\end{assumption}

\paragraph{Long-run weights.} Replace line~6 of Algorithm~\ref{alg:sgmm} by the recursion
\begin{equation}\label{eq:lrv}
S_i=S_{i-1}+\frac{1}{n_0+i}\big(y_i-S_{i-1}\big),\qquad y_i=s_is_i\tr+\sum_{h=1}^{L}k_h\big(s_is_{i-h}\tr+s_{i-h}s_i\tr\big),
\end{equation}
with $k_h=1$, $s_j=g(X_j,\beta_0)$ for the pilot indices $-n_0<j\le0$, and an $\F_0$-measurable positive definite $S_0$ (for instance the pilot version of \eqref{eq:lrv} plus a ridge $\eta_0I$). Set $W_i=S_i^{-1}$ if $S_i$ is positive definite and $W_i=W_{i-1}$ otherwise. With $\check s_i=\sum_{h=1}^Lk_hs_{i-h}$ we have $y_i=(s_i+\check s_i)(s_i+\check s_i)\tr-\check s_i\check s_i\tr$. Hence, when $S_{i-1}$ and $S_i$ are both positive definite, $W_i$ is obtained from $W_{i-1}$ by two Sherman--Morrison steps at cost $O(\dg^2+L\dg)$: an update with $s_i+\check s_i$, after which the matrix is positive definite, and a downdate with $\check s_i$, after which it equals $(n_0+i)S_i$. In the remaining cases $W_i$ is computed by direct inversion; by Lemma~\ref{lem:precondLR}(a) below these cases occur only finitely often. For data that are only mixing we use Bartlett weights $k_h=1-h/(L+1)$ with a bandwidth $L$ \cite{NeweyWest1987}. They can be implemented as $S_i=S_{i-1}+(n_0+i)^{-1}\big(c_ic_i\tr/(L+1)-S_{i-1}\big)$ with the moving sum $c_i=s_i+\dots+s_{i-L}$. This differs from \eqref{eq:lrv} only by edge terms of total size $O(L^2/i)$, is positive definite by construction, and requires a single Sherman--Morrison step.

Let
\[
\kappa_L(a)=\max\Big\{\kappa(a),\ 6,\ \frac{2}{a-1/2}\Big\},
\]
which equals $8$ for $a=3/4$.

\begin{theorem}[$L$-dependent data]\label{thm:dependent}
Replace the independence condition in Assumption~\ref{ass:data} by Assumption~\ref{ass:ldep}, the conditions on $\Theta$, $n_0$, $\beta_0$, $\Phi_0$ and $S_0$ remaining in force. Define $\lambda_0$, $\Lambda$ and $\mathcal W$ with $\Omega_{\rm LR}$ in place of $\Omega$, and use the weights \eqref{eq:lrv}. Under Assumptions~\ref{ass:smooth}--\ref{ass:steps} and $\kappa>2/(2a-1)$, the conclusions of Theorems~\ref{thm:as} and~\ref{thm:rate} hold with $\Omega^*$ replaced by $\Omega^*_{\rm LR}=\Omega_{\rm LR}(\beta^*)$, the latter for every $\varepsilon>\max\{(1-a)/\kappa,\,2/\kappa-a/2,\,3/\kappa-1+a/2\}$. If moreover $\kappa>\kappa_L(a)$, the conclusions of Theorems~\ref{thm:fclt} and~\ref{thm:J} and of Corollary~\ref{cor:inference} hold with $\Omega^*$ replaced by $\Omega^*_{\rm LR}$. In particular $\sqrt n(\bar\beta_n-\beta^*)\Rightarrow N\big(0,(G^{*\top}\Omega_{\rm LR}^{*-1}G^*)^{-1}\big)$, which is the efficient GMM variance under $L$-dependence, $H_n$ estimates it consistently, and $\mathcal J_n\Rightarrow\chi^2_{\dg-\db}$.
\end{theorem}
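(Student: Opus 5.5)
The plan is to reduce the $L$-dependent case to the independent-data proofs of Theorems~\ref{thm:as}--\ref{thm:J}. The only ingredient of those proofs that uses independence is the martingale-difference structure of the noise. The main device is a delay decomposition. Since $\beta_{i-1}$, $\bar\beta_{i-1}$ and $P_{i-1}$ are $\F_{i-1}$-measurable, but $X_i$ is not independent of $\F_{i-1}$, write
\[
g(X_i,\beta_{i-1})=m(\beta_{i-1})+\xi_i+\big(g(X_i,\beta_{i-1})-g(X_i,\beta_{i-L-1})\big)-\big(m(\beta_{i-1})-m(\beta_{i-L-1})\big),
\]
with $\xi_i=g(X_i,\beta_{i-L-1})-m(\beta_{i-L-1})$. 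By $L$-dependence, $X_i$ is independent of $\F_{i-L-1}$, so $\E[\xi_i\mid\F_{i-L-1}]=0$. Then $\xi_i$ splits into $L+1$ interleaved martingale-difference sequences, one for each residue class mod $L+1$, each adapted to a subsampled filtration. Apply the same split to $P_{i-1}\xi_i$, after replacing $P_{i-1}$ by $P_{i-L-1}$ at the cost of another correction. The correction terms are bounded by $B(X_i)\|\beta_{i-1}-\beta_{i-L-1}\|\le B(X_i)\sum_{k\le L}\gamma_{i-k}\bar p_{i-k}B(X_{i-k})$, and by the analogous bound for the matrix increments $\|P_{i-1}-P_{i-L-1}\|=O(i^{-1+\varepsilon})$. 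They are therefore of order $\gamma_i^2$ times products of up to three $B$'s. This is where the stronger moment requirements $\kappa>2/(2a-1)$ and $\kappa>\kappa_L(a)$ come from: summability of $\gamma_i\cdot\gamma_i\bar p_i B^2$-type terms needs control of $\max_{j\le i}B(X_j)=o(i^{1/\kappa})$ almost surely.

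Next I would establish the analogue of Lemma~\ref{lem:precond} for the long-run weights (the Lemma~\ref{lem:precondLR} cited in the text). The recursion \eqref{eq:lrv} gives $S_n=(n_0+n)^{-1}(n_0S_0+\sum_i y_i)$, where $y_i$ involves $s_{i-h}=g(X_{i-h},\bar\beta_{i-h-1})$. Writing $\E[y_i\mid\F_{i-2L-1}]$ as $\Omega_{\rm LR}$ evaluated at lagged averages plus the Lipschitz error, a strong law for $2L+1$-interleaved martingales gives $S_n-\Omega_{\rm LR}(\bar\beta_n)\to0$ almost surely, provided $\bar\beta_n$ converges; this uses $\E B^{2}<\infty$ only. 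Hence $W_n\in\mathcal W$ eventually, $S_i$ is positive definite from some point on, and the direct-inversion cases are finite. The Jacobian average $\Phi_n$ is handled the same way, with $\dist(\Phi_n,\mathcal G)\to0$. With these in hand, the Robbins--Siegmund argument behind Theorem~\ref{thm:as} goes through unchanged. I would take $V_i=\|\beta_i-\beta^*\|^2$, use Assumption~\ref{ass:mono} for the drift, use the summability of $\sum\gamma_i\langle\beta_{i-1}-\beta^*,P\xi_i\rangle$ per residue class, and use the summability of the correction terms shown above. Consistency of $\Phi_n,W_n,H_n,P_n$ follows. For the rates, I would repeat the proof of Theorem~\ref{thm:rate} with the extra terms. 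The two additional thresholds $2/\kappa-a/2$ and $3/\kappa-1+a/2$ in $\varepsilon$ arise from the corrections with two and three $B$ factors, respectively.

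For the FCLT, I would follow the proof of Theorem~\ref{thm:fclt}: linearise around $\beta^*$, use the Polyak--Juditsky summation-by-parts representation, and reduce \eqref{eq:bahadur} to the partial sums $n^{-1/2}\sum_{i\le nt}g(X_i,\beta^*)$. The remainders are the linearisation error, the $P_{i-1}-P^*$ error, and the new delay corrections, each of which must be $o_P(1)$ uniformly in $t$. For the delay corrections I would show that $n^{-1/2}\sum_i\|\text{correction}_i\|\to0$ using the rates just proved, which is where $\kappa>\kappa_L(a)$ (e.g.\ $8$ at $a=3/4$) enters. The limit then follows from the FCLT for stationary $L$-dependent sequences, whose long-run variance is $\Omega^*_{\rm LR}$, giving the variance $P^*\Omega^*_{\rm LR}P^{*\top}=(G^{*\top}\Omega^{*-1}_{\rm LR}G^*)^{-1}$. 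Corollary~\ref{cor:inference} and Theorem~\ref{thm:J} then follow verbatim, since their proofs use only \eqref{eq:bahadur}, the almost sure convergence of $W_n,\Phi_n,H_n$, and a uniform law for $n^{-1}\sum G(X_i,\bar\beta_{i-1})$, which holds by the same interleaving argument. The expansion $\sqrt n(\hat g_n-\bar g_n(\bar\beta_n))\to0$ relies on the Lipschitz bound $L(X_i)\|\bar\beta_n-\bar\beta_{i-1}\|^2$ and the rates, and it is unaffected by dependence.

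I expect the main obstacle to be the FCLT remainder from the delay corrections, specifically bounding the cross term in which the correction to $P_{i-1}$ multiplies the noise. This requires careful exponent bookkeeping against the truncation level $\bar p_i$ and the maxima of $B$. I would first try to remove $\bar p_i$ using the finiteness of the truncation time from the almost sure part. I would then bound the corrections by $\gamma_i B(X_i)\sum_k B(X_{i-k})\,\|\beta_{i-k-1}-\beta^*\|+\gamma_i^2(\cdots)$, applying H\"older's inequality with exponent $\kappa/3$.
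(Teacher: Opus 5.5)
Your architecture is the paper's: delay by $D=L+1$, split the delayed noise into $L+1$ interleaved martingale difference sequences, and bound the corrections pathwise through $B(X_i)\le i^{1/\kappa}$. The two extra thresholds come from terms of order $\gamma_i i^{2/\kappa}$ and $i^{3/\kappa-1}$, and the proof finishes with an invariance principle for the $L$-dependent sums $\sum g(X_i,\beta^*)$.

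However, the almost sure part, as you set it up, is circular. You get $W_n\in\mathcal W$ from $S_n-\Omega_{\rm LR}(\bar\beta_n)\to0$ ``provided $\bar\beta_n$ converges''. You then use $W_n\in\mathcal W$, $\dist(\Phi_n,\mathcal G)\le\delta_\Phi$ and an inactive truncation to invoke Assumption~\ref{ass:mono} and prove that $\beta_n$ converges. The eventual regime must be established before, and independently of, convergence of the iterates. The paper does this in Lemma~\ref{lem:precondLR}(a), exactly as in Lemma~\ref{lem:precond}(a). It compares $S_i$ not with $\Omega_{\rm LR}(\bar\beta_i)$ but with the Ces\`aro mean $i^{-1}\sum_{j\le i}\Omega_{\rm LR}(\bar\beta_{j-1})$, using the delayed martingale strong law with lag $K=2L+1$ and the slow motion \eqref{eq:barstep} of $\bar\beta$. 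The eigenvalues of this mean lie in $[\lambda_0,\Lambda]$ for every path in $\Theta$, by concavity of $\lambda_{\min}$ and convexity of $\lambda_{\max}$. Likewise $i^{-1}\sum_{j\le i}\bar G(\bar\beta_{j-1})\in\mathcal G$ because $\mathcal G$ is a convex hull. This yields a finite random time after which $\Phi_i\in\mathcal N$, $W_i\in\mathcal W$, the truncation is off and $\|P_i\|\le p_{\max}$, whatever $\bar\beta_n$ does. You need this not only for the drift but also for the corrections. It is what replaces $\bar p_i$ by $p_{\max}$ in your bounds. It also underlies $\|P_{i-1}-P_{i-D}\|=O(i^{2/\kappa-1})$, which comes from Lemma~\ref{lem:precondLR}(d) and the Lipschitz estimate \eqref{eq:Plip}, valid only on $\mathcal N\times\mathcal W$.

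Two further repairs are needed in the convergence step. First, $\langle\delta_{i-1},P_{i-D}\xi_i\rangle$ is not a delayed martingale difference, because $\delta_{i-1}$ depends on $X_{i-L},\dots,X_{i-1}$, which are dependent on $X_i$. You must also replace $\delta_{i-1}$ by $\delta_{i-D}$ and move $\langle\delta_{i-1}-\delta_{i-D},P_{i-1}g_i\rangle=O(\gamma_i i^{2/\kappa})$ into the corrections; this is the paper's $\chi_i=\langle\delta_{i-D},P_{i-D}e_i\rangle$. Second, (T1) does not go through ``unchanged'', since no inequality for $\E[V_i\mid\F_{i-1}]$ is available. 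The paper instead shows that $\sum_i\gamma_i\chi_i$ converges almost surely (Lemma~\ref{lem:delay}(c)). It then argues pathwise with $U_i=V_i-2\tau_i+\bar b_i$, where $\tau_i=\sum_{j>i}\gamma_j\chi_j$, and $U_i$ is non-increasing.

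Finally, the FCLT cross term you expect to be the main obstacle is harmless. Pathwise, $\|(P_{i-1}-P_{i-D})g_i\|\le Ci^{3/\kappa-1}$, so its contribution is $O(n^{3/\kappa-1/2})\to0$ for $\kappa>6$. Your proposed refinement with factors $\|\beta_{i-k-1}-\beta^*\|$ is not available anyway. Since $g(X_l,\beta^*)\neq0$, the increments $\beta_l-\beta_{l-1}$ are of order $\gamma_l B(X_l)$, not $\gamma_l B(X_l)\|\delta_{l-1}\|$. Also keep the two correction types separate. The $P$-increment term is $\gamma_i\, i^{-1}$ times three $B$'s. Lumping it with the other term as $\gamma_i^2B^3$ would give the condition $\kappa>3/(a-1/2)$ in the FCLT instead of $\kappa_L(a)$.
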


The proof is in Section~\ref{sec:proofdep}. The difficulty is that $g(X_i,\beta_{i-1})-m(\beta_{i-1})$ is no longer a martingale difference, because $X_i$ depends on $X_{i-1},\dots,X_{i-L}$, which enter $\beta_{i-1}$. We compare the recursion with one whose evaluation point and preconditioner are delayed by $L+1$ steps. The delayed noise $g(X_i,\beta_{i-L-1})-m(\beta_{i-L-1})$ splits into $L+1$ martingale difference sequences, and the cost of the delay is a remainder of order $\gamma_ii^{2/\kappa}+i^{3/\kappa-1}$. Controlling this remainder requires the stronger moment conditions.

The parameter step is thus the same as for independent data. A delay in the evaluation point appears only in the proof, and the price of dependence is a stronger moment condition. The assumption of $L$-dependence with known $L$ is restrictive. For geometrically mixing data, the same algorithm with Bartlett weights and a bandwidth that grows slowly with the sample size is the natural extension, but we have not worked out its theory. Section~\ref{sec:sv} applies it to the stochastic volatility model.

%=====================================================================
\section{Proofs}\label{sec:proofs}
%=====================================================================

Throughout, $C$ denotes a finite deterministic constant whose value may change from line to line; it may depend on the model, on $\Theta$ and on the tuning constants, but not on $i$ or $n$. Finite random constants are written $C(\omega)$. Matrix norms are operator norms. We write $\delta_i=\beta_i-\beta^*$ and $D_\Theta=\operatorname{diam}\Theta$. Recall that $\bar\beta_n=n^{-1}\sum_{i=1}^n\beta_i$, so that every $\bar\beta_i$ lies in the convex set $\Theta$ and
\begin{equation}\label{eq:barstep}
\|\bar\beta_i-\bar\beta_{i-1}\|=\frac{\|\beta_i-\bar\beta_{i-1}\|}{i}\le\frac{D_\Theta}{i}.
\end{equation}

\subsection{Tools and auxiliary lemmas}\label{sec:tools}
We use the following standard results.
\begin{itemize}[leftmargin=*,itemsep=2pt]
\item[(T1)] \emph{Robbins--Siegmund lemma} \cite{RobbinsSiegmund1971}. Let $V_i,b_i,c_i\ge0$ be adapted, with $\E[V_i\mid\F_{i-1}]\le V_{i-1}+b_{i-1}-c_{i-1}$ and $\sum_ib_i<\infty$ almost surely. Then $V_i$ converges almost surely to a finite limit and $\sum_ic_i<\infty$ almost surely.
\item[(T2)] \emph{Martingale strong law.} Let $(\Delta_j)$ be a martingale difference sequence with $\sup_j\E\|\Delta_j\|^2<\infty$. Then $j^{-1/2-\varepsilon'}\sum_{k\le j}\Delta_k\to0$ almost surely for every $\varepsilon'>0$. Indeed, the martingale $\sum_kk^{-1/2-\varepsilon'}\Delta_k$ is bounded in $L^2$ and hence converges almost surely, and Kronecker's lemma applies \cite[Theorem~2.18]{HallHeyde1980}.
\item[(T3)] \emph{Burkholder's inequality} \cite[Theorem~2.10]{HallHeyde1980}, applied coordinatewise. For a martingale difference sequence $(\Delta_k)$ in $\R^{d}$, deterministic weights $w_k$ and $p\ge2$, $\E\|\sum_{k\le t}w_k\Delta_k\|^p\le C_{p,d}\E\big(\sum_{k\le t}w_k^2\|\Delta_k\|^2\big)^{p/2}$. By Jensen's inequality with the probability weights $w_k^2/\sum_lw_l^2$ and the convex function $x\mapsto x^{p/2}$,
\begin{equation}\label{eq:burkholder}
\E\Big\|\sum_{k\le t}w_k\Delta_k\Big\|^p\le C_{p,d}\Big(\sum_{k\le t}w_k^2\Big)^{p/2}\sup_k\E\|\Delta_k\|^p .
\end{equation}
Together with Doob's maximal inequality for the non-negative submartingale $\|\sum_{k\le l}w_k\Delta_k\|$, the left-hand side of \eqref{eq:burkholder} can be replaced by $\E\max_{l\le t}\|\sum_{k\le l}w_k\Delta_k\|^p$, at the price of the factor $(p/(p-1))^p$.
\item[(T4)] \emph{Lenglart's inequality} \cite{Lenglart1977}. Let $M_k$ be a square-integrable martingale with predictable quadratic variation $\langle M\rangle_k$, and let $\eta,\theta>0$. Then $\Prob(\max_{k\le n}\|M_k\|\ge\eta)\le\theta/\eta^2+\Prob(\operatorname{tr}\langle M\rangle_n\ge\theta)$. Hence $\max_{k\le n}\|M_k\|\to0$ in probability whenever $\operatorname{tr}\langle M\rangle_n\to0$ in probability.
\item[(T5)] \emph{Growth of an envelope.} Let $Y_1,Y_2,\dots$ be identically distributed with $\E|Y_1|^\kappa<\infty$. Then $\sum_i\Prob(|Y_i|>i^{1/\kappa})=\sum_i\Prob(|Y_1|^\kappa>i)\le\E|Y_1|^\kappa<\infty$. By the first Borel--Cantelli lemma, almost surely $|Y_i|\le i^{1/\kappa}$ for all $i$ beyond a finite random index. No independence is needed.
\item[(T6)] \emph{Weighted sums.} Let $y_j\ge0$ with $T_i=\sum_{j\le i}y_j\le C(\omega)i$ for all $i$. Then
\[
\sum_{j\le n}y_jj^{-q}\le\frac{C(\omega)}{1-q}\,n^{1-q}\quad(0\le q<1),\qquad\sum_{j\le n}\frac{y_j}{j}\le C(\omega)(2+\log n).
\]
By Abel summation, $\sum_{j\le n}y_jj^{-q}=T_nn^{-q}+\sum_{j<n}T_j\big(j^{-q}-(j+1)^{-q}\big)$, and $0\le j^{-q}-(j+1)^{-q}\le qj^{-q-1}$. The sum is therefore at most $C(\omega)\big(n^{1-q}+q\sum_{j<n}j^{-q}\big)\le C(\omega)n^{1-q}/(1-q)$. The case $q=1$ is identical, with $\sum_{j<n}j^{-1}\le1+\log n$.
\end{itemize}

\begin{lemma}\label{lem:weights}
Let $\gamma_i$ be as in Assumption~\ref{ass:steps} and $\Gamma_{k,i}=\prod_{j=k+1}^i(1-\gamma_j)$, with $\Gamma_{i,i}=1$. There is $i_1$ such that $\gamma_j\le1/2$ for $j\ge i_1$. For $i\ge k\ge i_1$, $0\le\Gamma_{k,i}\le\exp(-\sum_{j=k+1}^i\gamma_j)$ and $\sum_{k=i_1}^i\gamma_k^2\Gamma_{k,i}^2\le C\gamma_i$.
\end{lemma}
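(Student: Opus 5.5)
The plan is to prove the three assertions in order, working only with elementary properties of the sequence $\gamma_j=\gamma_0(j+i_0)^{-a}$, $a\in(1/2,1)$.

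\emph{Choice of $i_1$.} Since $\gamma_j\to0$ monotonically, there is an index $i_1$ with $\gamma_j\le1/2$ for every $j\ge i_1$.

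\emph{Bounds on $\Gamma_{k,i}$.} For $j>k\ge i_1$ each factor satisfies $1-\gamma_j\in[1/2,1)$, so $\Gamma_{k,i}\ge0$. The inequality $1-x\le e^{-x}$ then gives $\Gamma_{k,i}\le\exp(-\sum_{j=k+1}^i\gamma_j)$.

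\emph{The sum $\sum_k\gamma_k^2\Gamma_{k,i}^2$.} I would use a telescoping argument, which avoids computing integrals. Since $1-\gamma_j\le1$, we have $\Gamma_{k,i}^2\le\Gamma_{k,i}$, and it suffices to show $\sum_{k=i_1}^i\gamma_k^2\Gamma_{k,i}\le C\gamma_i$. From $\Gamma_{k-1,i}=(1-\gamma_k)\Gamma_{k,i}$ we get
\[
\gamma_k\Gamma_{k,i}=\Gamma_{k,i}-\Gamma_{k-1,i}.
\]
The claim therefore amounts to bounding the weighted sum $\sum_k\gamma_k(\Gamma_{k,i}-\Gamma_{k-1,i})$ with nonnegative increments. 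Summation by parts gives
\[
\sum_{k=i_1}^i\gamma_k(\Gamma_{k,i}-\Gamma_{k-1,i})=\gamma_i\Gamma_{i,i}-\gamma_{i_1}\Gamma_{i_1-1,i}+\sum_{k=i_1}^{i-1}(\gamma_k-\gamma_{k+1})\Gamma_{k,i}.
\]
The boundary term $\Gamma_{i_1-1,i}$ is defined with the factor $1-\gamma_{i_1}\ge1/2$, so it is nonnegative and the middle term can be dropped, leaving
\[
\sum_{k=i_1}^i\gamma_k(\Gamma_{k,i}-\Gamma_{k-1,i})\le\gamma_i+\sum_{k=i_1}^{i-1}(\gamma_k-\gamma_{k+1})\Gamma_{k,i}.
\]

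\emph{The main obstacle: the remaining sum.} The sum $\sum_k(\gamma_k-\gamma_{k+1})\Gamma_{k,i}$ is where the real work lies. The key estimate is
\[
\gamma_k-\gamma_{k+1}\le\frac{a\gamma_k}{k+i_0}\le\varepsilon_k\gamma_k^2,\qquad \varepsilon_k=\frac{a(k+i_0)^{a-1}}{\gamma_0}\to0,
\]
which holds because $a<1$. Choose $i_2\ge i_1$ such that $\varepsilon_k\le1/2$ for $k\ge i_2$. Then the tail $k\ge i_2$ of the remaining sum is at most half of $S_i:=\sum_k\gamma_k^2\Gamma_{k,i}$. Absorbing this half into the left-hand side yields
\[
S_i\le2\gamma_i+2\sum_{k=i_1}^{i_2}\gamma_k\Gamma_{k,i}.
\]

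\emph{The finitely many early terms.} Each term with $k<i_2$ satisfies
\[
\Gamma_{k,i}\le\exp\Big(-\sum_{j=i_2}^i\gamma_j\Big)\le C\exp(-c\,i^{1-a}).
\]
This decays faster than any power of $i$, so in particular it is bounded by $C\gamma_i$. Combining the two parts gives $\sum_{k=i_1}^i\gamma_k^2\Gamma_{k,i}\le C\gamma_i$, and hence $\sum_{k=i_1}^i\gamma_k^2\Gamma_{k,i}^2\le C\gamma_i$, with $C$ depending only on $\gamma_0$, $i_0$ and $a$.
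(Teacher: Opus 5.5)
Your proof is correct, and it reaches the bound by a different route from the paper. The first two assertions are handled as in the paper.

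For the sum, the paper works with $u_i=\sum_{k=i_1}^i\gamma_k^2\Gamma_{k,i}^2$. It uses the recursion $u_i=(1-\gamma_i)^2u_{i-1}+\gamma_i^2\le(1-\gamma_i)u_{i-1}+\gamma_i^2$ and proves $u_i\le C\gamma_i$ by induction. The inductive step relies on $\gamma_{i-1}-\gamma_i-\gamma_i\gamma_{i-1}\le-\gamma_i^2/2$ beyond some $i_2$, and the finitely many indices $i\le i_2$ are absorbed simply by taking $C$ large.

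You instead bound the unsquared sum directly. The identity $\gamma_k\Gamma_{k,i}=\Gamma_{k,i}-\Gamma_{k-1,i}$ and Abel summation reduce it to $\gamma_i$ plus $\sum_k(\gamma_k-\gamma_{k+1})\Gamma_{k,i}$, and you absorb the tail of that sum into the left-hand side. Both arguments rest on the same estimate $\gamma_k-\gamma_{k+1}=o(\gamma_k^2)$, which is where $a<1$ enters.

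The cost of your non-inductive version is that the early indices $i_1\le k<i_2$ are not absorbed automatically. You need the quantitative divergence $\sum_{j\le i}\gamma_j\ge c\,i^{1-a}-C$ to get $\Gamma_{k,i}\le C\exp(-c\,i^{1-a})=O(\gamma_i)$ for these $k$; the induction never needs this. In return, your argument is an explicit computation and yields $\sum_k\gamma_k^2\Gamma_{k,i}\le C\gamma_i$ directly, which the paper's recursion also implies.

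One minor point: after absorption the early sum runs to $i_2-1$, not $i_2$. Including the extra nonnegative term is harmless, and for $i<i_2$ the tail is empty and the bound is trivial.
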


\begin{proof}
Since $\gamma_j\to0$, there is $i_1$ with $\gamma_j\le1/2$ for $j\ge i_1$. Every factor of $\Gamma_{k,i}$ then lies in $[1/2,1)$, so $\Gamma_{k,i}\ge0$, and $1-x\le e^{-x}$ gives the first bound. For the second, let $u_i=\sum_{k=i_1}^i\gamma_k^2\Gamma_{k,i}^2$. Since $\Gamma_{k,i}=(1-\gamma_i)\Gamma_{k,i-1}$ for $k<i$ and $\Gamma_{i,i}=1$,
\[
u_i=(1-\gamma_i)^2u_{i-1}+\gamma_i^2\le(1-\gamma_i)u_{i-1}+\gamma_i^2 .
\]
By the mean value theorem, $\gamma_{i-1}-\gamma_i=\gamma_0\big((i+i_0-1)^{-a}-(i+i_0)^{-a}\big)\le a\gamma_{i-1}/(i+i_0-1)$. Since $a<1$, we have $\gamma_{i-1}/\big((i+i_0-1)\gamma_i^2\big)\to0$. Hence $a\gamma_{i-1}/(i+i_0-1)\le\gamma_i^2/2$ for all $i$ larger than some $i_2\ge i_1$. Using $\gamma_i\gamma_{i-1}\ge\gamma_i^2$, this gives
\[
\gamma_{i-1}-\gamma_i-\gamma_i\gamma_{i-1}\le-\gamma_i^2/2\qquad(i>i_2).
\]
Choose $C\ge2$ with $u_i\le C\gamma_i$ for $i_1\le i\le i_2$. If $i>i_2$ and $u_{i-1}\le C\gamma_{i-1}$, then
\[
u_i\le(1-\gamma_i)C\gamma_{i-1}+\gamma_i^2=C\gamma_i+C(\gamma_{i-1}-\gamma_i-\gamma_i\gamma_{i-1})+\gamma_i^2\le C\gamma_i-(C/2-1)\gamma_i^2\le C\gamma_i ,
\]
and the claim follows by induction.
\end{proof}

\begin{lemma}\label{lem:det}
Let $u_i\ge0$ satisfy $u_i\le(1-\gamma_i/2)u_{i-1}+\gamma_i\eta_iv_{i-1}$ for $i>i_3$, where $i_3$ is finite (possibly random), $\eta_i\to0$, and $v_i=i^{-q}$ with $0<q<1$. Then $u_i=o(v_i)$.
\end{lemma}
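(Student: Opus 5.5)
We normalise by $v_i$ and show that the ratio $w_i=u_i/v_i$ tends to zero. Dividing the recursion by $v_i$ gives, for $i>i_3$,
\[
w_i\le(1-\gamma_i/2)\frac{v_{i-1}}{v_i}\,w_{i-1}+\gamma_i\eta_i\frac{v_{i-1}}{v_i}.
\]
Since $v_{i-1}/v_i=(1-1/i)^{-q}=1+q/i+O(i^{-2})$ and $\gamma_i\asymp i^{-a}$ with $a<1$, we have $1/i=o(\gamma_i)$. Hence there is a deterministic $i_4$ such that $(1-\gamma_i/2)v_{i-1}/v_i\le1-\gamma_i/4$ and $v_{i-1}/v_i\le2$ for $i\ge i_4$. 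For $i>\max(i_3,i_4)$ this yields the scalar recursion
\[
w_i\le(1-\gamma_i/4)w_{i-1}+2\gamma_i\eta_i.
\]

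Next, fix $\epsilon>0$. Because $\eta_i\to0$, there is a (random, finite) $i_5\ge\max(i_3,i_4)$ with $2\eta_i\le\epsilon/4$ for $i>i_5$. Then $w_i-\epsilon\le(1-\gamma_i/4)(w_{i-1}-\epsilon)$ for $i>i_5$, since $-\epsilon+\gamma_i\epsilon/4+\gamma_i\epsilon/4\le-(1-\gamma_i/4)\epsilon+\gamma_i\epsilon/4\cdot$\ldots{} More carefully: $w_i\le(1-\gamma_i/4)w_{i-1}+\gamma_i\epsilon/4$ rearranges to $w_i-\epsilon\le(1-\gamma_i/4)(w_{i-1}-\epsilon)$. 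Iterating from $i_5$ gives
\[
(w_i-\epsilon)^+\le\prod_{j=i_5+1}^i(1-\gamma_j/4)\,(w_{i_5}-\epsilon)^+\le\exp\Big(-\tfrac14\sum_{j=i_5+1}^i\gamma_j\Big)(w_{i_5}-\epsilon)^+ .
\]
Here we use $\gamma_j\le1/2$ eventually, as in Lemma~\ref{lem:weights}, so that all factors are positive.

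Since $\sum_j\gamma_j=\infty$ (because $a<1$) and $w_{i_5}$ is finite, we get $\limsup_i w_i\le\epsilon$. As $\epsilon$ is arbitrary and $w_i\ge0$, $w_i\to0$, that is, $u_i=o(v_i)$.

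The only step needing care is the first: absorbing the growth factor $v_{i-1}/v_i$ into the contraction. This uses $a<1$, so that $q/i$ is negligible against $\gamma_i/2$, and it is where the restriction $q<1$ (or indeed any fixed $q$) together with $a<1$ enters. The randomness of $i_3$ is harmless, since the argument is pathwise.
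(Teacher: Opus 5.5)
Your proof is correct and follows essentially the same route as the paper's: divide by $v_i$, absorb $v_{i-1}/v_i\le 1+\gamma_i/4$ (which uses $i\gamma_i\to\infty$) into the contraction to get $w_i\le(1-\gamma_i/4)w_{i-1}+2\gamma_i|\eta_i|$, shift by a constant, take positive parts, iterate, and use $\sum_j\gamma_j=\infty$. One small fix: the bound $\gamma_i\eta_i v_{i-1}/v_i\le 2\gamma_i\eta_i$ needs $\eta_i\ge0$, which the statement does not assume, so write $|\eta_i|$ there as the paper does; you should also delete the garbled half-sentence before ``More carefully''.
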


\begin{proof}
Let $\varsigma_i=u_i/v_i\ge0$. Since $(1+x)^q\le1+qx$ for $x\ge0$ and $i\gamma_i\to\infty$, there is $i_4\ge i_3$ such that for $i>i_4$
\[
\frac{v_{i-1}}{v_i}=\Big(1+\frac1{i-1}\Big)^q\le1+\frac{2q}{i}\le1+\frac{\gamma_i}4\qquad\text{and}\qquad\gamma_i\le1 .
\]
Dividing the recursion by $v_i$ and using $(1-x/2)(1+x/4)=1-x/4-x^2/8$, we get for $i>i_4$
\[
\varsigma_i\le(1-\gamma_i/2)(1+\gamma_i/4)\varsigma_{i-1}+\gamma_i\eta_i(1+\gamma_i/4)\le(1-\gamma_i/4)\varsigma_{i-1}+2\gamma_i|\eta_i| .
\]
Fix $\theta>0$ and $i_5\ge i_4$ with $|\eta_i|\le\theta$ for $i>i_5$. Then $\varsigma_i-8\theta\le(1-\gamma_i/4)(\varsigma_{i-1}-8\theta)$, and hence $(\varsigma_i-8\theta)^+\le(1-\gamma_i/4)(\varsigma_{i-1}-8\theta)^+$, for $i>i_5$. Iterating,
\[
(\varsigma_i-8\theta)^+\le(\varsigma_{i_5}-8\theta)^+\exp\Big(-\frac14\sum_{j=i_5+1}^i\gamma_j\Big)\to0,
\]
because $\sum_j\gamma_j=\infty$. Thus $\limsup_i\varsigma_i\le8\theta$ for every $\theta>0$.
\end{proof}

\begin{lemma}[Jacobian and weight estimates]\label{lem:precond}
Under Assumptions~\ref{ass:data}, \ref{ass:smooth} and~\ref{ass:steps}, almost surely:
\begin{enumerate}[label=(\alph*),itemsep=2pt]
\item $\dist(\Phi_i,\mathcal G)\to0$, and the eigenvalues of $S_i$ lie in $[\lambda_0/2,2\Lambda]$ for all $i$ large, so that $W_i\in\mathcal W$;
\item for every $\varepsilon'>0$,
\[
\Big\|\Phi_i-\frac1i\sum_{j=1}^i\bar G(\bar\beta_{j-1})\Big\|+\Big\|S_i-\frac1i\sum_{j=1}^i\Omega(\bar\beta_{j-1})\Big\|=O\big(i^{-1/2+\varepsilon'}\big);
\]
\item if $\bar\beta_i\to\beta^*$, then $\Phi_i\to G^*$, $S_i\to\Omega^*$, $W_i\to\Omega^{*-1}$, $H_i\to\Sigma^*$ and $P(\Phi_i,W_i)\to P^*$.
\end{enumerate}
\end{lemma}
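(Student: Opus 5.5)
The plan is to prove (b) first, then derive (a) and (c) from it. By Remark~\ref{rem:design}(ii),
\[
\Phi_i=\frac{1}{n_0+i}\Big(n_0\Phi_0+\sum_{j=1}^iG(X_j,\bar\beta_{j-1})\Big),\qquad
S_i=\frac{1}{n_0+i}\Big(n_0S_0+\sum_{j=1}^is_js_j\tr\Big).
\]
Since $\bar\beta_{j-1}$ is $\F_{j-1}$-measurable and $X_j$ is independent of $\F_{j-1}$ (Assumption~\ref{ass:data}), the increments
\[
\Delta_j=G(X_j,\bar\beta_{j-1})-\bar G(\bar\beta_{j-1}),\qquad \Delta'_j=s_js_j\tr-\Omega(\bar\beta_{j-1})
\]
form martingale difference sequences. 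They are dominated by $2B(X_j)$ and $2B(X_j)^2$, and $\E B(X)^4<\infty$ because $\kappa\ge4$. Hence $\sup_j\E\|\Delta_j\|^2$ and $\sup_j\E\|\Delta'_j\|^2$ are finite. The martingale strong law (T2) then gives $\sum_{j\le i}\Delta_j=o(i^{1/2+\varepsilon'})$, and likewise for $\Delta'_j$. The remaining discrepancies are deterministic bookkeeping: the pilot terms $n_0\Phi_0/(n_0+i)$ and $n_0S_0/(n_0+i)$, and the change of normalisation from $(n_0+i)^{-1}$ to $i^{-1}$ applied to sums of bounded quantities ($\bar G$ and $\Omega$ are bounded on the compact $\Theta$). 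Each of these is $O(1/i)$, which gives (b).

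For (a), observe that $\frac1i\sum_{j\le i}\bar G(\bar\beta_{j-1})$ is an average of points of $\{\bar G(\beta):\beta\in\Theta\}$, since every $\bar\beta_{j-1}\in\Theta$. It therefore lies in the convex set $\mathcal G$, and (b) yields $\dist(\Phi_i,\mathcal G)\to0$. Similarly, $\frac1i\sum_{j\le i}\Omega(\bar\beta_{j-1})$ has all its eigenvalues in $[\lambda_0,\Lambda]$, because $\lambda_{\min}$ is concave and $\lambda_{\max}$ is convex on symmetric matrices (alternatively, by taking quadratic forms). By (b) and Weyl's inequality, the eigenvalues of $S_i$ lie eventually in $[\lambda_0/2,2\Lambda]$. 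Consequently $W_i=S_i^{-1}$ satisfies $(2\Lambda)^{-1}I\preceq W_i\preceq2\lambda_0^{-1}I$, that is, $W_i\in\mathcal W$.

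For (c), assume $\bar\beta_i\to\beta^*$. Continuity of $\bar G$ and $\Omega$ on $\Theta$ gives $\bar G(\bar\beta_{j-1})\to G^*$ and $\Omega(\bar\beta_{j-1})\to\Omega^*$, so their Cesàro averages converge to the same limits. By (b), $\Phi_i\to G^*$ and $S_i\to\Omega^*$, and since $\Omega^*\succeq\lambda_0I$ is invertible, $W_i\to\Omega^{*-1}$. Next, $\Phi_i\tr W_i\Phi_i\to G^{*\top}\Omega^{*-1}G^*$, which is invertible because $G^*$ has full column rank. Continuity of matrix inversion then gives $H_i\to\Sigma^*$, and hence $P(\Phi_i,W_i)=H_i\Phi_i\tr W_i\to P^*$.

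The step that needs the most care is the martingale step in (b). The evaluation point $\bar\beta_{j-1}$ is random and path dependent, so the variance bound must hold uniformly over $\beta\in\Theta$. This is exactly why the envelope $B$ takes a supremum over $\Theta$, and fourth moments of $B$ are needed to control $\Delta'_j$ in $L^2$. Everything else is deterministic.
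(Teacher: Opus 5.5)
Your proposal follows the paper's proof essentially step for step: the martingale decomposition from Remark~\ref{rem:design}(ii) with (T2) and $O(1/i)$ bookkeeping for (b), membership of the averages in $\mathcal G$ plus concavity/convexity of the extreme eigenvalues and Weyl's inequality for (a), and Ces\`aro plus continuity of inversion for (c). One small slip: pointwise you only have $\|\Delta_j\|\le B(X_j)+\E B(X)$, not $2B(X_j)$, and similarly $\|\Delta'_j\|\le B(X_j)^2+\E[B(X)^2]$. These bounds still give $\E\|\Delta_j\|^2\le4\E[B(X)^2]$ and $\E\|\Delta'_j\|^2\le4\E[B(X)^4]$, which is all the argument uses.
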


\begin{proof}
(b) By Remark~\ref{rem:design}(ii),
\[
\Phi_i=\frac{n_0}{n_0+i}\Phi_0+\frac{1}{n_0+i}\sum_{j=1}^i\bar G(\bar\beta_{j-1})+\frac{1}{n_0+i}M^\Phi_i,\qquad M^\Phi_i=\sum_{j=1}^i\Delta^\Phi_j,\quad\Delta^\Phi_j=G(X_j,\bar\beta_{j-1})-\bar G(\bar\beta_{j-1}).
\]
The point $\bar\beta_{j-1}$ is $\F_{j-1}$-measurable, and $X_j$ is independent of $\F_{j-1}$ with the distribution of $X$. Therefore $\E[G(X_j,\bar\beta_{j-1})\mid\F_{j-1}]=\bar G(\bar\beta_{j-1})$, so the $\Delta^\Phi_j$ are martingale differences, with $\E\|\Delta_j^\Phi\|^2\le2\E[B(X)^2]+2\big(\E B(X)\big)^2\le4\E[B(X)^2]$. By (T2), $M^\Phi_i=O(i^{1/2+\varepsilon'})$ almost surely. Since $\frac{1}{n_0+i}-\frac1i=-\frac{n_0}{i(n_0+i)}$ and $\|\bar G(\beta)\|\le\E B(X)$,
\[
\Big\|\Phi_i-\frac1i\sum_{j\le i}\bar G(\bar\beta_{j-1})\Big\|\le\frac{n_0}{n_0+i}\big(\|\Phi_0\|+\E B(X)\big)+\frac{\|M_i^\Phi\|}{n_0+i}=O\big(i^{-1/2+\varepsilon'}\big).
\]
The same argument applies to $S_i=(n_0+i)^{-1}\big(n_0S_0+\sum_{j\le i}s_js_j\tr\big)$, with $\Omega(\bar\beta_{j-1})$ in place of $\bar G(\bar\beta_{j-1})$ and martingale differences $s_js_j\tr-\Omega(\bar\beta_{j-1})$. Their second moments are at most $2\E[B(X)^4]+2\|\Omega(\bar\beta_{j-1})\|^2\le4\E[B(X)^4]<\infty$, since $\kappa\ge4$.

(a) Let $\bar G_i^{\rm av}=i^{-1}\sum_{j\le i}\bar G(\bar\beta_{j-1})$ and $\Omega_i^{\rm av}=i^{-1}\sum_{j\le i}\Omega(\bar\beta_{j-1})$. Since $\bar\beta_{j-1}\in\Theta$, $\bar G_i^{\rm av}$ is a convex combination of elements of $\{\bar G(\beta):\beta\in\Theta\}$, so $\bar G_i^{\rm av}\in\mathcal G$ and $\dist(\Phi_i,\mathcal G)\le\|\Phi_i-\bar G_i^{\rm av}\|\to0$ by (b). The function $\lambda_{\min}$ is concave and $\lambda_{\max}$ is convex on symmetric matrices, so the eigenvalues of $\Omega_i^{\rm av}$ lie in $[\lambda_0,\Lambda]$. By Weyl's inequality, $|\lambda_k(S_i)-\lambda_k(\Omega_i^{\rm av})|\le\|S_i-\Omega_i^{\rm av}\|\to0$ for every $k$. Hence the eigenvalues of $S_i$ lie in $[\lambda_0/2,2\Lambda]$ for $i$ large, and those of $W_i=S_i^{-1}$ lie in $[(2\Lambda)^{-1},2\lambda_0^{-1}]$.

(c) If $\bar\beta_i\to\beta^*$, continuity of $\bar G$ and $\Omega$ gives $\bar G(\bar\beta_{j-1})\to G^*$ and $\Omega(\bar\beta_{j-1})\to\Omega^*$. Hence $\bar G_i^{\rm av}\to G^*$ and $\Omega_i^{\rm av}\to\Omega^*$ by Ces\`aro's lemma, and $\Phi_i\to G^*$, $S_i\to\Omega^*$ by (b). Matrix inversion is continuous on invertible matrices, so $W_i\to\Omega^{*-1}$. The matrix $G^{*\top}\Omega^{*-1}G^*$ is invertible because $G^*$ has full column rank and $\Omega^{*-1}$ is positive definite. Hence $\Phi_i\tr W_i\Phi_i$ is invertible for $i$ large, $H_i\to\Sigma^*$ and $P(\Phi_i,W_i)=H_i\Phi_i\tr W_i\to P^*$.
\end{proof}

We also record a Lipschitz bound for the preconditioner. Let $\mathcal N=\{\Phi:\dist(\Phi,\mathcal G)\le\delta_\Phi\}$, a compact set on which $\sigma_{\min}(\Phi)\ge\sigma_0$ by Assumption~\ref{ass:mono}. Let $\phi_{\max}=\sup_{\mathcal N}\|\Phi\|$. For $\Phi\in\mathcal N$ and $W\in\mathcal W$, $u\tr\Phi\tr W\Phi u\ge\lambda_{\min}(W)\|\Phi u\|^2\ge\sigma_0^2\|u\|^2/(2\Lambda)$. Therefore
\begin{equation}\label{eq:Pbound}
\|(\Phi\tr W\Phi)^{-1}\|\le\frac{2\Lambda}{\sigma_0^2},\qquad\|P(\Phi,W)\|\le p_{\max}:=\frac{4\Lambda\phi_{\max}}{\lambda_0\sigma_0^2}.
\end{equation}
For a second pair $(\Phi',W')\in\mathcal N\times\mathcal W$, write $M=\Phi\tr W\Phi$ and $M'=\Phi'^\top W'\Phi'$. Then $\|M-M'\|\le2\phi_{\max}\|W\|\,\|\Phi-\Phi'\|+\phi_{\max}^2\|W-W'\|$ and $\|M^{-1}-M'^{-1}\|\le\|M^{-1}\|\,\|M-M'\|\,\|M'^{-1}\|$. Expanding $M^{-1}\Phi\tr W-M'^{-1}\Phi'^\top W'$ term by term, we obtain, for some constant $C_P$,
\begin{equation}\label{eq:Plip}
\|P(\Phi,W)-P(\Phi',W')\|\le C_P\big(\|\Phi-\Phi'\|+\|W-W'\|\big)\qquad\text{on }\mathcal N\times\mathcal W .
\end{equation}

\subsection{Proofs of the results of Section~\ref{sec:theory}}

\begin{proof}[Proof of Proposition~\ref{prop:local}]
(a) For a compact convex $\Theta\subseteq\Theta_0$, let $\lambda_0(\Theta)$, $\Lambda(\Theta)$, $\mathcal W_\Theta$ and $\mathcal G_\Theta$ be the quantities of Section~\ref{sec:theory} computed on $\Theta$. Since $\lambda_0(\Theta)\ge\lambda_0(\Theta_0)$ and $\Lambda(\Theta)\le\Lambda(\Theta_0)$, we have $\mathcal W_\Theta\subseteq\mathcal W_{\Theta_0}$. Every $W\in\mathcal W_{\Theta_0}$ has condition number at most $k_W=4\Lambda(\Theta_0)/\lambda_0(\Theta_0)$. Let $\sigma^*=\sigma_{\min}(G^*)>0$. Take $\delta_\Phi>0$ and $r_1\in(0,r_0]$ with $\delta_\Phi+\bar Lr_1\le\sigma^*/2$, and let $\Theta\subseteq\bar B(\beta^*,r_1)$ with $\beta^*\in\operatorname{int}\Theta$.

Since $\bar G$ is $\bar L$-Lipschitz, $\mathcal G_\Theta\subseteq\bar B(G^*,\bar Lr_1)$. So if $\dist(\Phi,\mathcal G_\Theta)\le\delta_\Phi$, then $\|\Phi-G^*\|\le\delta_\Phi+\bar Lr_1\le\sigma^*/2$. By Weyl's inequality for singular values, $\sigma_{\min}(\Phi)\ge\sigma^*/2=:\sigma_0$, and $\|\Phi\|\le\|G^*\|+\sigma_0$. As in \eqref{eq:Pbound}, $\|(\Phi\tr W\Phi)^{-1}\|\le1/(\lambda_{\min}(W)\sigma_{\min}(\Phi)^2)$, so for $W\in\mathcal W_\Theta$
\[
\|P(\Phi,W)\|\le\frac{\lambda_{\max}(W)\|\Phi\|}{\lambda_{\min}(W)\sigma_{\min}(\Phi)^2}\le\frac{k_W(\|G^*\|+\sigma_0)}{\sigma_0^2}=:p_0 .
\]
For $\beta\in\Theta$, the segment from $\beta^*$ to $\beta$ lies in $\Theta\subseteq\mathcal O$. Hence $g(x,\beta)-g(x,\beta^*)=\int_0^1G(x,\beta^*+t(\beta-\beta^*))\dd t\,(\beta-\beta^*)$. Since $\|G\|\le B$ is integrable and $m(\beta^*)=0$, Fubini's theorem gives
\begin{gather*}
m(\beta)=\tilde G(\beta)(\beta-\beta^*),\qquad\tilde G(\beta)=\int_0^1\bar G\big(\beta^*+t(\beta-\beta^*)\big)\dd t,\\ \|\tilde G(\beta)-G^*\|\le\int_0^1\bar Lt\|\beta-\beta^*\|\dd t\le\frac{\bar Lr_1}2 .
\end{gather*}
Using $P(\Phi,W)\Phi=I$,
\[
P(\Phi,W)m(\beta)=(\beta-\beta^*)+P(\Phi,W)\big(\tilde G(\beta)-\Phi\big)(\beta-\beta^*),\qquad\|\tilde G(\beta)-\Phi\|\le\frac{\bar Lr_1}2+\delta_\Phi+\bar Lr_1 .
\]
Hence $\langle\beta-\beta^*,P(\Phi,W)m(\beta)\rangle\ge\big(1-p_0(\delta_\Phi+2\bar Lr_1)\big)\|\beta-\beta^*\|^2$. Choosing $\delta_\Phi$ and $r_1$ smaller if necessary, so that $p_0(\delta_\Phi+2\bar Lr_1)\le1/2$, gives the claim with $c=1/2$. Here $p_0$ depends only on $\Theta_0$ and $\sigma_0=\sigma^*/2$, so the constants $\delta_\Phi$, $\sigma_0$ and $c$ do not depend on $\Theta$.

(b) In the linear model, $\bar G\equiv G^*$, $\mathcal G=\{G^*\}$ and $m(\beta)=G^*(\beta-\beta^*)$ on all of $\R^{\db}$. The argument of (a) applies with $\bar L=0$, $\Theta_0=\Theta$ and no restriction on $r_1$.
\end{proof}

\begin{proof}[Proof of Corollary~\ref{cor:datadriven}]
On $\mathcal E_\rho$, every $\beta\in\Theta$ satisfies $\|\beta-\beta^*\|\le\|\beta-\hat\beta_{\rm pilot}\|+\|\hat\beta_{\rm pilot}-\beta^*\|<2\rho\le r_1$. The open ball around $\beta^*$ of radius $\rho-\|\hat\beta_{\rm pilot}-\beta^*\|>0$ is contained in $\Theta$, so $\beta^*\in\operatorname{int}\Theta$.

Let $\mathcal P$ denote the pilot sample. All inputs of Algorithm~\ref{alg:sgmm} that depend on $\mathcal P$ are measurable functions of it: $\Theta$, $\beta_0$, $\Phi_0$ and $S_0$. The iterates are therefore measurable functions of $(\mathcal P,X_1,X_2,\dots)$. Fix a value $p$ of $\mathcal P$ with $\|\hat\beta_{\rm pilot}(p)-\beta^*\|<\rho$, and run the algorithm on the stream $(X_i)_{i\ge1}$ with the deterministic inputs $\Theta(p)$, $\beta_0(p)$, $\Phi_0(p)$, $S_0(p)$. Assumption~\ref{ass:data} holds. Assumption~\ref{ass:smooth} holds on $\Theta(p)\subseteq\Theta_0$, because the envelope $B$ and the Lipschitz function $L$ computed over $\Theta(p)$ are bounded by those computed over $\Theta_0$, and $\lambda_0(\Theta(p))\ge\lambda_0(\Theta_0)>0$. Assumption~\ref{ass:mono} holds by Proposition~\ref{prop:local}(a), and Assumption~\ref{ass:steps} does not involve $p$. Hence the conclusions of Theorems~\ref{thm:as}--\ref{thm:J} hold for this deterministic input, under the respective moment conditions.

Since $(X_i)_{i\ge1}$ is independent of $\mathcal P$, Fubini's theorem gives, for every measurable event $A$ of the form $\{(\mathcal P,(X_i)_{i\ge1})\in\mathcal A\}$, $\Prob(A)=\int\Prob\big((p,(X_i)_{i\ge1})\in\mathcal A\big)\,\Prob_{\mathcal P}(\dd p)$. With $A=\{\bar\beta_n\to\beta^*\}$, the integrand equals one whenever $\|\hat\beta_{\rm pilot}(p)-\beta^*\|<\rho$, so $\Prob(\bar\beta_n\to\beta^*)\ge\Prob(\mathcal E_\rho)$. The distributional conclusions hold in the same way under the conditional distribution given $\mathcal P=p$, for such $p$. If the pilot estimator is consistent as $n_0\to\infty$, then $\Prob(\mathcal E_\rho)\to1$.
\end{proof}

\begin{proof}[Proof of Theorem~\ref{thm:as}]
\emph{Step 1: eventual regime.} Define the $\F_{i-1}$-measurable event
\[
E_{i-1}=\big\{\Phi_{i-1}\in\mathcal N,\ W_{i-1}\in\mathcal W,\ P_{i-1}=P(\Phi_{i-1},W_{i-1})\big\}.
\]
By Lemma~\ref{lem:precond}(a), almost surely $\Phi_{i-1}\in\mathcal N$ and $W_{i-1}\in\mathcal W$ for all $i$ large. Then $\Phi_{i-1}\tr W_{i-1}\Phi_{i-1}$ is invertible and $\|P(\Phi_{i-1},W_{i-1})\|\le p_{\max}$ by \eqref{eq:Pbound}. Since $\bar p_i\to\infty$, $p_{\max}\le\bar p_i$ for $i$ large, so the truncation is inactive. Hence, almost surely, $E_{i-1}$ holds for all $i$ beyond a finite random time.

\emph{Step 2: Robbins--Siegmund.} Let $V_i=\|\delta_i\|^2$. Since $\Pi_\Theta$ is non-expansive and $\beta^*=\Pi_\Theta(\beta^*)$,
\[
V_i\le\|\delta_{i-1}-\gamma_iP_{i-1}g(X_i,\beta_{i-1})\|^2=V_{i-1}-2\gamma_i\langle\delta_{i-1},P_{i-1}g(X_i,\beta_{i-1})\rangle+\gamma_i^2\|P_{i-1}g(X_i,\beta_{i-1})\|^2 .
\]
The variables $\delta_{i-1}$, $\beta_{i-1}$ and $P_{i-1}$ are $\F_{i-1}$-measurable, and $X_i$ is independent of $\F_{i-1}$. Hence $\E[g(X_i,\beta_{i-1})\mid\F_{i-1}]=m(\beta_{i-1})$ and $\E[\|g(X_i,\beta_{i-1})\|^2\mid\F_{i-1}]\le\E[B(X)^2]$. By construction $P_{i-1}$ is either $\hat P$ with $\|\hat P\|\le\bar p_i$ or zero, so $\|P_{i-1}\|\le\bar p_i$. Therefore
\[
\E[V_i\mid\F_{i-1}]\le V_{i-1}-2\gamma_i\langle\delta_{i-1},P_{i-1}m(\beta_{i-1})\rangle+\gamma_i^2\bar p_i^2\E[B(X)^2].
\]
On $E_{i-1}$, the inner product is at least $cV_{i-1}$ by Assumption~\ref{ass:mono}. On its complement it is at least $-\bar p_iK_0$, with $K_0=D_\Theta\sup_\Theta\|m\|\le D_\Theta\E B(X)$. Therefore
\[
\E[V_i\mid\F_{i-1}]\le V_{i-1}-\underbrace{2c\gamma_iV_{i-1}\mathbf 1_{E_{i-1}}}_{c_{i-1}}+\underbrace{2\gamma_i\bar p_iK_0\mathbf 1_{E_{i-1}^c}+\gamma_i^2\bar p_i^2\E[B(X)^2]}_{b_{i-1}} .
\]
Both $b_{i-1}$ and $c_{i-1}$ are non-negative and $\F_{i-1}$-measurable. The first part of $b_{i-1}$ vanishes for all $i$ large by Step~1. The second is deterministic and summable, because $\gamma_i^2\bar p_i^2\le Ci^{-2a+2\nu}$ and $2a-2\nu>1$. By (T1), $V_i$ converges almost surely to a finite limit $V_\infty$, and $\sum_i\gamma_iV_{i-1}\mathbf 1_{E_{i-1}}<\infty$. On the event $\{V_\infty>0\}$ we would have $V_{i-1}\mathbf 1_{E_{i-1}}\ge V_\infty/2$ for all $i$ large, which contradicts $\sum_i\gamma_i=\infty$. Hence $V_\infty=0$ almost surely, that is, $\beta_i\to\beta^*$, and $\bar\beta_n=n^{-1}\sum_{i\le n}\beta_i\to\beta^*$ by Ces\`aro's lemma.

\emph{Step 3: limits of the preconditioner and of the projection.} Lemma~\ref{lem:precond}(c) gives the limits of $\Phi_n$, $W_n$, $H_n$ and $P(\Phi_n,W_n)$, and $P_n=P(\Phi_n,W_n)$ for $n$ large by Step~1. For the projection, let $r>0$ with $\bar B(\beta^*,2r)\subseteq\Theta$. By (T5), $B(X_i)\le i^{1/\kappa}$ for all $i$ large. On $E_{i-1}$,
\[
\gamma_i\|P_{i-1}g(X_i,\beta_{i-1})\|\le\gamma_0p_{\max}i^{-a+1/\kappa}\to0,
\]
because $a>1/2\ge2/\kappa>1/\kappa$. Since also $\|\delta_{i-1}\|\to0$, the point $\beta_{i-1}-\gamma_iP_{i-1}g(X_i,\beta_{i-1})$ lies in $\bar B(\beta^*,2r)\subseteq\Theta$ for all $i$ large, and the projection is inactive.
\end{proof}

\begin{proof}[Proof of Theorem~\ref{thm:rate}]
Fix $\varepsilon>(1-a)/\kappa$; without loss of generality $\varepsilon<a/2$, which is possible because $(1-a)/\kappa\le(1-a)/4<a/2$. Let $v_i=i^{-a/2+\varepsilon}$. Let $\iota$ be a finite random time after which the truncation and the projection are inactive and $E_{i-1}$ holds (Theorem~\ref{thm:as}). Let $e_i=g(X_i,\beta_{i-1})-m(\beta_{i-1})$. It is a martingale difference with $\E[\|e_i\|^\kappa\mid\F_{i-1}]\le2^\kappa\E[B(X)^\kappa]$. With $\tilde G$ as in the proof of Proposition~\ref{prop:local}, $m(\beta_{i-1})=\tilde G(\beta_{i-1})\delta_{i-1}$; the representation used there only requires the segment from $\beta^*$ to $\beta_{i-1}$ to lie in the convex set $\Theta$. For $i>\iota$,
\begin{equation}\label{eq:linrec}
\delta_i=(I-\gamma_iA_{i-1})\delta_{i-1}-\gamma_iP_{i-1}e_i,\qquad A_{i-1}=P_{i-1}\tilde G(\beta_{i-1}).
\end{equation}
Since $\beta_{i-1}\to\beta^*$, $\tilde G(\beta_{i-1})\to G^*$ by continuity of $\bar G$. With $P_{i-1}\to P^*$ (Theorem~\ref{thm:as}), this gives $A_{i-1}\to P^*G^*=I$ almost surely.

\emph{Step 1: a scalar-weighted martingale.} Let $\tilde P_{i-1}=P_{i-1}\mathbf 1\{\|P_{i-1}\|\le p_{\max}\}$, which is $\F_{i-1}$-measurable and bounded. Then $\xi_i=\tilde P_{i-1}e_i$ is a martingale difference with $\E\|\xi_i\|^\kappa\le c_\kappa:=p_{\max}^\kappa2^\kappa\E[B(X)^\kappa]$. With $i_1$ from Lemma~\ref{lem:weights}, define $Z_{i_1}=0$ and $Z_i=(1-\gamma_i)Z_{i-1}-\gamma_i\xi_i$ for $i>i_1$. Dividing by $\Gamma_{i_1,i}=(1-\gamma_i)\Gamma_{i_1,i-1}$ and summing, we get $Z_i=-\Gamma_{i_1,i}M_i$ with
\[
M_i=\sum_{k=i_1+1}^i\frac{\gamma_k}{\Gamma_{i_1,k}}\xi_k ,
\]
a martingale with deterministic weights.

Define block endpoints $t_0=i_1$ and $t_{k+1}=\min\{t>t_k:\sum_{j=t_k+1}^t\gamma_j\ge1\}$; they are finite because $\sum_j\gamma_j=\infty$. Since $\gamma_j\le1/2$, each block satisfies $1\le\sum_{j=t_k+1}^{t_{k+1}}\gamma_j<3/2$. Consequently $k\le\sum_{j\le t_k}\gamma_j\le Ct_k^{1-a}$.

By (T3) with $p=\kappa$, in the maximal form, and since $\Gamma_{i_1,t}=\Gamma_{i_1,j}\Gamma_{j,t}$ for $i_1\le j\le t$, Lemma~\ref{lem:weights} gives
\[
\E\max_{l\le t}\|M_l\|^\kappa\le Cc_\kappa\Big(\sum_{j=i_1+1}^t\frac{\gamma_j^2}{\Gamma_{i_1,j}^2}\Big)^{\kappa/2}=C\,\Gamma_{i_1,t}^{-\kappa}\Big(\sum_{j=i_1+1}^t\gamma_j^2\Gamma_{j,t}^2\Big)^{\kappa/2}\le C\,\Gamma_{i_1,t}^{-\kappa}\gamma_t^{\kappa/2}.
\]
For $t_k<i\le t_{k+1}$, $\Gamma_{i_1,i}\le\Gamma_{i_1,t_k}$, so $\|Z_i\|\le\Gamma_{i_1,t_k}\max_{l\le t_{k+1}}\|M_l\|$. Since $1-x\ge e^{-x-x^2}$ for $x\in[0,1/2]$, and $\sum_{j=t_k+1}^{t_{k+1}}\gamma_j^2\le\frac12\sum_{j=t_k+1}^{t_{k+1}}\gamma_j<3/4$, we have $\Gamma_{t_k,t_{k+1}}\ge e^{-3/2-3/4}=e^{-9/4}$. Therefore
\[
\E\max_{t_k<i\le t_{k+1}}\|Z_i\|^\kappa\le C\Big(\frac{\Gamma_{i_1,t_k}}{\Gamma_{i_1,t_{k+1}}}\Big)^\kappa\gamma_{t_{k+1}}^{\kappa/2}=C\,\Gamma_{t_k,t_{k+1}}^{-\kappa}\gamma_{t_{k+1}}^{\kappa/2}\le C\gamma_{t_{k+1}}^{\kappa/2}.
\]
Choose $\theta>0$ with $(1+\theta)(1-a)/\kappa<\varepsilon$. By Markov's inequality,
\[
\Prob\Big(\max_{t_k<i\le t_{k+1}}\|Z_i\|>k^{(1+\theta)/\kappa}\gamma_{t_{k+1}}^{1/2}\Big)\le\frac{C\gamma_{t_{k+1}}^{\kappa/2}}{k^{1+\theta}\gamma_{t_{k+1}}^{\kappa/2}}=Ck^{-1-\theta},
\]
which is summable in $k$. By Borel--Cantelli, almost surely, for all large $k$ and all $t_k<i\le t_{k+1}$,
\[
\|Z_i\|\le k^{(1+\theta)/\kappa}\gamma_{t_{k+1}}^{1/2}\le Ci^{(1+\theta)(1-a)/\kappa}\gamma_i^{1/2}\le Cv_i,
\]
where we used $k\le Ct_k^{1-a}\le Ci^{1-a}$ and $\gamma_{t_{k+1}}\le\gamma_i$. Hence $\|Z_i\|=O(v_i)$ almost surely.

\emph{Step 2: comparison.} Let $\iota'\ge\max(\iota,i_1)$ be a finite random time after which $\|P_{i-1}\|\le p_{\max}$, so that $\xi_i=P_{i-1}e_i$, and $\|A_{i-1}-I\|\le1/2$. For $i>\iota'$, \eqref{eq:linrec} and the definition of $Z_i$ give for $\Delta_i=\delta_i-Z_i$
\[
\Delta_i=(I-\gamma_iA_{i-1})\delta_{i-1}-(1-\gamma_i)Z_{i-1}=(I-\gamma_iA_{i-1})\Delta_{i-1}-\gamma_i(A_{i-1}-I)Z_{i-1}.
\]
Since $I-\gamma_iA_{i-1}=(1-\gamma_i)I-\gamma_i(A_{i-1}-I)$, we have $\|I-\gamma_iA_{i-1}\|\le1-\gamma_i/2$. Hence
\[
\|\Delta_i\|\le(1-\gamma_i/2)\|\Delta_{i-1}\|+\gamma_i\eta_iv_{i-1},\qquad\eta_i=\frac{\|A_{i-1}-I\|\,\|Z_{i-1}\|}{v_{i-1}}\to0 .
\]
Lemma~\ref{lem:det} with $q=a/2-\varepsilon\in(0,1)$ gives $\|\Delta_i\|=o(v_i)$, hence $\|\delta_i\|=O(v_i)$. For the average, $\|\bar\beta_n-\beta^*\|\le n^{-1}\sum_{i\le n}\|\delta_i\|\le C(\omega)n^{-1}\sum_{i\le n}i^{-q}\le C(\omega)v_n/(1-q)$.

\emph{Step 3: Jacobian and weights.} By Lemma~\ref{lem:precond}(b) with $\varepsilon'<\varepsilon$ and the Lipschitz continuity of $\bar G$,
\[
\|\Phi_i-G^*\|\le\frac{\bar L}i\sum_{j\le i}\|\bar\beta_{j-1}-\beta^*\|+O\big(i^{-1/2+\varepsilon'}\big)=O(v_i),
\]
because $i^{-1}\sum_{j\le i}v_j=O(v_i)$ and $-1/2+\varepsilon'<-a/2+\varepsilon$. The same argument applies to $S_i-\Omega^*$, using the Lipschitz continuity of $\Omega$. Finally, $\|W_i-\Omega^{*-1}\|=\|W_i(\Omega^*-S_i)\Omega^{*-1}\|\le\|W_i\|\,\|S_i-\Omega^*\|\,\|\Omega^{*-1}\|$, and $\|W_i\|\le2/\lambda_0$ for $i$ large.
\end{proof}

\begin{proof}[Proof of Theorem~\ref{thm:fclt}]
Since $\kappa>\kappa(a)$, we have $(1-a)/\kappa<(1-a)/2$ and $(1-a)/\kappa<(2a-1)/4$. We can therefore fix $\varepsilon$ with
\begin{equation}\label{eq:eps}
\frac{1-a}\kappa<\varepsilon<\min\Big\{\frac{1-a}2,\ \frac{2a-1}4\Big\},
\end{equation}
and Theorem~\ref{thm:rate} applies with $v_i=i^{-a/2+\varepsilon}$. Let $\iota$ be as in the previous proof. For $i>\iota$, \eqref{eq:linrec} gives $\delta_i=\delta_{i-1}-\gamma_iP_{i-1}m(\beta_{i-1})-\gamma_iP_{i-1}e_i$, which can be rearranged as
\[
\delta_{i-1}=\frac{\delta_{i-1}-\delta_i}{\gamma_i}-P_{i-1}e_i-\rho_{i-1},\qquad\rho_{i-1}=P_{i-1}m(\beta_{i-1})-\delta_{i-1}.
\]
Write $e_i=\xi^0_i+\zeta_i$, where $\xi^0_i=g(X_i,\beta^*)$ are i.i.d.\ with mean $m(\beta^*)=0$ and covariance $\Omega^*$, and
\[
\zeta_i=g(X_i,\beta_{i-1})-g(X_i,\beta^*)-m(\beta_{i-1}).
\]
The $\zeta_i$ are martingale differences, and since $\|g(x,\beta)-g(x,\beta^*)\|\le B(x)\|\beta-\beta^*\|$,
\[
\E[\|\zeta_i\|^2\mid\F_{i-1}]\le\E[B(X)^2]\,\|\delta_{i-1}\|^2 .
\]
Using $P_{i-1}e_i=P^*\xi^0_i+P^*\zeta_i+(P_{i-1}-P^*)e_i$, we obtain for $\iota<k\le n$
\begin{equation}\label{eq:decomp}
\sum_{i=\iota+1}^k\delta_{i-1}=-P^*\sum_{i=\iota+1}^k\xi_i^0+R^{(1)}_k-R^{(2)}_k-R^{(3)}_k-R^{(4)}_k,
\end{equation}
where
\[
\textstyle R^{(1)}_k=\sum_{i=\iota+1}^k\frac{\delta_{i-1}-\delta_i}{\gamma_i},\quad R^{(2)}_k=P^*\sum_{i=\iota+1}^k\zeta_i,\quad R^{(3)}_k=\sum_{i=\iota+1}^k(P_{i-1}-P^*)e_i,\quad R^{(4)}_k=\sum_{i=\iota+1}^k\rho_{i-1}.
\]
Here $\iota$ is a random time, finite almost surely but not a stopping time. We show that $n^{-1/2}\max_{\iota<k\le n}\|R^{(l)}_k\|\to0$ in probability for $l=1,\dots,4$. The terms with $i\le\iota$ are finitely many and do not depend on $n$: $n^{-1/2}\sum_{i\le\iota}\|\delta_{i-1}\|\le n^{-1/2}\iota D_\Theta\to0$ and $n^{-1/2}\|P^*\sum_{i\le\iota}\xi^0_i\|\to0$ almost surely. Let $\iota'\ge\iota$ be as in the proof of Theorem~\ref{thm:rate}.

\emph{$R^{(1)}$.} By Abel summation,
\[
R^{(1)}_k=\frac{\delta_\iota}{\gamma_{\iota+1}}-\frac{\delta_k}{\gamma_k}+\sum_{i=\iota+1}^{k-1}\delta_i\Big(\frac1{\gamma_{i+1}}-\frac1{\gamma_i}\Big),
\]
and $0\le1/\gamma_{i+1}-1/\gamma_i=\gamma_0^{-1}\big((i+1+i_0)^a-(i+i_0)^a\big)\le a\gamma_0^{-1}(i+i_0)^{a-1}$. With $\|\delta_i\|\le C(\omega)v_i$ for all $i$,
\[
\max_{\iota<k\le n}\|R^{(1)}_k\|\le C(\omega)\Big(1+\max_{k\le n}k^av_k+\sum_{i\le n}i^{a-1}v_i\Big)\le C(\omega)\big(1+n^{a/2+\varepsilon}\big),
\]
and $n^{a/2+\varepsilon-1/2}\to0$ by \eqref{eq:eps}.

\emph{$R^{(4)}$.} Since $\|G(x,\beta)-G(x,\beta')\|\le L(x)\|\beta-\beta'\|$, the argument for $\tilde G$ gives $\|m(\beta)-G^*(\beta-\beta^*)\|\le\bar L\|\beta-\beta^*\|^2/2$. Using also $P_{i-1}\Phi_{i-1}=I$, for $i>\iota$
\[
\begin{aligned}\|\rho_{i-1}\|&=\big\|P_{i-1}\big(m(\beta_{i-1})-G^*\delta_{i-1}\big)+P_{i-1}(G^*-\Phi_{i-1})\delta_{i-1}\big\|\\&\le p_{\max}\big(\bar L\|\delta_{i-1}\|^2/2+\|\Phi_{i-1}-G^*\|\,\|\delta_{i-1}\|\big)\le C(\omega)v_{i-1}^2\end{aligned}
\]
by Theorem~\ref{thm:rate}. Thus $\max_{k\le n}\|R^{(4)}_k\|\le C(\omega)\sum_{i\le n}i^{-a+2\varepsilon}\le C(\omega)n^{1-a+2\varepsilon}$, where $a-2\varepsilon<1$. Moreover $n^{1/2-a+2\varepsilon}\to0$ by \eqref{eq:eps}.

\emph{$R^{(2)}$.} Since $\iota$ is not a stopping time, we compare with $M^{(2)}_k=P^*\sum_{i=1}^k\zeta_i$, a martingale for $(\F_k)$. Since $R^{(2)}_k=M^{(2)}_k-M^{(2)}_{\iota\wedge k}$, we have $\max_{k\le n}\|R^{(2)}_k\|\le2\max_{k\le n}\|M^{(2)}_k\|$. Its predictable quadratic variation satisfies
\[
\operatorname{tr}\langle M^{(2)}\rangle_n\le\|P^*\|^2\E[B(X)^2]\sum_{i\le n}\|\delta_{i-1}\|^2,
\]
and $n^{-1}\sum_{i\le n}\|\delta_{i-1}\|^2\to0$ almost surely by Theorem~\ref{thm:as} and Ces\`aro's lemma. By (T4) applied to $n^{-1/2}M^{(2)}$, $n^{-1/2}\max_{k\le n}\|R^{(2)}_k\|\to0$ in probability.

\emph{$R^{(3)}$.} Let $M^{(3)}_k=\sum_{i=1}^k(\tilde P_{i-1}-P^*)e_i$, with $\tilde P_{i-1}$ the bounded predictable matrix of the previous proof. It is a martingale with
\[
\operatorname{tr}\langle M^{(3)}\rangle_n\le4\E[B(X)^2]\sum_{i\le n}\|\tilde P_{i-1}-P^*\|^2,
\]
and $n^{-1}\sum_{i\le n}\|\tilde P_{i-1}-P^*\|^2\to0$ almost surely, because $\tilde P_{i-1}=P_{i-1}\to P^*$. By (T4), $n^{-1/2}\max_{k\le n}\|M^{(3)}_k\|\to0$ in probability. Since $\tilde P_{i-1}=P_{i-1}$ for $i>\iota'$, $R^{(3)}_k$ differs from $M^{(3)}_k-M^{(3)}_{\iota'\wedge k}$ by a sum of finitely many terms that does not depend on $n$, and the claim follows.

\emph{Conclusion.} Since $\sum_{i=1}^{\lfloor nt\rfloor}(\beta_i-\beta^*)=\sum_{i=1}^{\lfloor nt\rfloor}\delta_{i-1}+\delta_{\lfloor nt\rfloor}-\delta_0$ and $\max_k\|\delta_k\|\le D_\Theta$, \eqref{eq:decomp} and the bounds above give \eqref{eq:bahadur}. By Donsker's theorem \cite[Theorem~14.1]{Billingsley1999}, applied to $\lambda\tr\xi^0_i$ for each $\lambda\in\R^{\dg}$, together with the Cram\'er--Wold device for the finite-dimensional distributions and the fact that a vector of tight coordinate processes with continuous limits is tight, we have $n^{-1/2}\sum_{i\le\lfloor nt\rfloor}\xi^0_i\Rightarrow\Omega^{*1/2}\mathbf B_g$ in $D[0,1]^{\dg}$, with $\mathbf B_g$ a standard $\dg$-dimensional Brownian motion. Moreover
\[
P^*\Omega^*P^{*\top}=\Sigma^*G^{*\top}\Omega^{*-1}\Omega^*\Omega^{*-1}G^*\Sigma^*=\Sigma^* .
\]
Since the limit has continuous paths, \eqref{eq:bahadur} and Slutsky's lemma give $\mathbb B_n\Rightarrow-P^*\Omega^{*1/2}\mathbf B_g$, a Gaussian process with the law of $\Sigma^{*1/2}\mathbf B$. Finally, $\sqrt n(\bar\beta_n-\beta^*)=\mathbb B_n(1)$.
\end{proof}

\begin{proof}[Proof of Corollary~\ref{cor:inference}]
(a) By Theorem~\ref{thm:as}, $H_{n,kk}\to\Sigma^*_{kk}>0$ almost surely. By Theorem~\ref{thm:fclt} at $t=1$, $\sqrt n(\bar\beta_{n,k}-\beta^*_k)\Rightarrow N(0,\Sigma^*_{kk})$. Slutsky's lemma gives $\sqrt n(\bar\beta_{n,k}-\beta^*_k)/H_{n,kk}^{1/2}\Rightarrow N(0,1)$, so the coverage probability converges to $\Prob(|N(0,1)|\le z_{1-\alpha/2})=1-\alpha$.

(b) Since $s(\bar\beta_s-\beta^*)=\sqrt n\,\mathbb B_n(s/n)$, we have $s(\bar\beta_s-\bar\beta_n)=\sqrt n\big(\mathbb B_n(s/n)-\frac sn\mathbb B_n(1)\big)$ and
\[
V_n=\frac1n\sum_{s=1}^n\Big(\mathbb B_n(\tfrac sn)-\tfrac sn\mathbb B_n(1)\Big)\Big(\mathbb B_n(\tfrac sn)-\tfrac sn\mathbb B_n(1)\Big)\tr .
\]
Consider the map $\Psi$ from $D[0,1]^{\db}$ to $\R^{\db}\times\R^{\db\times\db}$ that sends $f$ to $\big(f(1),\int_0^1(f(r)-rf(1))(f(r)-rf(1))\tr\dd r\big)$. Convergence in the Skorokhod topology to a continuous function is uniform convergence, and $\Psi$ is continuous for the uniform norm. Hence $\Psi$ is continuous at every continuous $f$.

The process $\mathbb B_n$ is constant on each interval $[(s-1)/n,s/n)$, and $\|\mathbb B_n(s/n)-\mathbb B_n((s-1)/n)\|=n^{-1/2}\|\delta_s\|\le n^{-1/2}D_\Theta$. Also $|r-s/n|\le1/n$ on that interval. Hence each summand of $V_n$ differs from the integral of the corresponding integrand over $[(s-1)/n,s/n)$ by at most $Cn^{-1}\cdot n^{-1/2}\big(1+\sup_t\|\mathbb B_n(t)\|\big)^2$. The Riemann sum $V_n$ therefore differs from the second component of $\Psi(\mathbb B_n)$ by at most $Cn^{-1/2}\big(1+\sup_t\|\mathbb B_n(t)\|\big)^2$, which tends to zero in probability because $\sup_t\|\mathbb B_n(t)\|$ is tight. By Theorem~\ref{thm:fclt} and the continuous mapping theorem,
\[
\big(\sqrt n(\bar\beta_n-\beta^*),V_n\big)\Rightarrow\Big(\Sigma^{*1/2}\mathbf B(1),\ \Sigma^{*1/2}\int_0^1(\mathbf B(r)-r\mathbf B(1))(\mathbf B(r)-r\mathbf B(1))\tr\dd r\,\Sigma^{*1/2}\Big).
\]
The $k$th coordinate of $\Sigma^{*1/2}\mathbf B$ is $(\Sigma^*_{kk})^{1/2}$ times a standard Brownian motion $\mathbf b$, and the $(k,k)$ entry of the second component is $\Sigma^*_{kk}\int_0^1(\mathbf b(r)-r\mathbf b(1))^2\dd r$. The factor $\Sigma^*_{kk}$ cancels in the ratio. The denominator of the limit is almost surely positive, because a Brownian bridge is not identically zero. The continuous mapping theorem for the ratio gives the stated limit.
\end{proof}

\begin{proof}[Proof of Theorem~\ref{thm:J}]
Let $\varepsilon$ be as in \eqref{eq:eps} and $v_i=i^{-a/2+\varepsilon}$.

\emph{Step 1: linearization error.} For $b,b'\in\Theta$, Taylor's formula with integral remainder and the Lipschitz bound on $G$ give
\[
\big\|g(x,b)-g(x,b')-G(x,b')(b-b')\big\|=\Big\|\int_0^1\big(G(x,b'+t(b-b'))-G(x,b')\big)\dd t\,(b-b')\Big\|\le\tfrac12L(x)\|b-b'\|^2 .
\]
With $b=\bar\beta_n$, $b'=\bar\beta_{i-1}$ and $\|b-b'\|^2\le2\|\bar\beta_n-\beta^*\|^2+2\|\bar\beta_{i-1}-\beta^*\|^2$, summing over $i\le n$ gives
\[
\sqrt n\|\hat g_n-\bar g_n(\bar\beta_n)\|\le\sqrt n\,\|\bar\beta_n-\beta^*\|^2\,\frac1n\sum_{i\le n}L(X_i)+\frac{1}{\sqrt n}\sum_{i\le n}L(X_i)\|\bar\beta_{i-1}-\beta^*\|^2 .
\]
By Theorem~\ref{thm:rate} and the strong law of large numbers for $L(X_i)$, the first term is $O(n^{1/2}v_n^2)=O(n^{1/2-a+2\varepsilon})$. For the second, $\|\bar\beta_{i-1}-\beta^*\|^2\le C(\omega)i^{-a+2\varepsilon}$ for $i\ge1$. Moreover $T_i=\sum_{j\le i}L(X_j)\le C(\omega)i$ by the strong law, and $q=a-2\varepsilon\in(0,1)$ by \eqref{eq:eps}. By (T6), $\sum_{i\le n}L(X_i)i^{-a+2\varepsilon}\le C(\omega)n^{1-a+2\varepsilon}$, so the second term is also $O(n^{1/2-a+2\varepsilon})$. Both tend to zero by \eqref{eq:eps}, because $1/2-a+2\varepsilon<0$ is equivalent to $\varepsilon<(2a-1)/4$. This proves the first claim.

\emph{Step 2: limit.} By the integral form of the mean value theorem,
\[
\bar g_n(\bar\beta_n)=\bar g_n(\beta^*)+\check G_n(\bar\beta_n-\beta^*),\qquad\check G_n=\int_0^1\frac1n\sum_{i\le n}G\big(X_i,\beta^*+t(\bar\beta_n-\beta^*)\big)\dd t,
\]
and $\|\check G_n-G^*\|\le\|\bar\beta_n-\beta^*\|\,n^{-1}\sum_iL(X_i)+\|n^{-1}\sum_iG(X_i,\beta^*)-G^*\|\to0$ almost surely by the strong law. By \eqref{eq:bahadur} at $t=1$, $\sqrt n(\bar\beta_n-\beta^*)=-P^*\sqrt n\,\bar g_n(\beta^*)+o_P(1)$, and $\sqrt n\,\bar g_n(\beta^*)\Rightarrow N(0,\Omega^*)$ by the central limit theorem. Combining these with Step~1,
\[
\sqrt n\,\hat g_n=\big(I-G^*P^*\big)\sqrt n\,\bar g_n(\beta^*)+(\check G_n-G^*)\sqrt n(\bar\beta_n-\beta^*)+o_P(1)\Rightarrow(I-G^*P^*)\Omega^{*1/2}\mathcal Z,
\]
with $\mathcal Z\sim N(0,I_{\dg})$. Let $Q=\Omega^{*-1/2}G^*\Sigma^*G^{*\top}\Omega^{*-1/2}$. It is symmetric and idempotent, so it is the orthogonal projection onto the column space of $\Omega^{*-1/2}G^*$, which has dimension $\db$. Since $P^*=\Sigma^*G^{*\top}\Omega^{*-1}$,
\[
\Omega^{*-1/2}(I-G^*P^*)\Omega^{*1/2}=I-Q .
\]
By Theorem~\ref{thm:as}, $H_n$ exists for $n$ large and
\[
W_n-W_n\Phi_nH_n\Phi_n\tr W_n\to\Omega^{*-1}-\Omega^{*-1}G^*\Sigma^*G^{*\top}\Omega^{*-1}=\Omega^{*-1/2}(I-Q)\Omega^{*-1/2}
\]
almost surely. By the continuous mapping theorem,
\[
\mathcal J_n\Rightarrow\mathcal Z\tr(I-Q)\Omega^{*1/2}\,\Omega^{*-1/2}(I-Q)\Omega^{*-1/2}\,\Omega^{*1/2}(I-Q)\mathcal Z=\|(I-Q)\mathcal Z\|^2\sim\chi^2_{\dg-\db},
\]
since $I-Q$ is an orthogonal projection of rank $\dg-\db$. Replacing the projected weight by $W_n\to\Omega^{*-1}$ gives the limit $\mathcal Z\tr(I-Q)(I-Q)\mathcal Z$, which is the same. Hence the unprojected statistic $n\hat g_n\tr W_n\hat g_n$ has the same limit.
\end{proof}

\subsection{Proof of Theorem~\ref{thm:dependent}}\label{sec:proofdep}
Throughout this subsection, the assumptions of the first part of Theorem~\ref{thm:dependent} hold. Here $L$ is the dependence lag; the Lipschitz function of Assumption~\ref{ass:smooth} always appears with its argument, as $L(x)$. Let $D=L+1$ and $K=2L+1$, and set $\F_i=\F_0$ for $i\le0$. Two consequences of Assumption~\ref{ass:ldep} are used repeatedly.
\begin{itemize}[leftmargin=*,itemsep=2pt]
\item[(D1)] For $i\ge D$, $X_i$ is independent of $\F_{i-D}=\sigma(X_j:-n_0<j\le i-D)$. For $i\ge K$, the vector $(X_{i-L},\dots,X_i)$ is independent of $\F_{i-K}$, because its indices are at least $(i-K)+L+1$. By stationarity, for an $\F_{i-D}$-measurable $\Theta$-valued $b$ and a measurable $\varphi$ with $\E\sup_{\beta\in\Theta}|\varphi(X,\beta)|<\infty$, we have $\E[\varphi(X_i,b)\mid\F_{i-D}]=\bar\varphi(b)$ with $\bar\varphi(\beta)=\E\varphi(X,\beta)$. The analogous statement holds for functions of $(X_{i-L},\dots,X_i)$ given $\F_{i-K}$.
\item[(D2)] For each $r\in\{1,\dots,D\}$, the variables $X_r,X_{r+D},X_{r+2D},\dots$ are i.i.d.\ with the distribution of $X$. Indeed, $X_{r+(k+1)D}$ is independent of $\sigma(X_r,\dots,X_{r+kD})$ for every $k$. Applying the strong law of large numbers to each residue class, $n^{-1}\sum_{i\le n}\varphi(X_i)\to\E\varphi(X)$ almost surely for every integrable $\varphi$. In particular (T6) applies to $y_j=\varphi(X_j)\ge0$, and (T5) applies to $Y_i=B(X_i)$.
\end{itemize}

\begin{lemma}[Delayed martingale differences]\label{lem:delay}
Let $k\ge1$ be fixed, and let $\xi_i\in\R^d$ be $\F_i$-measurable with $\E[\xi_i\mid\F_{i-k}]=0$ for $i>k$ and $\sup_i\E\|\xi_i\|^p\le c_p$ for some $p\ge2$. For each $r\in\{1,\dots,k\}$, the sequence $(\xi_{r+jk})_{j\ge1}$ is a martingale difference sequence for the filtration $(\F_{r+jk})_{j\ge0}$. Consequently:
\begin{enumerate}[label=(\alph*),itemsep=2pt]
\item $\sum_{i\le n}\xi_i=O(n^{1/2+\varepsilon'})$ almost surely for every $\varepsilon'>0$;
\item for deterministic weights $w_i$, $\E\max_{l\le t}\|\sum_{i\le l}w_i\xi_i\|^p\le C_{p,d,k}\,c_p\big(\sum_{i\le t}w_i^2\big)^{p/2}$;
\item for deterministic weights with $\sum_iw_i^2<\infty$, $\sum_iw_i\xi_i$ converges almost surely;
\item if $\E[\|\xi_i\|^2\mid\F_{i-k}]\le\vartheta_{i-k}$ for $\F_{i-k}$-measurable $\vartheta_{i-k}\ge0$ with $n^{-1}\sum_{i\le n}\vartheta_{i-k}\to0$ in probability, then $n^{-1/2}\max_{l\le n}\|\sum_{i\le l}\xi_i\|\to0$ in probability.
\end{enumerate}
\end{lemma}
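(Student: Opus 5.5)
The plan is to first establish the structural claim, and then derive (a)--(d) by splitting $\sum_i\xi_i$ into the $k$ residue classes $r\in\{1,\dots,k\}$ and applying the independent-data tools (T2)--(T4) to each class. The triangle inequality, or the inequality $\|\sum_{r}x_r\|^p\le k^{p-1}\sum_r\|x_r\|^p$, then recombines the classes at the cost of constants depending on $k$.

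\emph{Martingale structure.} Fix $r$ and write $\mathcal H_j=\F_{r+jk}$ and $\eta_j=\xi_{r+jk}$. Then $\eta_j$ is $\mathcal H_j$-measurable, and for $j\ge1$ we have $r+jk>k$ and $\mathcal H_{j-1}=\F_{(r+jk)-k}$. The hypothesis therefore gives $\E[\eta_j\mid\mathcal H_{j-1}]=0$ directly. Integrability follows from $p\ge2$.

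\emph{(a)--(c).} For (a), apply (T2) to each residue sequence $(\eta_j)_{j\ge1}$, whose second moments are bounded by $c_p^{2/p}$. This gives $\sum_{j\le J}\eta_j=O(J^{1/2+\varepsilon'})$. Since the class contains $J\le n/k+1$ terms up to $n$, and the finitely many initial terms with $i\le k$ are harmless, summing over $r$ yields (a). For (b), apply (T3) in maximal form (Burkholder plus Doob) to each class with weights $w_{r+jk}$. For any $l\le t$, the partial sum $\sum_{i\le l}w_i\xi_i$ is the sum over $r$ of class partial sums, each up to an index at most $t$. Hence
\[
\max_{l\le t}\Big\|\sum_{i\le l}w_i\xi_i\Big\|\le\sum_r\max_{J}\Big\|\sum_{j\le J}w_{r+jk}\eta_j\Big\|.
\]
Combining the class bounds with $\sum_{\text{class}}w^2\le\sum_{i\le t}w_i^2$, and adding the at most $k$ initial terms through $\E\|w_i\xi_i\|^p\le c_p|w_i|^p\le c_p(\sum w^2)^{p/2}$, gives the stated constant $C_{p,d,k}c_p$. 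For (c), each class martingale $\sum_jw_{r+jk}\eta_j$ is bounded in $L^2$, so it converges almost surely, and a finite sum of convergent series converges.

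\emph{(d).} This is the step needing care, since the conditional-variance hypothesis is stated with respect to $\F_{i-k}$, which is exactly $\mathcal H_{j-1}$ within each class. For class $r$, the predictable quadratic variation of $M^r_J=\sum_{j\le J}\eta_j$ is bounded in trace by $\sum_{j\le J}\vartheta_{r+jk-k}$. This is at most $\sum_{i\le n}\vartheta_{i-k}$ over all $i\le n$, by nonnegativity. Dividing by $n$ and using the hypothesis, $n^{-1}\operatorname{tr}\langle M^r\rangle\to0$ in probability. Lenglart's inequality (T4) applied to $n^{-1/2}M^r$ then gives $n^{-1/2}\max_{J}\|M^r_J\|\to0$ in probability. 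The initial terms $i\le k$ are handled by $n^{-1/2}\|\xi_i\|\to0$ in probability. Summing over the $k$ classes concludes, as in (b). The main obstacle is just bookkeeping the indexing so that the maxima over $l\le n$ are dominated by the per-class maxima; no new estimate is required.
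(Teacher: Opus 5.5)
Your proof is correct and follows the paper's own argument essentially step by step. Both split the sum into the $k$ residue classes, read off the martingale-difference property from $\F_{(r+jk)-k}=\F_{r+(j-1)k}$, and apply (T2), (T3) with Doob, $L^2$ martingale convergence, and Lenglart's inequality (T4) to each class, treating the first $k$ terms separately. One cosmetic point: once the initial block is included, the recombination constant is $(k+1)^{p-1}$ rather than $k^{p-1}$, which is absorbed in $C_{p,d,k}$.
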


\begin{proof}
For $j\ge1$, $\xi_{r+jk}$ is $\F_{r+jk}$-measurable, and $\E[\xi_{r+jk}\mid\F_{r+(j-1)k}]=0$ because $r+(j-1)k=(r+jk)-k$ and $r+jk>k$. Write $\sum_{i\le l}w_i\xi_i=\sum_{i\le k\wedge l}w_i\xi_i+\sum_{r=1}^kM^{(r)}_l$ with $M^{(r)}_l=\sum_{j\ge1:\,r+jk\le l}w_{r+jk}\xi_{r+jk}$. Each $M^{(r)}$ is a martingale in the index $j$.

(a) Take $w_i=1$. By (T2), each $M^{(r)}$, which has at most $n/k$ terms up to $n$, is $O(n^{1/2+\varepsilon'})$, and the first $k$ terms are a fixed finite sum.

(b) Since $(\sum_{r=0}^ka_r)^p\le(k+1)^{p-1}\sum_ra_r^p$, it suffices to bound each part. For the martingales, (T3) in its maximal form and \eqref{eq:burkholder} give $\E\max_{l\le t}\|M^{(r)}_l\|^p\le C_{p,d}c_p\big(\sum_{i\le t}w_i^2\big)^{p/2}$. For the first $k$ terms, H\"older's inequality gives $\E\max_{l\le k}\|\sum_{i\le l}w_i\xi_i\|^p\le k^{p-1}\sum_{i\le k}|w_i|^p\E\|\xi_i\|^p\le k^{p-1}c_p\big(\sum_iw_i^2\big)^{p/2}$.

(c) Each $M^{(r)}$ is a martingale with $\sup_l\E\|M^{(r)}_l\|^2\le c_p^{2/p}\sum_iw_i^2<\infty$, and therefore converges almost surely by the martingale convergence theorem.

(d) The predictable quadratic variation of $M^{(r)}$ (with $w_i=1$) up to time $n$ is at most $\sum_{i\le n}\vartheta_{i-k}$. By (T4) applied to $n^{-1/2}M^{(r)}$, $n^{-1/2}\max_{l\le n}\|M^{(r)}_l\|\to0$ in probability for each $r$. The first $k$ terms are negligible.
\end{proof}

\begin{lemma}[Jacobian and long-run weights]\label{lem:precondLR}
Almost surely:
\begin{enumerate}[label=(\alph*),itemsep=2pt]
\item $\dist(\Phi_i,\mathcal G)\to0$, and the eigenvalues of $S_i$ lie in $[\lambda_0/2,2\Lambda]$ for all $i$ large, so that $W_i=S_i^{-1}\in\mathcal W$;
\item for every $\varepsilon'>0$,
\[
\Big\|\Phi_i-\frac1i\sum_{j=1}^i\bar G(\bar\beta_{j-1})\Big\|+\Big\|S_i-\frac1i\sum_{j=1}^i\Omega_{\rm LR}(\bar\beta_{j-1})\Big\|=O\big(i^{-1/2+\varepsilon'}\big);
\]
\item if $\bar\beta_i\to\beta^*$, then $\Phi_i\to G^*$, $S_i\to\Omega^*_{\rm LR}$, $W_i\to\Omega_{\rm LR}^{*-1}$, $H_i\to\Sigma^*_{\rm LR}=(G^{*\top}\Omega_{\rm LR}^{*-1}G^*)^{-1}$ and $P(\Phi_i,W_i)\to P^*_{\rm LR}=\Sigma^*_{\rm LR}G^{*\top}\Omega_{\rm LR}^{*-1}$;
\item for all $i$ beyond a finite random index, $\|\Phi_i-\Phi_{i-1}\|+\|W_i-W_{i-1}\|\le Ci^{2/\kappa-1}$.
\end{enumerate}
\end{lemma}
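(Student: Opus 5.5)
The plan follows the proof of Lemma~\ref{lem:precond}, with Lemma~\ref{lem:delay} replacing (T2) wherever the martingale property was used.

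\emph{Part (b).} By Remark~\ref{rem:design}(ii) we still have $\Phi_i=(n_0+i)^{-1}\big(n_0\Phi_0+\sum_{j\le i}G(X_j,\bar\beta_{j-1})\big)$. Write $G(X_j,\bar\beta_{j-1})-\bar G(\bar\beta_{j-1})$ as the sum of $\xi_j=G(X_j,\bar\beta_{j-D})-\bar G(\bar\beta_{j-D})$ and a delay remainder. By (D1), $\E[\xi_j\mid\F_{j-D}]=0$, and $\sup_j\E\|\xi_j\|^2\le4\E B(X)^2$, so Lemma~\ref{lem:delay}(a) gives $\sum_{j\le i}\xi_j=O(i^{1/2+\varepsilon'})$. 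For the remainder, \eqref{eq:barstep} gives $\|\bar\beta_{j-1}-\bar\beta_{j-D}\|\le LD_\Theta/(j-D)$. The Lipschitz bounds $\|G(x,b)-G(x,b')\|\le L(x)\|b-b'\|$ and $\|\bar G(b)-\bar G(b')\|\le\bar L\|b-b'\|$ then bound it by $C\sum_{j\le i}(L(X_j)+\bar L)/j$. By (T6) and (D2) this is $O(\log i)$.

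For $S_i$, the summand $y_j$ in \eqref{eq:lrv} involves $s_j,s_{j-1},\dots,s_{j-L}$, which are evaluated at $\bar\beta_{j-1},\dots,\bar\beta_{j-L-1}$. Replace all of these by $\bar\beta_{j-K}$. The resulting term $y_j^\circ-\Omega_{\rm LR}(\bar\beta_{j-K})$ satisfies $\E[\,\cdot\mid\F_{j-K}]=0$ by (D1), since $(X_{j-L},\dots,X_j)$ is independent of $\F_{j-K}$ and, by stationarity, $\E[g(X_j,\beta)g(X_{j-h},\beta)\tr]=\E[g(X_0,\beta)g(X_{-h},\beta)\tr]$. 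Summing over $h$ in both directions reproduces $\Omega_{\rm LR}$. Its second moment is bounded by $C\E B(X)^4$, using Cauchy--Schwarz and $\kappa\ge4$. Lemma~\ref{lem:delay}(a) with $k=K$ applies. The replacement error is at most $\sum_j C\,(\sum_{h=0}^{L}B(X_{j-h}))^2\cdot K D_\Theta/(j-K)$, since $\|g(x,b)-g(x,b')\|\le B(x)\|b-b'\|$. This is again $O(\log i)$ by (T6) and (D2). Finally, $\Omega_{\rm LR}$ is Lipschitz, so replacing $\bar\beta_{j-K}$ by $\bar\beta_{j-1}$ inside $\Omega_{\rm LR}$ costs $O(\log i)$ in total. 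The pilot terms contribute $O(1)$.

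\emph{Parts (a) and (c).} These follow from (b) exactly as in Lemma~\ref{lem:precond}. The averages $i^{-1}\sum_j\bar G(\bar\beta_{j-1})$ lie in $\mathcal G$. The eigenvalues of $i^{-1}\sum_j\Omega_{\rm LR}(\bar\beta_{j-1})$ lie in $[\lambda_0,\Lambda]$ by concavity of $\lambda_{\min}$ and convexity of $\lambda_{\max}$. Weyl's inequality then shows that $S_i$ is eventually positive definite with spectrum in $[\lambda_0/2,2\Lambda]$, so $W_i=S_i^{-1}$. For (c), use continuity of $\bar G$ and $\Omega_{\rm LR}$ together with Ces\`aro's lemma.

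\emph{Part (d).} We have $\Phi_i-\Phi_{i-1}=(n_0+i)^{-1}(G(X_i,\bar\beta_{i-1})-\Phi_{i-1})$. By (T5), $\|G(X_i,\cdot)\|\le B(X_i)\le i^{1/\kappa}$ eventually, and $\Phi_{i-1}$ is bounded by (a). Hence this increment is at most $Ci^{1/\kappa-1}$. Similarly, $\|y_i\|\le C\sum_{h=0}^{L}B(X_{i-h})^2\le Ci^{2/\kappa}$ eventually, so $\|S_i-S_{i-1}\|\le Ci^{2/\kappa-1}$. Then $\|W_i-W_{i-1}\|\le\|W_i\|\|S_i-S_{i-1}\|\|W_{i-1}\|\le Ci^{2/\kappa-1}$, because both $W$'s lie in $\mathcal W$ eventually.

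\emph{Main obstacle.} The delicate step is the $S_i$ part of (b). There the products $s_js_{j-h}\tr$ mix evaluation points and data indices. The centring must therefore be done conditionally on $\F_{j-K}$ rather than $\F_{j-D}$, and the replacement error must be summable at the $\log i$ level without extra moments. I expect the crude bound $\sum_jB^2/j$ together with (T6) to suffice.
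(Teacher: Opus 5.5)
Your proposal is correct and follows the paper's proof essentially step by step. It uses delay $D$ for the Jacobian and delay $K=2L+1$ for the long-run weights, Lemma~\ref{lem:delay}(a) for the centred terms, and \eqref{eq:barstep} with (T6) and (D2) for the $O(\log i)$ replacement errors; (a) and (c) come from (b) by Weyl's inequality and continuity, and (d) from (T5) with $W_i-W_{i-1}=W_i(S_{i-1}-S_i)W_{i-1}$. The only step you assert without checking is that $\Omega_{\rm LR}$ is Lipschitz, which the paper verifies directly from the envelope $B$ with constant $2(2L+1)\E[B(X)^2]$.
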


\begin{proof}
We first note that $\Omega_{\rm LR}$ is Lipschitz on $\Theta$. Indeed, for each $|h|\le L$,
\[
\big\|\E[g(X_0,\beta)g(X_h,\beta)\tr]-\E[g(X_0,\beta')g(X_h,\beta')\tr]\big\|\le2\E[B(X_0)B(X_h)]\|\beta-\beta'\|\le2\E[B(X)^2]\|\beta-\beta'\|,
\]
so $\Omega_{\rm LR}$ is Lipschitz with constant $2(2L+1)\E[B(X)^2]$. Also $\bar G$ is Lipschitz with constant $\bar L$.

(b), Jacobian. As in Lemma~\ref{lem:precond}, $\Phi_i=(n_0+i)^{-1}\big(n_0\Phi_0+\sum_{j\le i}G(X_j,\bar\beta_{j-1})\big)$. For $j>D$ write
\[
\begin{aligned}G(X_j,\bar\beta_{j-1})-\bar G(\bar\beta_{j-1})&=\underbrace{G(X_j,\bar\beta_{j-D})-\bar G(\bar\beta_{j-D})}_{\Delta^\Phi_j}\\&\quad+\underbrace{G(X_j,\bar\beta_{j-1})-G(X_j,\bar\beta_{j-D})}_{\text{norm}\le L(X_j)\|\bar\beta_{j-1}-\bar\beta_{j-D}\|}\\&\quad-\underbrace{\big(\bar G(\bar\beta_{j-1})-\bar G(\bar\beta_{j-D})\big)}_{\text{norm}\le\bar L\|\bar\beta_{j-1}-\bar\beta_{j-D}\|}.
.\end{aligned}
\]
By (D1), $\E[\Delta^\Phi_j\mid\F_{j-D}]=0$, and $\E\|\Delta_j^\Phi\|^2\le4\E[B(X)^2]$. Lemma~\ref{lem:delay}(a) with delay $k=D$ gives $\sum_{j\le i}\Delta^\Phi_j=O(i^{1/2+\varepsilon'})$. By \eqref{eq:barstep}, $\|\bar\beta_{j-1}-\bar\beta_{j-D}\|\le D_\Theta\sum_{l=j-D+1}^{j-1}l^{-1}\le2D\,D_\Theta/j$ for $j\ge2D$. By (T6) and (D2), $\sum_{j\le i}(L(X_j)+\bar L)/j\le C(\omega)(2+\log i)$. Dividing by $n_0+i$ gives the bound for $\Phi_i$, since $\log i/i=O(i^{-1/2})$ and the terms with $j\le2D$ and the pilot term are $O(1/i)$.

(b), weights. From \eqref{eq:lrv}, $S_i=(n_0+i)^{-1}\big(n_0S_0+\sum_{j\le i}y_j\big)$. For $j>K$ let $b_j=\bar\beta_{j-K}$, which is $\F_{j-K}$-measurable, and let $\tilde y_j$ be $y_j$ with every $s_{j-h}=g(X_{j-h},\bar\beta_{j-h-1})$, $0\le h\le L$, replaced by $\tilde s_{j-h}=g(X_{j-h},b_j)$. By (D1) and stationarity,
\[
\begin{aligned}\E[\tilde y_j\mid\F_{j-K}]&=\Big(\E[g(X_0,\beta)g(X_0,\beta)\tr]\\&\qquad+\sum_{h=1}^L\big(\E[g(X_h,\beta)g(X_0,\beta)\tr]+\E[g(X_0,\beta)g(X_h,\beta)\tr]\big)\Big)\Big|_{\beta=b_j}=\Omega_{\rm LR}(b_j).\end{aligned}
\]
Moreover $\|\tilde y_j\|\le(2L+1)\big(\sum_{h=0}^LB(X_{j-h})\big)^2$, so $\E\|\tilde y_j-\Omega_{\rm LR}(b_j)\|^2\le C\E[B(X)^4]$. Lemma~\ref{lem:delay}(a) with delay $k=K$ gives $\sum_{j\le i}\big(\tilde y_j-\Omega_{\rm LR}(b_j)\big)=O(i^{1/2+\varepsilon'})$.

Next, the indices $j-h-1$ for $0\le h\le L$ lie between $j-K$ and $j-1$. So by \eqref{eq:barstep}, $\|\bar\beta_{j-h-1}-b_j\|\le2K\,D_\Theta/j$ for $j\ge2K$. Hence $\|s_{j-h}-\tilde s_{j-h}\|\le2K\,D_\Theta B(X_{j-h})/j$. Since $y_j$ is a sum of $2L+1$ products $ab\tr$ with $\|ab\tr-\tilde a\tilde b\tr\|\le\|a-\tilde a\|\,\|b\|+\|\tilde a\|\,\|b-\tilde b\|$,
\[
\|y_j-\tilde y_j\|\le\frac Cj\sum_{h=0}^LB(X_{j-h})^2\le C\sum_{h=0}^L\frac{B(X_{j-h})^2}{j-h},
\]
and $\sum_{j\le i}\|y_j-\tilde y_j\|\le C(\omega)(2+\log i)$ by (T6) and (D2) applied to $B(X)^2$. Finally, $\|\Omega_{\rm LR}(b_j)-\Omega_{\rm LR}(\bar\beta_{j-1})\|\le C\|\bar\beta_{j-K}-\bar\beta_{j-1}\|\le C/j$. Collecting terms, together with the finitely many terms with $j\le2K$, which involve the pilot values $s_j$, $j\le0$, and contribute $O(1/i)$, proves the bound for $S_i$.

(a) and (c) follow from (b) exactly as in Lemma~\ref{lem:precond}. The eigenvalues of $\Omega_{\rm LR}(\beta)$ lie in $[\lambda_0,\Lambda]$ by definition, and $\Omega_{\rm LR}$ and $\bar G$ are continuous. Once $S_i$ is positive definite, $W_i=S_i^{-1}$ by the definition of the weights.

(d) By (a), there is a finite random index after which $\Phi_{i-1}\in\mathcal N$, $\|S_{i-1}\|\le2\Lambda$ and $W_{i-1},W_i\in\mathcal W$. By (T5) and (D2), also $B(X_l)\le l^{1/\kappa}$ for $l\ge i-L$. Then
\begin{gather*}
\|\Phi_i-\Phi_{i-1}\|=\frac{\|G(X_i,\bar\beta_{i-1})-\Phi_{i-1}\|}{n_0+i}\le\frac{i^{1/\kappa}+\phi_{\max}}{i},\\ \|S_i-S_{i-1}\|=\frac{\|y_i-S_{i-1}\|}{n_0+i}\le\frac{(2L+1)^3i^{2/\kappa}+2\Lambda}{i},
\end{gather*}
and $\|W_i-W_{i-1}\|=\|W_i(S_{i-1}-S_i)W_{i-1}\|\le4\lambda_0^{-2}\|S_i-S_{i-1}\|$.
\end{proof}

\begin{proof}[Proof of Theorem~\ref{thm:dependent}]
\emph{Step 1: eventual regime and decomposition.} By Lemma~\ref{lem:precondLR}(a), \eqref{eq:Pbound}, $\bar p_i\to\infty$ and (T5), there is a finite random index $\iota_0$ such that for all $i\ge\iota_0$: $\Phi_i\in\mathcal N$, $W_i\in\mathcal W$, the truncation is inactive, $P_i=P(\Phi_i,W_i)$, $\|P_i\|\le p_{\max}$, $B(X_i)\le i^{1/\kappa}$, and the bound of Lemma~\ref{lem:precondLR}(d) holds.

For $i\ge\iota_0+D$, all indices $i-D,\dots,i-1$ belong to this regime. Write $g_i=g(X_i,\beta_{i-1})$ and
\[
e_i=g(X_i,\beta_{i-D})-m(\beta_{i-D}),\qquad r_i=P_{i-D}\big(g(X_i,\beta_{i-1})-g(X_i,\beta_{i-D})\big)+(P_{i-1}-P_{i-D})g_i ,
\]
so that
\begin{equation}\label{eq:delaydec}
P_{i-1}g_i=P_{i-D}m(\beta_{i-D})+P_{i-D}e_i+r_i .
\end{equation}
By (D1), $\E[e_i\mid\F_{i-D}]=0$ and $\E[\|e_i\|^\kappa\mid\F_{i-D}]\le2^\kappa\E[B(X)^\kappa]$. Since $\Pi_\Theta$ is non-expansive and $\beta_{l-1}\in\Theta$, $\|\beta_l-\beta_{l-1}\|\le\gamma_l\|P_{l-1}g(X_l,\beta_{l-1})\|\le p_{\max}\gamma_ll^{1/\kappa}$ in the regime. Since $\gamma_{i-L}/\gamma_i$ is bounded,
\begin{equation}\label{eq:stepdiff}
\|\beta_{i-1}-\beta_{i-D}\|\le\sum_{l=i-D+1}^{i-1}p_{\max}\gamma_ll^{1/\kappa}\le C\gamma_ii^{1/\kappa}.
\end{equation}
By \eqref{eq:Plip} and Lemma~\ref{lem:precondLR}(d), $\|P_{i-1}-P_{i-D}\|\le C_P\sum_{l=i-D+1}^{i-1}\big(\|\Phi_l-\Phi_{l-1}\|+\|W_l-W_{l-1}\|\big)\le Ci^{2/\kappa-1}$. With $\|g(X_i,\beta)-g(X_i,\beta')\|\le B(X_i)\|\beta-\beta'\|$ and $\|g_i\|\le B(X_i)\le i^{1/\kappa}$,
\begin{equation}\label{eq:rbound}
\|r_i\|\le C\big(\gamma_ii^{2/\kappa}+i^{3/\kappa-1}\big).
\end{equation}

\emph{Step 2: almost sure convergence.} Let $V_i=\|\delta_i\|^2$. For $i\ge\iota_0+D$, non-expansiveness of $\Pi_\Theta$ gives $V_i\le V_{i-1}-2\gamma_i\langle\delta_{i-1},P_{i-1}g_i\rangle+\gamma_i^2\|P_{i-1}g_i\|^2$. By \eqref{eq:delaydec},
\[
\langle\delta_{i-1},P_{i-1}g_i\rangle=\langle\delta_{i-D},P_{i-D}m(\beta_{i-D})\rangle+\chi_i+\langle\delta_{i-D},r_i\rangle+\langle\delta_{i-1}-\delta_{i-D},P_{i-1}g_i\rangle,\qquad\chi_i=\langle\delta_{i-D},P_{i-D}e_i\rangle .
\]
The first term is at least $cV_{i-D}$ by Assumption~\ref{ass:mono}. By \eqref{eq:stepdiff} and \eqref{eq:rbound}, the third and fourth terms are bounded in absolute value by $C(\gamma_ii^{2/\kappa}+i^{3/\kappa-1})$, and $\gamma_i^2\|P_{i-1}g_i\|^2\le p_{\max}^2\gamma_i^2i^{2/\kappa}$. Hence, for $i\ge\iota_0+D$,
\begin{equation}\label{eq:pathRS}
V_i\le V_{i-1}-2c\gamma_iV_{i-D}-2\gamma_i\chi_i+b_i,\qquad b_i=C\big(\gamma_i^2i^{2/\kappa}+\gamma_ii^{3/\kappa-1}\big).
\end{equation}
The sequence $b_i$ is summable. First, $\gamma_i^2i^{2/\kappa}\le Ci^{-2a+2/\kappa}$ and $2a-2/\kappa>1$ because $\kappa>2/(2a-1)$. Second, $\gamma_ii^{3/\kappa-1}\le Ci^{-1-(a-3/\kappa)}$, and $a>3/\kappa$. To see the last inequality: if $a\le3/4$, then $2/(2a-1)\ge3/a$ (equivalent to $2a\ge6a-3$), so $\kappa>3/a$; if $a>3/4$, then $3/a<4\le\kappa$.

Let $\tilde\chi_i=\langle\delta_{i-D},\tilde P_{i-D}e_i\rangle$, with $\tilde P_{i-D}=P_{i-D}\mathbf 1\{\|P_{i-D}\|\le p_{\max}\}$. Then $\tilde\chi_i$ is $\F_i$-measurable, $\E[\tilde\chi_i\mid\F_{i-D}]=0$ and $\E\tilde\chi_i^2\le4D_\Theta^2p_{\max}^2\E[B(X)^2]$. Set $\tilde\chi_i=0$ for $i\le D$. By Lemma~\ref{lem:delay}(c) with delay $k=D$ and $w_i=\gamma_i$, $\sum_i\gamma_i\tilde\chi_i$ converges almost surely. Since $\tilde\chi_i=\chi_i$ for $i\ge\iota_0+D$, the series $\sum_{i\ge\iota_0+D}\gamma_i\chi_i$ converges.

Let $\tau_i=\sum_{j>i}\gamma_j\chi_j$ and $\bar b_i=\sum_{j>i}b_j$, both tending to zero, and let $U_i=V_i-2\tau_i+\bar b_i$. Since $\gamma_i\chi_i=\tau_{i-1}-\tau_i$ and $b_i=\bar b_{i-1}-\bar b_i$, \eqref{eq:pathRS} becomes
\[
U_i\le U_{i-1}-2c\gamma_iV_{i-D}\qquad(i\ge\iota_0+D).
\]
Thus $U_i$ is non-increasing and bounded below by $-2\sup_j|\tau_j|>-\infty$. Hence $U_i$ converges, and $\sum_i\gamma_iV_{i-D}<\infty$. Consequently $V_i=U_i+2\tau_i-\bar b_i$ converges. Since $\sum_i\gamma_i=\infty$, $\liminf_iV_{i-D}=0$, so $V_i\to0$. Thus $\beta_i\to\beta^*$ and $\bar\beta_i\to\beta^*$. Lemma~\ref{lem:precondLR}(c) gives the limits of $\Phi_n$, $W_n$, $H_n$ and $P_n$. The projection is eventually inactive by the argument of Step~3 of the proof of Theorem~\ref{thm:as}, which uses only $B(X_i)\le i^{1/\kappa}$ and $a>1/\kappa$.

\emph{Step 3: rates.} Fix $\varepsilon>\max\{(1-a)/\kappa,\,2/\kappa-a/2,\,3/\kappa-1+a/2\}$. Without loss of generality $\varepsilon<a/2$. This is possible because each of the three lower bounds is smaller than $a/2$: $(1-a)/\kappa<a/2$ since $\kappa\ge4>2(1-a)/a$; $2/\kappa-a/2<a/2$ since $\kappa\ge4>2/a$; and $3/\kappa-1+a/2<a/2$ since $\kappa>3$.

Let $v_i=i^{-a/2+\varepsilon}$, and let $\iota_1\ge\iota_0+D$ be a finite random index after which the projection is inactive and $\|A_{i-D}-I\|\le1/2$, where $A_{i-D}=P_{i-D}\tilde G(\beta_{i-D})\to I$. For $i>\iota_1$, \eqref{eq:delaydec} and $P_{i-D}m(\beta_{i-D})=A_{i-D}\delta_{i-D}$ give
\[
\delta_i=(I-\gamma_iA_{i-D})\delta_{i-1}-\gamma_iP_{i-D}e_i+\gamma_if_i,\qquad f_i=A_{i-D}(\delta_{i-1}-\delta_{i-D})-r_i .
\]
By \eqref{eq:stepdiff} and \eqref{eq:rbound}, $\|f_i\|\le C(\gamma_ii^{2/\kappa}+i^{3/\kappa-1})$.

Enlarging $i_1$ if necessary (Lemma~\ref{lem:weights} holds for every larger $i_1$), we assume $i_1\ge D$. Define $Z_i$ as in Step~1 of the proof of Theorem~\ref{thm:rate}, with $\xi_i=\tilde P_{i-D}e_i$ for $i>D$ and $\xi_i=0$ for $i\le D$. These satisfy $\E[\xi_i\mid\F_{i-D}]=0$ and $\E\|\xi_i\|^\kappa\le p_{\max}^\kappa2^\kappa\E[B(X)^\kappa]$. Lemma~\ref{lem:delay}(b) with delay $k=D$ replaces Doob's and Burkholder's inequalities in that step, so the bound $\E\max_{l\le t}\|M_l\|^\kappa\le C(\sum_{j\le t}\gamma_j^2/\Gamma_{i_1,j}^2)^{\kappa/2}$ still holds, and the rest of Step~1 is unchanged. Hence $\|Z_i\|=O(v_i)$. For $i>\iota_1$, $\Delta_i=\delta_i-Z_i$ satisfies
\[
\Delta_i=(I-\gamma_iA_{i-D})\Delta_{i-1}-\gamma_i(A_{i-D}-I)Z_{i-1}+\gamma_if_i ,
\]
hence $\|\Delta_i\|\le(1-\gamma_i/2)\|\Delta_{i-1}\|+\gamma_i\eta_iv_{i-1}$ with
\[
\eta_i=\frac{\|A_{i-D}-I\|\,\|Z_{i-1}\|+\|f_i\|}{v_{i-1}}\le o(1)+C\big(i^{-a/2+2/\kappa-\varepsilon}+i^{3/\kappa-1+a/2-\varepsilon}\big)\to0 ,
\]
by the choice of $\varepsilon$. Lemma~\ref{lem:det} gives $\|\Delta_i\|=o(v_i)$, hence $\|\delta_i\|=O(v_i)$, and the rates for $\bar\beta_n$, $\Phi_n$ and $W_n$ follow as in Steps~2 and~3 of the proof of Theorem~\ref{thm:rate}, using Lemma~\ref{lem:precondLR}(b) and the Lipschitz continuity of $\bar G$ and $\Omega_{\rm LR}$.

\emph{Step 4: functional central limit theorem.} Assume $\kappa>\kappa_L(a)$. We claim that $\varepsilon$ can be chosen in Step~3 with $\varepsilon<\min\{(1-a)/2,(2a-1)/4\}$. It suffices that each lower bound of Step~3 is below both quantities:
\begin{itemize}[leftmargin=*,itemsep=1pt]
\item $(1-a)/\kappa<(1-a)/2$ and $(1-a)/\kappa<(2a-1)/4$, by $\kappa>\kappa(a)$;
\item $2/\kappa-a/2<(1-a)/2$, which is equivalent to $\kappa>4$; and $2/\kappa-a/2<(2a-1)/4$, which is equivalent to $\kappa>8/(4a-1)$ and follows from $\kappa>2/(a-1/2)=4/(2a-1)>8/(4a-1)$;
\item $3/\kappa-1+a/2<(1-a)/2$, which is equivalent to $\kappa>6/(3-2a)$ and follows from $\kappa>6$; and $3/\kappa-1+a/2<(2a-1)/4$, which is equivalent to $\kappa>4$.
\end{itemize}
With such an $\varepsilon$, the bounds used for $R^{(1)}$ and $R^{(4)}$ in the proof of Theorem~\ref{thm:fclt} hold. For $i>\iota_1$, rearranging the recursion of Step~3 gives
\[
\delta_{i-D}=\frac{\delta_{i-1}-\delta_i}{\gamma_i}-P_{i-D}e_i-\rho_{i-D}-r_i,\qquad\rho_{i-D}=P_{i-D}m(\beta_{i-D})-\delta_{i-D}.
\]
Write $e_i=\xi^0_i+\zeta_i$ with $\xi^0_i=g(X_i,\beta^*)$ and $\zeta_i=g(X_i,\beta_{i-D})-g(X_i,\beta^*)-m(\beta_{i-D})$. By (D1), $\E[\zeta_i\mid\F_{i-D}]=0$ and $\E[\|\zeta_i\|^2\mid\F_{i-D}]\le\E[B(X)^2]\|\delta_{i-D}\|^2$. Summing over $\iota_1<i\le k$,
\[
\sum_{i=\iota_1+1}^k\delta_{i-D}=-P^*_{\rm LR}\sum_{i=\iota_1+1}^k\xi^0_i+R^{(1)}_k-R^{(2)}_k-R^{(3)}_k-R^{(4)}_k-R^{(5)}_k,
\]
with $R^{(1)}$ as before, $R^{(2)}_k=P^*_{\rm LR}\sum\zeta_i$, $R^{(3)}_k=\sum(P_{i-D}-P^*_{\rm LR})e_i$, $R^{(4)}_k=\sum\rho_{i-D}$ and $R^{(5)}_k=\sum r_i$, all sums over $\iota_1<i\le k$.

The left-hand side equals $\sum_{j=\iota_1+1-D}^{k-D}\delta_j$, whereas $\sum_{i=\iota_1+1}^k\delta_{i-1}=\sum_{j=\iota_1}^{k-1}\delta_j$. The two differ by at most $2(D-1)$ terms, each bounded by $D_\Theta$, which is negligible after division by $\sqrt n$. The terms $R^{(1)}$ and $R^{(4)}$ are bounded as in the proof of Theorem~\ref{thm:fclt}, using $\|\rho_{i-D}\|\le Cv_{i-D}^2\le Cv_i^2$. For $R^{(5)}$, \eqref{eq:rbound} gives
\[
\frac1{\sqrt n}\sum_{i\le n}\|r_i\|\le\frac{C(\omega)}{\sqrt n}\Big(1+n^{1-a+2/\kappa}+n^{3/\kappa}\Big)\to0 ,
\]
because $1/2-a+2/\kappa<0$ (as $\kappa>2/(a-1/2)$) and $3/\kappa<1/2$ (as $\kappa>6$). For $R^{(2)}$, apply Lemma~\ref{lem:delay}(d) with delay $k=D$ to the sequence $P^*_{\rm LR}\zeta_i$, $i>D$, with $\vartheta_{i-D}=C\|\delta_{i-D}\|^2$, whose Ces\`aro averages tend to zero. As in the proof of Theorem~\ref{thm:fclt}, $\max_{k\le n}\|R^{(2)}_k\|$ is at most twice the maximum of the full partial sums. For $R^{(3)}$, apply Lemma~\ref{lem:delay}(d) with delay $k=D$ to $(\tilde P_{i-D}-P^*_{\rm LR})e_i$, with $\vartheta_{i-D}=4\E[B(X)^2]\|\tilde P_{i-D}-P^*_{\rm LR}\|^2\to0$, and note that $\tilde P_{i-D}=P_{i-D}$ for $i>\iota_1$. Hence \eqref{eq:bahadur} holds with $P^*_{\rm LR}$ in place of $P^*$.

It remains to show that $n^{-1/2}\sum_{i\le\lfloor nt\rfloor}\xi^0_i\Rightarrow\Omega_{\rm LR}^{*1/2}\mathbf B_g$ in $D[0,1]^{\dg}$. Since $P^*_{\rm LR}\Omega^*_{\rm LR}P^{*\top}_{\rm LR}=\Sigma^*_{\rm LR}$, the conclusion of Theorem~\ref{thm:fclt} then follows as before. The statement concerns only the distribution of the stationary, $L$-dependent, centred and square-integrable sequence $\xi^0_i$. We may therefore assume, by Kolmogorov's extension theorem, that $(X_i)$ is defined for all $i\in\mathbb Z$, stationary and $L$-dependent. Let $\mathcal H_i=\sigma(X_j:j\le i)$. By $L$-dependence, $\E[\xi^0_{i+h}\mid\mathcal H_i]=0$ for $h>L$. Define
\[
\upsilon_i=\sum_{h=1}^L\E[\xi^0_{i+h}\mid\mathcal H_i],\qquad d_i=\sum_{h=0}^L\big(\E[\xi^0_{i+h}\mid\mathcal H_i]-\E[\xi^0_{i+h}\mid\mathcal H_{i-1}]\big).
\]
Then $(d_i)$ is a stationary, square-integrable martingale difference sequence for $(\mathcal H_i)$, and $(\upsilon_i)$ is stationary and square-integrable. Using $\E[\xi^0_{i+L}\mid\mathcal H_{i-1}]=0$,
\[
d_i=\xi^0_i+\upsilon_i-\sum_{h=1}^{L+1}\E[\xi^0_{i-1+h}\mid\mathcal H_{i-1}]=\xi^0_i+\upsilon_i-\upsilon_{i-1},\quad\text{i.e.}\quad\xi^0_i=d_i+\upsilon_{i-1}-\upsilon_i .
\]
The coboundary is negligible. For $\eta>0$, stationarity gives
\[
\Prob\Big(\max_{k\le n}\|\upsilon_k\|>\eta\sqrt n\Big)\le n\Prob(\|\upsilon_0\|>\eta\sqrt n)\le\eta^{-2}\E\big[\|\upsilon_0\|^2\mathbf 1\{\|\upsilon_0\|>\eta\sqrt n\}\big]\to0 ,
\]
so $n^{-1/2}\max_{k\le n}\|\upsilon_0-\upsilon_k\|\to0$ in probability.

The shift is mixing for an $L$-dependent stationary sequence, because events depending on finitely many coordinates become independent of their shifts once the shift exceeds $L$ plus the length of their index window. It is therefore ergodic. The sequence $\E[d_id_i\tr\mid\mathcal H_{i-1}]$ is stationary, so by the ergodic theorem $n^{-1}\sum_{i\le n}\E[d_id_i\tr\mid\mathcal H_{i-1}]\to\E[d_0d_0\tr]$ almost surely. The conditional Lindeberg condition holds because $\E\big[n^{-1}\sum_{i\le n}\E[\|d_i\|^2\mathbf 1\{\|d_i\|>\eta\sqrt n\}\mid\mathcal H_{i-1}]\big]=\E\big[\|d_0\|^2\mathbf 1\{\|d_0\|>\eta\sqrt n\}\big]\to0$. By the martingale invariance principle \cite{Brown1971}, applied to $\lambda\tr d_i$ for every $\lambda$, together with the Cram\'er--Wold device and coordinatewise tightness, $n^{-1/2}\sum_{i\le\lfloor nt\rfloor}d_i\Rightarrow(\E d_0d_0\tr)^{1/2}\mathbf B_g$.

Finally, $\E[d_0d_0\tr]=\Omega^*_{\rm LR}$. Indeed, $\operatorname{Var}(\sum_{i\le n}\xi^0_i)=\sum_{|h|\le L}(n-|h|)\E[\xi^0_0\xi^{0\top}_h]=n\Omega^*_{\rm LR}+O(1)$. On the other hand, $\sum_{i\le n}\xi^0_i=\sum_{i\le n}d_i+\upsilon_0-\upsilon_n$, whose variance is $n\E[d_0d_0\tr]+O(\sqrt n)$ by the Cauchy--Schwarz inequality.

\emph{Step 5: inference and the overidentification statistic.} Corollary~\ref{cor:inference} follows from Step~4 and $H_n\to\Sigma^*_{\rm LR}$, with the same proof. In the proof of Theorem~\ref{thm:J}, Step~1 uses only the rates of Step~3 with $\varepsilon<(2a-1)/4$ and the strong law for $L(X_i)$, which holds by (D2). In Step~2, $\check G_n\to G^*$ by (D2), and $\sqrt n\,\bar g_n(\beta^*)\Rightarrow N(0,\Omega^*_{\rm LR})$ by Step~4 at $t=1$. The rest of the argument is unchanged with $\Omega^*_{\rm LR}$, $\Sigma^*_{\rm LR}$ and $P^*_{\rm LR}$ in place of $\Omega^*$, $\Sigma^*$ and $P^*$, because $W_n\to\Omega_{\rm LR}^{*-1}$.
\end{proof}

%=====================================================================
\section{Numerical experiments}\label{sec:numerics}
%=====================================================================
All computations were done in Python with NumPy and SciPy, and in C for the compiled implementation. The codes used in this paper can be found here : \text{https://github.com/nducduy/StochasticGMM}. We compare SGMM (Algorithm~\ref{alg:sgmm}) with two other online methods and two offline estimators. The online methods are online 2SLS, i.e.\ Algorithm~\ref{alg:sgmm} with $s_i=z_i$ (Remark~\ref{rem:design}(iv)), and a first-order version that uses the step $\gamma_i\Phi_{i-1}\tr W_{i-1}g(X_i,\beta_{i-1})$ without the factor $H_{i-1}$ but with the same efficient weights and averaging. The offline estimators are 2SLS-weighted GMM and two-step efficient GMM, both computed on the full sample by damped Gauss--Newton iterations to convergence. All online methods start from the same pilot estimate, a 2SLS-weighted GMM estimate on $n_0=1000$ observations, and use the default steps $\gamma_i=2n_0^{-1/4}(n_0+i)^{-3/4}$ of Remark~\ref{rem:design}(v), the projection onto $\Theta=[-10,10]^3$, the truncation $\bar p_i=10^8i^{0.1}$ and the ridge $\eta_0=10^{-6}$; the offline estimators use the same $n_0+n$ observations. Results are recorded at $n=10^4,10^5,10^6$ along a single pass over $10^6$ observations, for $1000$ independent replications. The online methods were run with a compiled implementation of Algorithm~\ref{alg:sgmm}, which we checked against a separate Python implementation (identical results up to rounding errors of order $10^{-15}$). Neither the truncation nor the projection was ever active in Sections~\ref{sec:numlin}--\ref{sec:numJ}.

\subsection{Linear IV regression}\label{sec:numlin}
In the model of Example~\ref{ex:linear} we take $x=(1,x_1,x_2)\tr$, $z=(1,z_1,\dots,z_7)\tr$ with $z_k$ i.i.d.\ $N(0,1)$, $x_1=0.3(z_1+z_2+z_3)+v_1$, $x_2=0.3(z_4+z_5+z_6)+v_2$, and $u=\big(0.5(v_1+v_2)/\sqrt2+\sqrt{0.75}\,e\big)\exp\big((z_1+z_4)/2-1/2\big)$, with $v_1,v_2,e$ i.i.d.\ $N(0,1)$ and $\beta^*=(1,0.5,-0.5)\tr$. Both regressors are endogenous, and the errors are conditionally heteroskedastic, so that efficient GMM improves on 2SLS, mostly for the intercept. Table~\ref{tab:lin} and Figure~\ref{fig:rmse}(a) show the results.

\begin{table}[t]\centering
\caption{Linear IV model (Example~\ref{ex:linear}), $\db=3$, $\dg=8$, $1000$ replications. RMSE of the estimators, empirical coverage of nominal $95\%$ plug-in confidence intervals (SGMM: $H_n/n$; online and offline 2SLS: sandwich), and coverage of random-scaling (RS) intervals for the coefficient $\beta_1$ of the endogenous regressor $x_1$. Online methods use a pilot sample of $n_0=1000$ and the default steps of Remark~\ref{rem:design}(v); offline methods use the same $n_0+n$ observations. The Monte Carlo standard error of a coverage of $0.95$ is $0.007$.}\label{tab:lin}
\footnotesize\setlength{\tabcolsep}{3.5pt}
\begin{tabular}{@{}cl ccc ccc c@{}}\toprule
& & \multicolumn{3}{c}{RMSE $\times10^3$} & \multicolumn{3}{c}{coverage, plug-in} & RS\\ \cmidrule(lr){3-5}\cmidrule(lr){6-8}\cmidrule(l){9-9}
$n$ & method & $\beta_0$ & $\beta_1$ & $\beta_2$ & $\beta_0$ & $\beta_1$ & $\beta_2$ & $\beta_1$\\ \midrule
$10^4$ & SGMM (efficient) & 7.11 & 22.03 & 22.39 & .937 & .927 & .922 & .824 \\
 & online 2SLS & 10.26 & 23.52 & 23.81 & .944 & .943 & .935 & .814 \\
 & first-order & 8.66 & 34.94 & 35.74 & -- & -- & -- & .482 \\
 & offline 2SLS & 9.55 & 21.63 & 22.22 & .948 & .943 & .939 & -- \\
 & offline two-step GMM & 6.10 & 19.62 & 20.17 & .955 & .946 & .943 & -- \\
\midrule
$10^5$ & SGMM (efficient) & 2.18 & 6.98 & 6.82 & .942 & .932 & .940 & .882 \\
 & online 2SLS & 3.21 & 7.37 & 7.40 & .952 & .933 & .934 & .881 \\
 & first-order & 2.68 & 10.77 & 10.97 & -- & -- & -- & .757 \\
 & offline 2SLS & 3.07 & 7.08 & 7.05 & .954 & .940 & .950 & -- \\
 & offline two-step GMM & 2.06 & 6.63 & 6.45 & .953 & .937 & .957 & -- \\
\midrule
$10^6$ & SGMM (efficient) & 0.70 & 2.20 & 2.13 & .941 & .939 & .945 & .905 \\
 & online 2SLS & 1.02 & 2.36 & 2.26 & .950 & .930 & .948 & .908 \\
 & first-order & 0.75 & 2.65 & 2.49 & -- & -- & -- & .847 \\
 & offline 2SLS & 1.00 & 2.30 & 2.21 & .951 & .935 & .951 & -- \\
 & offline two-step GMM & 0.68 & 2.14 & 2.08 & .945 & .947 & .956 & -- \\
\bottomrule\end{tabular}\end{table}

SGMM approaches the accuracy of offline two-step GMM as $n$ grows. The ratios of RMSEs are between $1.11$ and $1.17$ at $n=10^4$, between $1.05$ and $1.06$ at $n=10^5$, and at most $1.03$ at $n=10^6$; online 2SLS behaves in the same way relative to offline 2SLS. The efficiency gain of GMM over 2SLS, a factor of $1.5$ in the RMSE of the intercept, is achieved online. The plug-in intervals based on $H_n/n$ have coverage between $0.922$ and $0.945$, the lower values occurring at $n=10^4$. The random-scaling intervals undercover at these sample sizes ($0.82$ at $n=10^4$, $0.88$ at $10^5$ and $0.91$ at $10^6$). They are justified by the functional central limit theorem, which describes the partial sums of the iterates over a long horizon; with the small default steps, consecutive iterates are strongly correlated, and the effective number of independent blocks in the path is small even at $n=10^6$. We therefore recommend the plug-in intervals, which come for free with SGMM. The first-order version is $1.2$ to $1.6$ times less accurate than SGMM at $n=10^4$ and $10^5$, and $1.07$ to $1.2$ times at $n=10^6$.

\subsection{Exponential IV regression}\label{sec:numexp}
For the multiplicative model of Example~\ref{ex:nonlinear} we take $x=(1,x_1,x_2)\tr$ and $z=(1,z_1,\dots,z_4,x_2)\tr$, with $z_k,x_2$ i.i.d.\ $N(0,1)$, $x_1=0.25(z_1+\dots+z_4)+v$, $v\sim N(0,0.25)$, and
\[
\eta=\tfrac12\exp\big(w/2-1/8\big)+\tfrac12\,\Gamma,\qquad w=1.2\,v+0.8\,e,
\]
where $e\sim N(0,1)$ and, conditionally on $z$, $\Gamma$ has a gamma distribution with mean $1$ and standard deviation $s(z)=0.1+\max(z_1+z_2,0)$. Then $\E[\eta\mid z]=1$, $\operatorname{corr}(v,w)=0.6$ makes $x_1$ endogenous, and $s(z)$ makes the errors heteroskedastic. We set $\beta^*=(0.5,0.5,-0.5)\tr$ and $y=\exp(x\tr\beta^*)\eta$.

\begin{table}[t]\centering
\caption{Exponential IV model (Example~\ref{ex:nonlinear}), $\db=3$, $\dg=6$, $1000$ replications; columns as in Table~\ref{tab:lin}.}\label{tab:exp}
\footnotesize\setlength{\tabcolsep}{3.5pt}
\begin{tabular}{@{}cl ccc ccc c@{}}\toprule
& & \multicolumn{3}{c}{RMSE $\times10^3$} & \multicolumn{3}{c}{coverage, plug-in} & RS\\ \cmidrule(lr){3-5}\cmidrule(lr){6-8}\cmidrule(l){9-9}
$n$ & method & $\beta_0$ & $\beta_1$ & $\beta_2$ & $\beta_0$ & $\beta_1$ & $\beta_2$ & $\beta_1$\\ \midrule
$10^4$ & SGMM (efficient) & 4.25 & 15.25 & 6.33 & .926 & .916 & .924 & .824 \\
 & online 2SLS & 5.47 & 16.07 & 6.16 & .939 & .936 & .948 & .822 \\
 & first-order & 4.54 & 19.65 & 6.91 & -- & -- & -- & .720 \\
 & offline 2SLS & 5.03 & 14.52 & 5.69 & .941 & .945 & .952 & -- \\
 & offline two-step GMM & 3.82 & 13.52 & 5.65 & .938 & .929 & .944 & -- \\
\midrule
$10^5$ & SGMM (efficient) & 1.26 & 4.72 & 2.03 & .945 & .945 & .927 & .875 \\
 & online 2SLS & 1.71 & 5.01 & 2.00 & .936 & .958 & .929 & .892 \\
 & first-order & 1.28 & 5.59 & 2.04 & -- & -- & -- & .838 \\
 & offline 2SLS & 1.61 & 4.79 & 1.92 & .943 & .959 & .941 & -- \\
 & offline two-step GMM & 1.19 & 4.42 & 1.92 & .954 & .948 & .944 & -- \\
\midrule
$10^6$ & SGMM (efficient) & 0.40 & 1.45 & 0.61 & .951 & .947 & .940 & .912 \\
 & online 2SLS & 0.55 & 1.61 & 0.61 & .939 & .954 & .940 & .915 \\
 & first-order & 0.40 & 1.55 & 0.61 & -- & -- & -- & .879 \\
 & offline 2SLS & 0.53 & 1.56 & 0.60 & .943 & .955 & .946 & -- \\
 & offline two-step GMM & 0.40 & 1.41 & 0.60 & .956 & .955 & .945 & -- \\
\bottomrule\end{tabular}\end{table}

\begin{figure}[t]
\centering
\includegraphics[width=\textwidth]{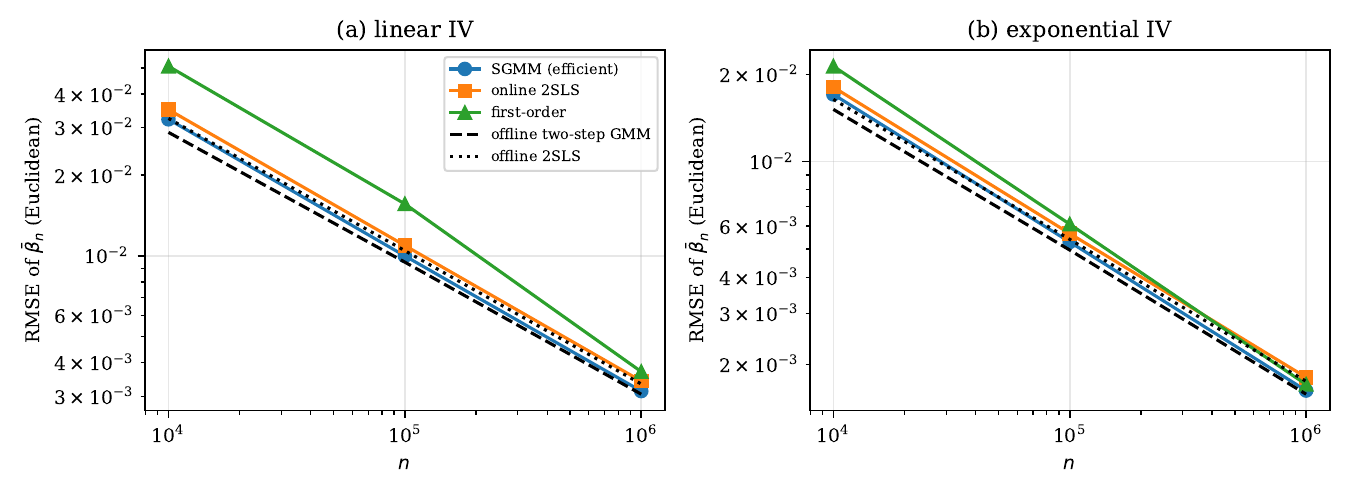}
\caption{Euclidean RMSE of the estimators of Tables~\ref{tab:lin} and~\ref{tab:exp} as a function of the sample size (logarithmic scales): (a) linear model; (b) exponential model.}\label{fig:rmse}
\end{figure}

The conclusions of Table~\ref{tab:exp} and Figure~\ref{fig:rmse}(b) are the same as in the linear model. The RMSEs of SGMM are within $11$--$13\%$ of those of offline two-step GMM at $n=10^4$, within $7\%$ at $n=10^5$ and within $3.1\%$ at $n=10^6$, and the plug-in coverages lie between $0.916$ and $0.951$. The offline estimator required the full sample in memory and a pass over it at every Gauss--Newton iteration, whereas SGMM made one pass with $O(\dg^2+\dg\db)$ memory.

\subsection{Overidentification test}\label{sec:numJ}
Table~\ref{tab:J} reports the rejection frequencies of the online statistic $\mathcal J_n$ of \eqref{eq:Jn} and of the offline Hansen--Sargan statistic. With $1000$ replications, the Monte Carlo standard error of a rejection rate of $5\%$ is $0.7\%$. Under correct specification, the online statistic has rejection rates between $4.0\%$ and $5.6\%$ at the $5\%$ level in both models and at all sample sizes, similar to the offline statistic, in agreement with Theorem~\ref{thm:J}. The unprojected statistic $n\hat g_n\tr W_n\hat g_n$ is oversized in the exponential model at $n=10^4$ ($8.9\%$), which illustrates the benefit of the projection. For the power, we make the instrument $z_4$ invalid by multiplying $\eta$ by $\exp(\vartheta z_4-\vartheta^2/2)$, so that $\E[z_4(\eta-1)]=\vartheta$. With $\vartheta=0.01$, the online and offline tests reject at the $5\%$ level in $22.5\%$ and $23.8\%$ of the replications at $n=10^4$, in $95.5\%$ of the replications at $n=10^5$, and always at $n=10^6$; the online test has essentially the same power as the offline test.

\begin{table}[t]\centering
\caption{Rejection rates of the online statistic $\mathcal J_n$ of \eqref{eq:Jn} and of the offline Hansen--Sargan statistic at the nominal $10\%$, $5\%$ and $1\%$ levels, $1000$ replications. Size: correctly specified linear and exponential models. Power: exponential model in which the instrument $z_4$ is invalid ($\vartheta=0.01$). Last column: unprojected online statistic $n\hat g_n\tr W_n\hat g_n$ at the $5\%$ level.}\label{tab:J}
\footnotesize\setlength{\tabcolsep}{4pt}\begin{tabular}{@{}lc ccc ccc c@{}}\toprule
& & \multicolumn{3}{c}{online $\mathcal J_n$} & \multicolumn{3}{c}{offline} & unproj.\\ \cmidrule(lr){3-5}\cmidrule(lr){6-8}\cmidrule(l){9-9}
& $n$ & $10\%$ & $5\%$ & $1\%$ & $10\%$ & $5\%$ & $1\%$ & $5\%$\\ \midrule
linear ($\chi^2_5$) & $10^4$ & 0.081 & 0.040 & 0.007 & 0.097 & 0.040 & 0.008 & 0.053 \\
 & $10^5$ & 0.103 & 0.043 & 0.004 & 0.096 & 0.044 & 0.003 & 0.053 \\
 & $10^6$ & 0.112 & 0.056 & 0.011 & 0.112 & 0.056 & 0.011 & 0.058 \\
\midrule
exponential ($\chi^2_3$) & $10^4$ & 0.118 & 0.050 & 0.015 & 0.118 & 0.065 & 0.014 & 0.089 \\
 & $10^5$ & 0.097 & 0.055 & 0.005 & 0.097 & 0.050 & 0.008 & 0.063 \\
 & $10^6$ & 0.094 & 0.048 & 0.010 & 0.100 & 0.045 & 0.011 & 0.050 \\
\midrule
power, $\vartheta=0.01$ & $10^4$ & 0.323 & 0.225 & 0.083 & 0.336 & 0.238 & 0.102 & 0.289 \\
 & $10^5$ & 0.982 & 0.955 & 0.870 & 0.982 & 0.955 & 0.873 & 0.964 \\
 & $10^6$ & 1.000 & 1.000 & 1.000 & 1.000 & 1.000 & 1.000 & 1.000 \\
\bottomrule\end{tabular}\end{table}

\subsection{Step size and units of measurement}\label{sec:sens}
Table~\ref{tab:sens} compares SGMM and the first-order version in the exponential model for steps $\gamma_i=c_\gamma n_0^{a-1}(n_0+i)^{-a}$ with $c_\gamma$ between $0.5$ and $8$, in the original units and after expressing $x_2$ (both as a regressor and as an instrument) in units ten times smaller, which multiplies the corresponding coefficient by $1/10$. In the original units, both methods work for $c_\gamma\ge1$, SGMM being somewhat more accurate for $\beta_1$ ($4.3$--$4.6$ against $5.0$--$6.8$) and both being less accurate with $c_\gamma=0.5$, where the steps are too small. After the change of units, SGMM gives exactly the same estimates of $\beta_0$ and $\beta_1$, because the Gauss--Newton step is invariant to the reparametrization (Remark~\ref{rem:design}(v)), whereas the first-order version diverges to the boundary of $\Theta$ for every $c_\gamma$. A first-order method needs a step size adapted to the scale of $\Phi\tr W\Phi$, which is unknown in advance; the Gauss--Newton preconditioner removes this dependence.

\begin{table}[t]\centering
\caption{Exponential IV model, $n=10^5$, $200$ replications: RMSE$\,\times10^3$ of the averaged estimate of $\beta_1$ (entries of diverging runs are reported without the factor $10^3$) for steps $\gamma_i=c_\gamma n_0^{a-1}(n_0+i)^{-a}$ and several $c_\gamma$, in the original units and after expressing $x_2$ (regressor and instrument) in units ten times smaller.}\label{tab:sens}
\small\begin{tabular}{@{}ll ccccc@{}}\toprule
& $c_\gamma$ & $0.5$ & $1$ & $2$ & $4$ & $8$\\ \midrule
original units & SGMM & 6.04 & 4.63 & 4.43 & 4.36 & 4.32 \\
 & first-order & 11.71 & 6.82 & 5.64 & 5.21 & 5.03 \\
$x_2$ rescaled & SGMM & 6.04 & 4.63 & 4.43 & 4.36 & 4.32 \\
 & first-order & 1.6 & 2.9 & 2.1 & 1.8 & 1.8 \\
\bottomrule\end{tabular}\end{table}

\subsection{Computing time}\label{sec:time}
Table~\ref{tab:time} compares a C implementation of SGMM for the exponential IV model with offline two-step GMM in NumPy. SGMM processes an observation in about $0.2$ microseconds, so one pass over $10^7$ observations takes $2.2$ seconds against $6.4$ seconds for offline GMM; offline GMM in addition needs the $763$ MB of data in memory, against less than $1$ KB of state for SGMM. The difference is larger when data arrive over time. If $10^6$ observations arrive in batches of $10^5$ and the offline estimate is recomputed after each batch (warm-started at the previous estimate), the total offline time is $2.54$ seconds, eleven times the $0.23$ seconds of SGMM (a single run; Table~\ref{tab:time} reports the best of three), for which an up-to-date estimate, variance and overidentification statistic are available after every observation. Wall-clock comparisons depend on the implementation; the point of Table~\ref{tab:time} is that the per-observation cost of SGMM is a few hundred floating-point operations and does not grow with $n$.

\begin{table}[t]\centering
\caption{Computing time (seconds, best of three) and memory in the exponential IV model: single-pass SGMM in C (one thread) versus offline two-step GMM in NumPy with all data in memory (multithreaded linear algebra), on the same two-core machine. SGMM stores $O(\dg^2+\dg\db)$ numbers (less than $1$ KB here).}\label{tab:time}
\small\begin{tabular}{@{}c cc c c@{}}\toprule
$n$ & SGMM & offline GMM & SGMM, $\mu$s per observation & data in memory (MB)\\ \midrule
$10^{5}$ & 0.018 & 0.036 & 0.18 & 8 \\
$10^{6}$ & 0.190 & 0.425 & 0.19 & 76 \\
$10^{7}$ & 2.172 & 6.392 & 0.22 & 763 \\
\bottomrule\end{tabular}\end{table}

\subsection{Stochastic volatility}\label{sec:sv}
Finally we apply SGMM to the stochastic volatility model of Example~\ref{ex:sv} with $(\omega,b,\sigma_u)=(-0.736,0.90,0.363)$, the design of \citet{AndersenSorensen1996}, which implies daily volatilities of about $2.5\%$. The data are geometrically mixing but not $L$-dependent, so this example is covered neither by Theorem~\ref{thm:fclt} nor exactly by Theorem~\ref{thm:dependent}. We run both methods in the parametrization $(\mu,b,\log\tau^2)$ of Example~\ref{ex:sv}, which improves the conditioning, use the moments of $100y_i$ (the parameters still refer to $y_i$, so that $\mu=\omega/(1-b)\approx-7.4$), a pilot sample of $n_0=5000$ (with a pilot estimate computed with weights from $\E[gg\tr]$), the default steps and the projection onto $\Theta=[-12,-3]\times[0.3,0.99]\times[-4,1]$, for $50$ replications. Table~\ref{tab:sv} compares (a) Algorithm~\ref{alg:sgmm}, whose weight matrix estimates $\E[gg\tr]^{-1}$, with offline two-step GMM using the same type of weight, and (b) SGMM with the Bartlett long-run weights of Section~\ref{sec:dependent} ($L=50$) with offline GMM using the Bartlett heteroskedasticity and autocorrelation consistent weight with the same bandwidth. With weights from $\E[gg\tr]$, SGMM is $1.3$ to $1.5$ times less accurate than offline GMM at $n=10^5$ and within $13\%$ at $n=10^6$ (RMSE of $\omega$: $0.0120$ against $0.0107$; ratios computed from unrounded values). With Bartlett long-run weights, both methods become more accurate at $n=10^6$, as the efficiency theory predicts, and SGMM is within $10\%$ of offline HAC GMM at $n=10^6$ (RMSE of $\omega$: $0.0099$ against $0.0091$). The projection was active in $43$ and $199$ steps in total in (a) and (b), respectively, over the $50$ replications (out of $5\times10^7$), all within the first $10^5$ observations. No replication failed. With larger early steps (a smaller offset $i_0$) in the original parametrization, the averaged estimates were noticeably biased and one replication was trapped near the boundary of $\Theta$; this was one of the reasons for adopting the default steps of Remark~\ref{rem:design}(v). Introducing a delay in the evaluation point, as in the proof of Theorem~\ref{thm:dependent}, had no visible effect with weights from $\E[gg\tr]$, and made the early iterations unstable with long-run weights. This is why the algorithm itself uses no delay.

\begin{table}[t]\centering
\caption{Stochastic volatility model with the $24$ moments of Example~\ref{ex:sv}, $(\omega,b,\sigma_u)=(-0.736,0.90,0.363)$, 50 replications, pilot $n_0=5000$, default steps; computations in the parametrization $(\mu,b,\log\tau^2)$. Bias and RMSE of SGMM and of offline two-step GMM with the same type of weight.}\label{tab:sv}
\footnotesize\setlength{\tabcolsep}{4pt}\begin{tabular}{@{}cl ccc ccc@{}}\toprule
& & \multicolumn{3}{c}{bias} & \multicolumn{3}{c}{RMSE}\\ \cmidrule(lr){3-5}\cmidrule(l){6-8}
$n$ & method & $\omega$ & $b$ & $\sigma_u$ & $\omega$ & $b$ & $\sigma_u$\\ \midrule
\multicolumn{8}{@{}l}{(a) weights from $\E[gg\tr]$}\\ \midrule
$10^4$ & offline GMM & 0.0067 & 0.0006 & -0.0137 & 0.1067 & 0.0146 & 0.0298 \\
 & SGMM & -0.0024 & -0.0006 & -0.0062 & 0.1502 & 0.0205 & 0.0414 \\
\cmidrule(l){2-8}
$10^5$ & offline GMM & -0.0077 & -0.0012 & -0.0007 & 0.0333 & 0.0046 & 0.0079 \\
 & SGMM & -0.0079 & -0.0012 & 0.0005 & 0.0441 & 0.0061 & 0.0116 \\
\cmidrule(l){2-8}
$10^6$ & offline GMM & -0.0018 & -0.0003 & 0.0003 & 0.0107 & 0.0015 & 0.0026 \\
 & SGMM & -0.0032 & -0.0005 & 0.0006 & 0.0120 & 0.0016 & 0.0030 \\
\midrule
\multicolumn{8}{@{}l}{(b) Bartlett long-run weights, $L=50$}\\ \midrule
$10^4$ & offline HAC GMM & -0.0879 & -0.0113 & -0.0013 & 0.1233 & 0.0163 & 0.0204 \\
 & SGMM, long-run weights & -0.0171 & -0.0023 & -0.0051 & 0.1510 & 0.0205 & 0.0433 \\
\cmidrule(l){2-8}
$10^5$ & offline HAC GMM & -0.0205 & -0.0027 & -0.0004 & 0.0359 & 0.0048 & 0.0075 \\
 & SGMM, long-run weights & -0.0155 & -0.0021 & 0.0010 & 0.0428 & 0.0058 & 0.0110 \\
\cmidrule(l){2-8}
$10^6$ & offline HAC GMM & -0.0013 & -0.0002 & 0.0000 & 0.0091 & 0.0012 & 0.0025 \\
 & SGMM, long-run weights & -0.0020 & -0.0003 & 0.0003 & 0.0099 & 0.0014 & 0.0027 \\
\bottomrule\end{tabular}\end{table}

%=====================================================================
\section{Empirical application: a third child and weeks worked}\label{sec:application}
%=====================================================================

\citet{AngristEvans1998} estimate the effect of having a third child on the labour supply of mothers, using the sex composition of the first two children as an instrument: parents of two boys or two girls are more likely to have a third child. We use their 1980 Census extract as distributed with the R package AER \cite{KleiberZeileis2008}: $n=254{,}654$ women aged $21$--$35$ with at least two children. The outcome is the number of weeks worked in 1979, $y$, which is zero for $47.2\%$ of the women and has mean $19.0$; $38.1\%$ of the women have a third child (\emph{morekids}).

Because $y$ is non-negative with many zeros, we use the multiplicative exponential model of Example~\ref{ex:nonlinear}, $y=\exp(x\tr\beta^*)\eta$ with $\E[\eta\mid z]=1$, where $x$ contains a constant, morekids, an indicator that the first child is a boy, age (centred at $30$ and divided by $5$) and its square, and indicators for Black, Hispanic and other race. The instruments $z$ are the exogenous variables together with separate indicators for two boys and for two girls, as in \citet{AngristEvans1998}. Thus $\db=8$, $\dg=9$, and there is one overidentifying restriction: the two instruments must imply the same effect of a third child. For comparison, linear 2SLS with the same instruments gives an effect of $-5.46$ weeks (standard error $1.23$), close to the estimates of \citet{AngristEvans1998}.

We compute offline two-step GMM on the full sample and SGMM in one pass, with a pilot sample of $n_0=5000$ and the default steps, for $20$ random orderings of the observations (the order matters for an online method; the data have no natural order). Table~\ref{tab:app} reports the results.

\begin{table}[t]\centering
\caption{Weeks worked and a third child: exponential IV model estimated by offline two-step GMM and by single-pass SGMM ($n=254{,}654$; pilot $n_0=5000$). Columns 4--5: SGMM for one random order of the observations, with plug-in standard errors; columns 6--7: mean and standard deviation of the SGMM estimates over $20$ random orders.}\label{tab:app}
\footnotesize\setlength{\tabcolsep}{4pt}\begin{tabular}{@{}l cc cc cc@{}}\toprule
& \multicolumn{2}{c}{offline GMM} & \multicolumn{2}{c}{SGMM, one order} & \multicolumn{2}{c}{SGMM, 20 orders}\\ \cmidrule(lr){2-3}\cmidrule(lr){4-5}\cmidrule(l){6-7}
& estimate & (s.e.) & estimate & (s.e.) & mean & s.d.\\ \midrule
constant & 2.9857 & (0.0241) & 2.9887 & (0.0231) & 2.9875 & 0.0061 \\
morekids & -0.2981 & (0.0703) & -0.2900 & (0.0729) & -0.2925 & 0.0168 \\
boy first & -0.0003 & (0.0048) & 0.0012 & (0.0048) & -0.0006 & 0.0016 \\
age & 0.2319 & (0.0084) & 0.2340 & (0.0084) & 0.2316 & 0.0025 \\
age$^2$ & -0.0181 & (0.0052) & -0.0164 & (0.0052) & -0.0172 & 0.0019 \\
black & 0.5150 & (0.0114) & 0.5143 & (0.0114) & 0.5147 & 0.0039 \\
hispanic & 0.0245 & (0.0143) & 0.0250 & (0.0145) & 0.0251 & 0.0044 \\
other race & 0.1061 & (0.0110) & 0.1050 & (0.0110) & 0.1057 & 0.0041 \\
\midrule $J$ statistic ($\chi^2_1$) & 4.91 & & 4.83 & & 4.68 & 0.66 \\
time (s) & 0.81 & & 0.20 & & & \\
\bottomrule\end{tabular}\end{table}

The offline estimate of the coefficient of morekids is $-0.298$ (standard error $0.070$), i.e.\ a third child reduces expected weeks worked by $26\%$ ($95\%$ interval $15\%$ to $35\%$), or about $4.9$ weeks at the mean, in line with the linear IV estimate. SGMM reproduces all coefficients and standard errors: averaged over the $20$ orders, the SGMM estimates differ from the offline ones by at most $0.16$ standard errors (by at most about one standard error in any single order), and the standard deviation across orders is between $0.24$ and $0.38$ standard errors, so the ordering of the data contributes a small fraction of the sampling uncertainty. The online overidentification statistic varies between $3.4$ and $6.4$ across orders (mean $4.7$), against $4.9$ for the offline statistic ($p=0.027$). The offline test, and the online test in $17$ of the $20$ orders, reject the hypothesis that the two instruments imply the same effect at the $5\%$ level (critical value $3.84$), which is consistent with the concern that boys and girls may affect household costs and labour supply differently \cite{RosenzweigWolpin2000}. One pass of SGMM took $0.16$ seconds on average over the orders ($0.20$ seconds in Table~\ref{tab:app}), against $0.81$ seconds for offline two-step GMM with the data in memory.

Two observations from this application led to choices made in the paper. First, the coefficient of morekids is weakly identified relative to the others: its standard error at $n_0=5000$ is about $0.5$, so single observations produce large Gauss--Newton steps. With larger early steps ($\gamma_0=1$, $i_0=100$), the final SGMM estimate was more than three standard errors away from the offline estimate in some coordinate in all $20$ orderings in exploratory runs, whereas the default steps of Remark~\ref{rem:design}(v) were stable in all of them. Second, in exploratory runs the unprojected statistic $n\hat g_n\tr W_n\hat g_n$ took values between $10$ and $36$. The averaged estimate is close to the offline estimate in every coordinate (within $0.32$ standard errors in the order of Table~\ref{tab:app}), but it is not exactly the minimiser of the GMM criterion, and the criterion is steep in a well-identified linear combination of the parameters. The projected statistic \eqref{eq:Jn} removes this effect.

%=====================================================================
\section{Conclusion}\label{sec:conclusion}
%=====================================================================

We have established the asymptotic theory of the nonlinear stochastic GMM algorithm: almost sure convergence of the iterates and of their average with a continuously updated weight matrix, almost sure rates, a functional central limit theorem with the efficient GMM variance under four moments, a consistent online variance estimator obtained as a by-product of the Gauss--Newton preconditioner, a projected online overidentification test, and an extension to $L$-dependent data with long-run variance weights. For IV-type moments, linear or not, each iteration costs $O(\dg^2+\dg\db+\db^2)$ operations thanks to a rank-two Woodbury update. In simulations with $1000$ replications, the averaged estimator was within a few percent of offline two-step GMM at $n=10^6$, the plug-in confidence intervals and the overidentification test behaved as predicted, the method was invariant to the units of measurement, and a compiled implementation was two to three times faster than offline GMM while using a fixed, small amount of memory. In the application to the labour supply effect of a third child, a single pass reproduced the offline estimates and standard errors, and the overidentification test led to the same conclusion in most orderings.

The results have limitations. Assumption~\ref{ass:mono} holds globally in linear models but only on a neighbourhood of $\beta^*$ in general, so that in nonlinear models the guarantees rely on a reasonable pilot estimate (Corollary~\ref{cor:datadriven}). The theory for dependent data assumes $L$-dependence with known $L$ and a stronger moment condition ($\kappa>\kappa_L(a)$, more than eight moments for $a=3/4$), and covers mixing data such as the stochastic volatility model only heuristically. The random-scaling intervals undercovered at our sample sizes with the default steps. Finally, the performance in finite samples depends on the choice of the pilot size and of the steps when some parameters are weakly identified, as the application shows; the default rule of Remark~\ref{rem:design}(v) worked in all our examples, but a data-driven choice would be useful. Natural next steps are a theory for mixing data with growing bandwidths, and online versions of other GMM-based tests, such as the endogeneity tests of \citet{ChenEtAl2025}, for nonlinear models.

%\appendix

\end{document}